\DeclareFontFamily{U}{mathx}{}
\DeclareFontShape{U}{mathx}{m}{n}{<-> mathx10}{}
\DeclareSymbolFont{mathx}{U}{mathx}{m}{n}
\DeclareMathAccent{\widehat}{0}{mathx}{"70}
\DeclareMathAccent{\widecheck}{0}{mathx}{"71}
\newtheorem{theorem}{Theorem}[section]
\newtheorem{proposition}[theorem]{Proposition}
\theoremstyle{definition}    
\theoremstyle{remark}
\newtheorem{remark}[theorem]{Remark}
\newtheorem{example}[theorem]{Example}
\newcommand{\ignore}[1]{}
\def\la{\ensuremath{\langle}}
\def\ra{\ensuremath{\rangle}}
\def\lp{\ensuremath{\left(}}
\def\rp{\ensuremath{\right)}}
\def\ad{\ensuremath{\textnormal{ad}}}
\def\t{\ensuremath{\mathfrak{t}}}
\def\n{\ensuremath{\mathfrak{n}}}
\def\h{\ensuremath{\mathfrak{h}}}
\def\gl{\ensuremath{\mathfrak{gl}}}
\def\so{\ensuremath{\mathfrak{so}}}
\def\o{\ensuremath{\mathfrak{o}}}
\def\A{\ensuremath{\mathcal{A}}}
\def\E{\ensuremath{\mathcal{E}}}
\def\F{\ensuremath{\mathcal{F}}}
\def\H{\ensuremath{\mathcal{H}}}
\def\L{\ensuremath{\mathcal{L}}}
\def\O{\ensuremath{\mathcal{O}}}
\def\R{\ensuremath{\mathcal{R}}}
\def\S{\ensuremath{\mathcal{S}}}
\def\T{\ensuremath{\mathcal{T}}}
\def\U{\ensuremath{\mathcal{U}}}
\def\V{\ensuremath{\mathcal{V}}}
\def\Cc{\ensuremath{\mathscr{C}}}
\def\Sc{\ensuremath{\mathscr{S}}}
\def\Mc{\ensuremath{\mathscr{M}}}
\def\Deltab{\ensuremath{\boldsymbol{\Delta}}}
\def\Deltab{\ensuremath{\boldsymbol{\Delta}}}
\def\bOmega{\ensuremath{\boldsymbol{\Omega}}}
\def\bC{\ensuremath{\mathbb{C}}}
\def\bR{\ensuremath{\mathbb{R}}}
\def\bZ{\ensuremath{\mathbb{Z}}}
\def\bQ{\ensuremath{\mathbb{Q}}}
\def\bA{\ensuremath{\mathbb{A}}}
\def\bD{\ensuremath{\mathbb{D}}}
\def\bS{\ensuremath{\mathbb{S}}}
\def\bX{\ensuremath{\mathbb{X}}}
\def\bE{\ensuremath{\mathbb{E}}}
\def\End{\ensuremath{\textnormal{End}}}
\def\Ker{\ensuremath{\textnormal{Ker}}}
\def\tr{\ensuremath{\textnormal{tr}}}
\title{Hypoellipticity of the Asymptotic Bismut Superconnection on Contact Manifolds}
\author{Jesus Sanchez Jr$^1$, Andres Franco Valiente$^2$}
\date{%
    $^1$Mathematics Department, Texas A\&M University\\%
    $^2$Leinweber Institute for Theoretical Physics, University of California, Berkeley\\[2ex]%
}
\begin{document}
\maketitle
\begin{abstract}
  Given a contact sub-Riemannian manifold one obtains a non-integrable splitting of the tangent bundle into the directions along the contact distribution and the Reeb field. We generalize the construction of the Bismut superconnection to this non-integrable setting and show that although singularities appear within the superconnection, if one extracts the finite part then the resulting operator is hypoelliptic. We find that the hypoellipticity also holds in the setting of two-step subRiemannian manifolds and produce a modification for arbitrary subRiemannian manifolds which always gives a hypoelliptic operator. A discussion of the explicit form of the operator on principal $\bS^1$-bundles is provided. The index theory is worked out on contact manifolds and a matrix twisting of the Clifford relations produces operators with non-trivial Fredholm index. We conclude with a possible relationship between our hypoelliptic operator and a constrained supersymmetric sigma model.
\end{abstract}
{
  \hypersetup{linkcolor=black}
  \tableofcontents
}
\section{Introduction}
On a Riemannian manifold $\lp M,g\rp$ the Dirac-type operators have proven to be the golden operators of study among the class of differential operators. They have demonstrated a wide range of applicability and have been at the center of connections relating analysis, topology, and geometry on $M$. For example, they play a fundamental role in the various proofs of the Atiyah-Singer index theorems, obstructions to positive scalar curvature, Connes noncommutative geometry program, and appear as essential terms within the nonlinear equations of gauge theory. Recently the subject of index theory and microlocal analysis have seen a huge expansion given by the work of van Erp \cite{Van}, van Erp-Yuncken \cite{Van-Yun}, and Mohsen \cite{Moh} into the realm of hypoelliptic operators on filtered manifolds. The techniques rely heavily on noncommutative harmonic analysis and the tangent groupoid machinery developed by Connes and his contemporaries and is perhaps the most recent success story of noncommutative geometry. In a recent paper of Dave-Haller \cite{DH}, a plethora of natural hypoelliptic operators have been supplied to a given filtered manifold, thus opening up the range of applicability of these new hypoelliptic index theorems as can be seen in as in the work of Goffeng \cite{Go}, Goffeng-Helffer \cite{GH}, and Goffeng-Kuzmin \cite{GK}. For a given filtration $\mathcal{F}$ of the tangent bundle $TM$ one can equip $\lp M,\F\rp$ with a sub-Riemannian metric $g_{\F}$, and a natural question arises: is there a canonical class of first order differential operators associated to $\lp M,\F,g_{\F}\rp$ which plays the role of the Dirac-type operators in sub-Riemannian geometry? This question has already been approached from the spectral triple perspective in the work of Hasselmann \cite{Has} as well as in the CR geometric setting in the work of Stadmüller \cite{S}, the verdict being that the naive choice of horizontal Dirac operator does recover the sub-Riemannian distance and has similar geometric properties like that of ordinary Dirac operators such as a Weitzenböck identity containing curvature terms but also have undesirable analytic properties such as having infinite dimensional eigenspaces on closed manifolds and lacking hypoellipticity.

\subsection{Main Results}
Our aim in this paper is to explore this question with the construction of a differential operator $\bD$ associated to a sub-Riemannian manifold $\lp M,\F,g_{\F}\rp$ equipped with a transverse distribution $\mathcal{T}$ which splits the tangent bundle $TM=\F\oplus\T$. The idea has inspiration from both mathematics and physics, the mathematical side coming from Bismut's construction of Levi-Civita superconnection in the proof of the local families index theorem \cite{Bis3} and the physics side coming from Witten's supersymmetric quantum mechanical proof of the Atiyah-Singer index theorem \cite{Wit}. We begin with an auxiliary transverse metric $g_{\T}$ and obtain an ordinary Riemannian metric $g_M=g_{\F}\oplus g_{\T}$ on $M$. Following Bismut, we apply an adiabatic scaling to $g_{\T}$ and consider the one-parameter family of metrics $g_{M,u}=g_{\F}\oplus u^{-1}g_{\T}$. Using a projection, one then obtains a pair of one-parameter families of metric-compatible connections $\nabla^{\F,u}$ and $\nabla^{\T,u}$ and equips the covariant Euclidean vector bundle $\lp\F,g_{\F},\nabla^{\F,u}\rp$ with a $\bZ_2$-graded Clifford module 
\[
    C\ell\lp\F^*,g^*_{\F}\rp\circlearrowright_c\lp E,h_E,\nabla^{E,u}\rp
\]
which is tensored with the Clifford module
\[
    C\ell\lp\T^*,ug^*_{\T}\rp\circlearrowright_{c^{\T}_u}\lp \wedge\T^*,\wedge ug^*_{\T},\nabla^{\T,u}\rp
\]
to obtain the Clifford module 
\[
    C\ell\lp T^*M,g^*_{M,u}\rp\circlearrowright_{m_u}\lp E\hat{\otimes}\wedge\T^*,h_E\otimes\wedge ug^*_{\T},\nabla^{E,u}\otimes I+I\otimes \nabla^{\T,u}\rp.
\]
Bismut then introduces a modified connection $\nabla^{\bE,u}$ for $E\otimes\wedge\T^*$ which is compatible with the Levi-Civita connection of $g_{M,u}$ and proceeds to investigate the behavior of the one-parameter family of Dirac operators
\[
    D_u:=m_u\circ \nabla^{E\otimes\wedge\T^*,u}
\]
in the limit as $u\rightarrow 0^+$. When the distribution $\F$ is the vertical distribution associated to a locally trivial fibration $\pi:M\rightarrow B$, then one has that the limit as $u\rightarrow 0^+$ exists
\[
    \bD^E :=\lim_{u\rightarrow 0^+}D_u
\]
and is the Bismut superconnection $\bA$, up to a zeroth order term. However, when $\F$ is a \emph{non-integrable} distribution, singularities will appear in the limit as $u\rightarrow 0^+$ thus initially warranting one to dismiss the non-integrable setting. In this paper, we show that if one extracts the finite part of the limit of $D_u$ as $u\rightarrow 0^+$, then there are situations in which the operator
\[
    \bD^E:=\lim^{F.P.}_{u\rightarrow 0^+}D_u
\]
is a \textbf{\emph{hypoelliptic operator}}. The simplest example of a non-trivial non-integrable filtered manifold is a contact manifold $\lp M,\theta\rp$ in which case the filtration is given by the kernel of the contact 1-form $\F=\operatorname{Ker}\theta$ and in this case we show that for \textbf{\emph{any choice of sub-Riemannian metric}}, there are natural Clifford modules which admit hypoelliptic \textbf{\emph{asymptotic Bismut superconnections}}. 
\begin{theorem}
    Suppose that $\lp M,\theta,g_{\F}\rp$ is a contact sub-Riemannian manifold. Then for an appropriate choice of $\bZ_2$-graded covariant Hermitian Clifford module
    \[
        \lp C\ell\lp\F^*,g^*_{\F}\rp,\nabla^{\F,u}\rp\circlearrowright_c\lp E,h_E,\nabla^{E,u}\rp
    \]
    the operator
    \begin{align*}
    \bD^E&:=\lim_{u\rightarrow 0^+}^{F.P.}D_u
\end{align*}
is hypoelliptic. If $M$ is closed, then $\Ker\bD^{E}$ is a finite dimensional space of smooth sections.
\end{theorem}
We give two proofs of this result. The first demonstrates that the square $\Deltab^E=\lp\bD^E\rp^2$ is a Rockland operator within the van-Erp Yuncken Heisenberg calculus of $\lp M,\theta\rp$ \cite{Van-Yun}. The awkward feature of this proof is that $\bD^E$ is \emph{not} Rockland within the van-Erp-Yuncken calculus but its square $\Deltab^E$ is. In the second proof, we utilize a natural grading on the bundle $E\otimes\wedge\T^*$ and use the graded Heisenberg calculus of Dave-Haller \cite{DH} to give a more refined analytic understanding of $\bD^E$.
\begin{theorem}
    Suppose that $\lp M,\theta,g_{\F}\rp$ is a contact sub-Riemannian manifold and we have a $\bZ_2$-graded covariant Hermitian Clifford module
    \[
        \lp C\ell\lp\F^*,g^*_{\F}\rp,\nabla^{\F,0}\rp\circlearrowright_c\lp E,h_E,\nabla^E\rp
    \]
    Then the operator
    \begin{align*}
    \bD^E=D^{\F,E\otimes\wedge[\theta]}+\varepsilon^{\T}_{\theta_{\T}}\nabla^{E\otimes\wedge[\theta]}_T+\frac{1}{4}c\lp\iota_T\Omega_{\F}\rp\iota^{\T}_T+\frac{1}{4}\varepsilon^{\T}\circ\tr Lg_{\F}
\end{align*}
is a graded Rockland operator of order $1$.
\end{theorem}
Our proof extends to the more general setting of \emph{arbitrary} two-step filtered subRiemannian manifolds $\lp M,\F,g_{\F}\rp$ equipped with a Euclidean distribution $\lp\T,g_{\T}\rp$ which is transverse $TM=\F\oplus\T$. In the presence of a Lie group $G$ acting on $M$ and preserving the geometric structures $\lp\F,g_{\F}\rp$ and $\lp\T,g_{\T}\rp$ the asymptotic Bismut superconnection $\bD^E$ will be a $G$-equivariant operator and its kernel will give a finite dimensional representation of $G$.\par
We then inquire on the generality of the construction, exploring the setting of arbitrary $m$-step filtered manifolds. We are able to show that the asymptotic Bismut superconnection has graded Heisenberg order $1$ and that a perturbation always yields a graded Rockland operator of order $1$.
\begin{theorem}
    Suppose that $\lp M,\F,g_{\F}\rp$ is an $m$-step filtered subRiemannian manifold and $\lp\T^j,g_{\T^j}\rp$ is a sequence of Euclidean distributions which are transverse $\F^j=\F^{j-1}\oplus\T^j$. Suppose that $\F^1$ is bracket generating. Then for a $\bZ_2$-graded covariant Hermitian Clifford module
    \[
        \lp C\ell\lp\F^*,g^*_{\F}\rp,\nabla^{\F,0}\rp\circlearrowright\lp E,h_E,\nabla^E\rp
    \]
    the operator
    \[
        \widetilde{\bD}^E:=D^{\F,E\otimes\wedge\T^*}+d^{\T}_{\nabla^E}+\lim^{F.P.}_{u\rightarrow 0^+}m_u\circ m_u\lp\omega_u\rp
    \]
   has graded Heisenberg order $1$ and satisfies the graded Rockland condition.
\end{theorem}
It is natural to inquire what one can expect to find within the kernel of $\bD^E$ and we do so for the case of principal $\bS^1$-bundles $\lp P,\theta\rp\circlearrowleft\bS^1\rightarrow \Sigma$ over a closed manifold with non-vanishing curvature.
\begin{theorem}
    Suppose that $\Sigma$ is a closed manifold, $\lp P,\theta\rp\circlearrowleft\bS^1\xrightarrow{\pi}\Sigma$ is a covariant principal $\bS^1$-bundle with non-vanishing curvature $d\theta=\pi^*\bOmega$, and $g_{\Sigma}$ is a Riemannian metric of $\Sigma$. Then if we have a $\bZ_2$-graded covariant Hermitian Clifford module
    \[
        \lp C\ell\lp T^*\Sigma,g^*_{\Sigma}\rp,\nabla^{g_{\Sigma}}\rp\circlearrowright_c\lp E,h,\nabla^E \rp
    \]
    the kernel of the asymptotic Bismut superconnection $\bD^{\pi^*E}$ for the metric $g_{\F}:=\pi^*g_{\Sigma}$ defined above is given by 
    \begin{itemize}
        \item The even component of $\Ker\bD^{\pi^*E}$ is isomorphic to the direct summand
        \[
            \hspace{-.8cm}\Ker^0D^E\oplus\left\{ \tau\in\Ker^1D^E\Big|\text{ $c\lp\bOmega\rp\tau\perp \Ker^1D^E$}\right\}\bigoplus_{m\neq0}\left\{\tau_m\in\Gamma^1\lp E\otimes L_m\rp\Big|\text{ $c\lp \bOmega\rp\tau_m=\frac{4i}{m}\lp D^{E\otimes L_m}\rp^2\tau_m$}\right\}
        \]
        \item The odd component of $\Ker\bD^{\pi^*E}$ is isomorphic to the direct summand
        \[
            \hspace{-.8cm}\Ker^1D^E\oplus\left\{\xi\in\Ker^0D^E\Big|\text{ $c\lp\bOmega\rp\xi\perp\Ker^0D^E$}\right\}\bigoplus_{m\neq0}\left\{ \xi_m\in\Gamma^0\lp E\otimes L_m\rp\Big|\text{ $c\lp\bOmega\rp\xi_m=\frac{4i}{m}\lp D^{E\otimes L_m}\rp^2\xi_m$}\right\}
        \]
    \end{itemize}
\end{theorem}
If one is given a geometric structure on a closed manifold $M$ and can naturally associate a hypoelliptic differential operator then one can hope that the dimension of the kernel of this operator provides an invariant for this geometric structure. The above calculation restricted to $\bS^2$ with Kähler structure $\lp\bS^2,g,\bOmega\rp$ show that there are simple examples where the dimension of the kernel of our operator is non-trivial as well as the $\bZ_2$-dimension. \par
As alluded to in the beginning of our introduction, we investigate the index theory of the operator $\bD^E$ in the case of contact manifolds and show that index \emph{always vanishes}.
\begin{theorem}
    Let $\lp M,\theta,g_{\F}\rp$ be a contact subRiemannian manifold which is closed. Then for an appropriate choice of $\bZ_2$-graded covariant Hermitian Clifford module
    \[  
    \lp C\ell\lp\F^*,g^*_{\F}\rp,\nabla^{\F,u}\rp\circlearrowright\lp E,h_E,\nabla^{E,u}\rp.
    \]
    the restriction of the asymptotic Bismut superconnection to the even sections 
    \[
        \bD^{E,+}:\Gamma^0\lp E\otimes\wedge[\theta]\rp\rightarrow \Gamma^1\lp E\otimes\wedge[\theta]\rp
    \]
    is Fredholm and its Fredholm index is zero.
\end{theorem}
One should contrast the above theorem with the explicit calculations done on principal $\bS^1$-bundles above, demonstrating that there are simple examples where the $\bZ_2$-dimension of the kernel of $\bD^E$ does \emph{not vanish}.\par
Taking inspiration from van Erp's trick in \cite{Van3} to construct hypoelliptic differential operators on contact manifolds with non-vanishing index, we consider further twisting the Clifford relations by matrix coefficients and generalize the construction of $\bD^E$ in the setting of contact subRiemannian manifolds equipped with a chosen even section $\gamma\in\Gamma\lp\End E\rp$ and denote the modified asymptotic Bismut superconnection by $\bD^E_{\gamma}$.
\begin{theorem}
    Let $\lp M,\theta,g_{\F}\rp$ be a contact subRiemannian manifold and consider a $\bZ_2$-graded covariant Hermitian Clifford module
    \[  
    \lp C\ell\lp\F^*,g^*_{\F}\rp,\nabla^{\F,0}\rp\circlearrowright\lp E,h_E,\nabla^E\rp.
    \]
    For an even element $\gamma\in\Gamma\lp\End E\rp$ such that $[\gamma,c\lp d\theta\rp]=0$ and for each $x\in M$ the spectrum of the operator section $\gamma c\lp d\theta\rp$ satisfies
    \[
        \textup{Spec}\lp\pm i\lp\gamma_x-1\rp c\lp d\theta_x\rp\rp \bigcap\textup{Spec}\lp\pi_{1}\lp\sigma_2\lp\nabla^E\big|_{\F}\rp^*\nabla^E\big|_{\F}\rp_x\rp=\emptyset
    \]
    where $\pi_{ 1}$ is the $\lambda= 1$ Schrodinger representation of $\t_{\F}M_x$. Then the operator
    \[
        \bD^E_{\gamma}\circlearrowright\Gamma\lp E\otimes\wedge[\theta]\rp
    \]
    is graded Rockland operator of order $1$ and is Fredholm if $M$ is closed.
\end{theorem}
We use the above theorem along with an explicit index formula of van Erp in \cite{Van3} to provide examples of even sections $\gamma$ for which $\bD^E_{\gamma}$ has non-vanishing index. These operators provide examples of \emph{first order differential operators with bundle coefficients and non-vanishing index on odd-dimensional closed manifolds}.\par
Our generalization of the Bismut superconnection to the case of filtered manifolds was largely motivated by the relationship between the Dirac operator and supersymmetric non-linear sigma models. In \cite{Wit}, Witten provides a supersymmetric quantum mechanics proof of the Atiyah-Singer index theorem and a construction of Dirac-type operators through the lenses of supersymmetry is also presented. The lesson one can take is that if one has a supersymmetric Lagrangian action functional $S$ for a particle, then the quantized Noether current $\hat{Q}$ known as a supercharge and quantum Hamiltonian $\hat{H}$ give rise to an operator system satisfying the supersymmetry algebra $\{\hat{Q},\hat{Q}^{\dagger}\}=2\hat{H}$ which is one of the basic ingredients of \emph{local index theory}. There is a natural configuration space which one can associate to a filtered manifold $\lp M,\F\rp$ with admits a representation of supersymmetry, but in general we do not have a candidate supersymmetric action functional $S_{\F}$ or a method for quantizing such a theory. In \cite{BisF}, Bismut presents a direct relationship between the Bismut superconnection and the formal localization of path integrals on loop spaces associated to Riemannian fibrations indicating that the Chern-character of the Bismut superconnection $\bA$ represents the operator side of the formal fibered path integrals of a supersymmetric action functional $S$. As explained above, the ingredients of the Bismut superconnection $\bA$ are provided on an arbitrary filtered manifold $\lp M,\F\rp$ once a transverse distribution $\T$ is chosen, and one is led to dodge the question of quantization for the action functional $S_{\F}$ by working directly with the Bismut superconnection construction. We provide a few details on this line of reasoning and the difficulties surrounding the path integral for $S_{\F}$.
\subsection{Structure of the paper}
This article is organized as follows: In section 1, we give the construction of the asymptotic Bismut superconnection for a manifold $M$ equipped with two Euclidean subbundles $\lp\F,g_{\F}\rp$ and $\lp\T,g_{\T}\rp$ which are transverse $TM=\F\oplus\T$. We include the Weitzenbock formula for the $\F$ component of $\bD^E$ and discuss the Clifford contracted curvature in this new setting. We further include the $G$-equivariant setting and end with a global formula for $\bD^E$ in the contact subRiemannian setting. In section 2, we demonstrate the hypoellipticity of our operators using the methods of noncommutative harmonic analysis as developed in \cite{Van-Yun} and \cite{DH}. We give a few indications that the construction is sensitive to the choice of Clifford compatible connections $\nabla^{E,u}$ and inquire on the generality of the construction, proving that hypoellipticity is guaranteed in the two-step setting and provide a modification which always grants hypoellipticity on arbitrary filtered manifolds. We further investigate what lies within the kernel in the case of principal $\bS^1$-bundles over a closed manifold with non-vanishing curvature. In section 3 we give directions for further study mainly pointing out the relation to index theory and demonstrate that one can twist the Clifford relations in the $\T$-directions with matrix coefficients to obtain first order hypoelliptic operators with bundle coefficients and non-vanishing index. We end with a sigma model interpretation of the asymptotic Bismut superconnection.

\subsection*{Acknowledgments}
The first author is thankful to Xiang Tang and Yanli Song for the insightful discussions and support throughout this project as well as Stefan Haller and Magnus Goffeng for their helpful comments and suggestions. The first author is also grateful to Guoliang Yu for the generous hospitality and support during part of the writing of this paper. The first author was supported by NSF grants 1952551, 1952557, and 2247322.
\subsection*{Notation}
All vector spaces $V$ and algebras $A$ will be assumed to be $\bZ_2$-graded and all tensor products will be $\bZ_2$-graded tensor products. If a grading on a vector space $V$ is not given then we assign to $V$ the trivial grading $V^0=V$ and $V^1=\{\boldsymbol{0}\}$. If $A$ is a $\bZ_2$-graded algebra with $\bZ_2$-graded representation $\rho_A:A\rightarrow \End M$ and $B$ is a $\bZ_2$-graded algebra with $\bZ_2$-graded representation $\rho_B:B\rightarrow \End N$ then when tensoring the representations $\rho_A$ and $\rho_B$ we will take the $\bZ_2$-graded version of the tensoring
\[
    \rho_A\otimes\rho_B\lp a\otimes b\rp\lp m\otimes n\rp:=\lp-1\rp^{|m||b|}\rho_A\lp a\rp\lp m\rp\otimes \rho_B\lp n\rp
\]
which induces a representation $\rho_A\otimes\rho_B:A\otimes B\rightarrow\End M\otimes N$ which is also $\bZ_2$-graded.
\section{Construction of the Asymptotic Bismut Superconnection}
When one is working with a locally trivial fibration $\pi:M\rightarrow B$ equipped with an Ehresmann connection $\nabla$ and vertical fiber metric $g_{\F}$
\begin{displaymath}
  \begin{tikzcd}[column sep=2em]
   F \arrow{r}{ } \&  \lp M,g_{\F},\nabla\rp  \arrow{d}{\pi}\\
  \& B
  \end{tikzcd}
\end{displaymath}
one has the vertical distribution $\F:=\Ker\pi_*\leqslant TM$ and horizontal distribution $\T\leqslant TM$ determined by the Ehresmann connection $\nabla$. In \cite{Bis3} a smoothly varying family of Levi-Civita connections $\nabla^{\F}$ is constructed for $\lp\F,g_{\F}\rp$ and for an associated $\bZ_2$-graded covariant Hermitian Clifford module
\[
    \lp C\ell\lp\F^*,g^*_{\F}\rp,\nabla^{\F}\rp\circlearrowright\lp E,h_E,\nabla^E\rp
\]
the \textbf{\emph{Bismut superconnection}} is constructed as an odd first order differential operator $\bA$ acting on $\Gamma\lp E\otimes\wedge\T^*\rp$. Our aim in this section is to generalize the construction of $\bA$ to a smooth manifold $M$ equipped with an arbitrary subbundle $\F\leqslant TM$ and arbitrary complementary distribution $\T$ satisfying $TM=\F\oplus\T$. We let $n_1$ denote the rank of $\F$ and $n_2$ the rank of $\T$.
\subsection{The Asymptotic Bismut Superconnection}
We begin with a smooth manifold $M$ equipped with a smooth distribution $\F\leqslant TM$ and a Euclidean $\F$-fiber metric $g_{\F}$, and do not assume that $\F$ is bracket generating. We further equip $M$ with a Euclidean distribution $\lp\T,g_{\T}\rp$ which is transverse $TM=\F\oplus\T$. We consider the one-parameter family of Riemannian metrics
\[
    g_{M,u}:=g_{\F}\oplus u^{-1}g_{\T},\quad u>0
\]
and obtain two connections 
\[
    \nabla^{\F,u}:=P_{\F}\nabla^{g_{M,u}}P_{\F},\quad \nabla^{\T,u}:=P_{\T}\nabla^{g_{M,u}}P_{\T}.
\]
We define the tensors
\begin{align*}
    \iota_U\Omega_{\F}\lp Y,Z\rp&:=g_{\T}\lp U^{\T},[Z,Y]^{\T}\rp,\quad U\in\Gamma\lp TM\rp,\quad Y,Z\in\Gamma\lp\F\rp\\
    \iota_X\Omega_{\T}\lp U,V\rp&:=g_{\F}\lp X^{\F},[V,U]^{\F}\rp,\quad X\in\Gamma\lp TM\rp,\quad U,V\in\Gamma\lp\T\rp
\end{align*}
and observe that each of them can be considered as one-forms on $M$ with values in skew-adjoint operators by using the metrics $g_{\F}$ on $\F$ and $g_{\T}$ on $\T$
\[
    \so\Omega_{\F}\in\Omega^1\lp M,\so\lp\F,g_{\F}\rp\rp,\quad \so\Omega_{\T}\in\Omega^1\lp M,\so\lp\T,g_{\T}\rp\rp.
\]
We record the variation of the connections $\nabla^{\F,u}$ and $\nabla^{\T,u}$ below.
\begin{proposition}
    There exists $u$-independent connections $\nabla^{\F,0}$ and $\nabla^{\T,0}$ such that
    \[
        \nabla^{\F,u}=\nabla^{\F,0}+\frac{u^{-1}}{2}\so\Omega_{\F},\quad \nabla^{\T,u}=\nabla^{\T,0}+\frac{u}{2}\so\Omega_{\T}.
    \]
    Moreover, the connection $\nabla^{\F,0}$ only depends on $g_{\F}$ and the choice of splitting $TM=\F\oplus\T$. The same holds for $\nabla^{\T,0}$.
\end{proposition}
\begin{proof}
   First, observe that by Koszul we have for $X,Y,Z\in\Gamma\lp\F\rp$ that
\[
    g_{\F}\lp\nabla^{\F,u}_XY,Z\rp=g_{M,u}\lp\nabla^{g_{M,u}}_XY,Z\rp
\]
does \emph{not} vary with $u$.
If $T\in\Gamma\lp \T\rp$ and $Y,Z\in\Gamma\lp \F\rp$ then Koszul gives
\begin{align*}
    2g_{\F}\lp\nabla^{\F,u}_TY,Z\rp&=2g_{M,u}\lp\nabla^{g_{M,u}}_TY,Z\rp\\
    &=Tg_{M,u}\lp Y,Z\rp-g_{M,u}\lp Y,[T,Z]\rp-Zg_{M,u}\lp T,Y\rp+g_{M,u}\lp T,[Z,Y]\rp\\
    &\quad +Yg_{M,u}\lp Z,T\rp-g_{M,u}\lp Z,[Y,T]\rp\\
    &=Tg_{\F}\lp Y,Z\rp-g_{\F}\lp Y,[T,Z]^{\F}\rp+u^{-1}g_{\T}\lp T,[Z,Y]\rp\\
    &\quad -g_{\F}\lp Z,[Y,T]^{\F}\rp.
\end{align*}
Unraveling the definition of $\so\Omega_{\F}$ reveals that the connection
\[
    \nabla^{\F,0}=\nabla^{\F,u}-\frac{u^{-1}}{2}\so\Omega_{\F}
\]
is indeed $u$-independent. Moreover, we see from the Koszul identity that 
\begin{align*}
    2g_{\F}\lp\nabla^{\F,0}_XY,Z\rp&=Xg_{\F}\lp Y,Z\rp-g_{\F}\lp Y,[X,Z]\rp-Zg_{\F}\lp X,Y\rp+g_{\F}\lp X,[Z,Y]\rp\\
    &\quad +Yg_{\F}\lp Z,X\rp-g_{\F}\lp Z,[Y,X]\rp,\quad X,Y,Z\in\Gamma\lp\F\rp\\
    2g_{\F}\lp\nabla^{\F,0}_TY,Z\rp&=Tg_{\F}\lp Y,Z\rp-g_{\F}\lp Y,[T,Z]^{\F}\rp\\
    &\quad -g_{\F}\lp Z,[Y,T]^{\F}\rp,\quad T\in\Gamma\lp\T\rp,\quad X,Y\in\Gamma\lp\F\rp.
\end{align*}
Thus we see that $\nabla^{\F,0}$ only depends on $\lp\F,g_{\F}\rp$ and the choice of splitting $TM=\F\oplus\T$.
Similarly for $T,U,V\in\Gamma\lp\T\rp$, the quantity 
\[
    g_{\T}\lp \nabla^{\T,u}_TU,V\rp
\]
does not vary with $u$. If $X\in\Gamma\lp\F\rp$ and $U,V\in\Gamma\lp\T\rp$, then Koszul gives
\begin{align*}
    2u^{-1}g_{\T}\lp\nabla^{\T,u}_XU,V\rp&=2g_{M,u}\lp\nabla^{g_{M,u}}_XU,V\rp\\
    &=Xg_{M,u}\lp U,V\rp-g_{M,u}\lp U,[X,V]\rp-Vg_{M,u}\lp X,U\rp+g_{M,u}\lp X,[V,U]\rp\\
    &\quad +Ug_{M,u}\lp V,X\rp-g_{M,u}\lp V,[U,X]\rp\\
    &=u^{-1}Xg_{\T}\lp U,V\rp-u^{-1}g_{\T}\lp U,[X,V]^{\T}\rp+g_{\F}\lp X,[V,U]^{\F}\rp\\
    &\quad -u^{-1}g_{\T}\lp V,[U,X]^{\T}\rp.
\end{align*}
The definition of $\so\Omega_{\F}$ demonstrates the $u$-independence of the connection  
\[
    \nabla^{\T,0}:=\nabla^{\T,u}-\frac{u}{2}\so\Omega_{\T}.
\]
The Koszul identity gives us an explicit global formula for $\nabla^{\T,0}$
\begin{align*}
    2g_{\T}\lp\nabla^{\T,0}_TU,V\rp&=Tg_{\T}\lp U,V\rp-g_{\T}\lp U,[T,V]\rp-Vg_{\T}\lp T,U\rp+g_{\T}\lp T,[V,U]\rp\\
    &\quad +Ug_{\T}\lp V,T\rp-g_{\T}\lp V,[U,T]\rp,\quad T,U,V\in\Gamma\lp\T\rp\\
    2g_{\T}\lp\nabla^{\T,0}_XU,V\rp&=Xg_{\T}\lp U,V\rp-g_{\T}\lp U,[X,V]^{\T}\rp\\
    &\quad -g_{\T}\lp V,[U,X]^{\T}\rp,\quad X\in\Gamma\lp\F\rp,\quad U,V\in\Gamma\lp\T\rp.
\end{align*}
\end{proof}
 We now consider the shape one-form
\[
    \nabla^{g_{M,u}}=\nabla^{\F,u}\oplus\nabla^{\T,u}+S_u,\quad S_u\in\Omega^1\lp M,\so\lp TM,g_{M,u}\rp\rp
\]
and the induced one-form with values in two-forms
\[
    \iota_X\omega_u\lp Y,Z \rp:=g_{M,u}\lp \iota_XS_uY,Z\rp,\quad X,Y,Z\in\Gamma\lp TM\rp.
\]
The Koszul identity provides an explicit computation of $\omega_u$.
\begin{proposition}
\label{DSF}
    For $X,Y,Z\in\Gamma\lp\F\rp$ and $T,U,V\in\Gamma\lp\T\rp$ we have the identities
    \begin{align*}
        &\iota_X\omega_u\lp Y,Z\rp=0,\quad\iota_X\omega_u\lp U,V\rp=0,\quad \iota_X\omega_u\lp Y,V\rp=-\frac{1}{2}L_Vg_{\F}\lp X,Y\rp-\frac{u^{-1}}{2}\iota_V\Omega_{\F}\lp X,Y\rp\\
        &\iota_T\omega_u\lp U,V\rp=0\quad \iota_T\omega_u\lp Y,Z\rp=0,\quad\iota_T\omega_u\lp U,Z\rp=-\frac{u^{-1}}{2}L_Zg_{\T}\lp T,U\rp-\frac{1}{2}\iota_Z\Omega_{\T}\lp T,U\rp
    \end{align*}
\end{proposition}
The Euclidean fiber metric $g_{\F}$ on $\F$ allows one to construct the Clifford algebra bundle $C\ell\lp\F^*,g^*_{\F}\rp$. We further consider a $\bZ_2$-graded covariant Hermitian vector bundle $\lp E,h_E,\nabla^E\rp\rightarrow M$ equipped with a covariant Clifford action
\[
    \lp C\ell\lp \F^*,g^*_{\F}\rp,\nabla^{\F,0}\rp\circlearrowright_c\lp E,h_E,\nabla^E\rp
\]
If we define the one-parameter family of connection on $E$ by
\[
    \nabla^{E,u}:=\nabla^E+\frac{u^{-1}}{4}c\lp\Omega_{\F}\rp
\]
then we obtain a one-parameter family of covariant Clifford modules
\[
    \lp C\ell\lp \F^*,g^*_{\F}\rp,\nabla^{\F,u}\rp\circlearrowright_c\lp E,h_E,\nabla^{E,u}\rp.
\]
Note that the connections $\nabla^{E,u}$ remain $h_E$-compatible and preserve the $\bZ_2$-grading of $E$. The $u$-dependent Clifford algebra bundle $C\ell\lp\T^*,ug^*_{\T}\rp$ comes equipped with a natural $u$-dependent $\bZ_2$-graded Clifford bundle
\[
    \lp C\ell\lp\T^*,ug^*_{\T}\rp,\nabla^{\T,u}\rp\circlearrowright_{c^{\T}_u}\lp\wedge\T^*,\wedge ug^*_{\T},\nabla^{\T,u}\rp,
\]
where the $\bZ_2$-grading on $\wedge\T^*$ is given by the degree of forms mod $2$. The $u$-variation of the connections $\nabla^{\T,u}$ on $\wedge\T^*$ can be expressed as
\[
    \nabla^{\T,u}_X\alpha=\nabla^{\T,0}_X\alpha+\frac{u}{4}\text{ad}_{\iota_X\Omega_{\T}}\alpha,\quad X\in\Gamma\lp TM\rp,\quad \alpha\in\Gamma\lp\wedge\T^*\rp
\]
where $\text{ad}_{\iota_X\Omega_{\T}}$ denotes the commutator with $\iota_X\Omega_{\T}\in\Gamma\lp\wedge^2\T^*\rp\simeq\Gamma\lp C\ell^2\lp\T^*,g^*_{\T}\rp\rp$ in the Clifford algebra bundle $C\ell\lp\T^*,g^*_{\T}\rp$. The operator section for $\psi\in\T^*$ with $g_{\T}$ dual $\psi=g_{\T}\lp f,\cdot\rp$ is given at $u>0$ by
\[
    c^{\T}_u\lp \psi\rp:=\varepsilon^{\T}_{\psi}-u\iota^{\T}_{f}\circlearrowright\Gamma\lp\wedge\T^*\rp
\]
where $\varepsilon^{\T}$ is wedge and $\iota^{\T}$ is contraction.
Using the isomorphism of Clifford algebra bundles
\[
    C\ell\lp T^*M,g^*_{M,u}\rp\simeq  C\ell\lp\F^*,g^*_{\F}\rp\otimes C\ell\lp\T^*,ug^*_{\T}\rp
\]
we obtain a covariant $\bZ_2$-graded Clifford bundle
\[
    \lp C\ell\lp T^*M,g^*_{M,u}\rp,\nabla^{\F,u}\oplus \nabla^{\T,u}\rp\circlearrowright_{m_u}\lp E\otimes\wedge\T^*,h_E\otimes\wedge ug^*_{\T},\nabla^{E,u}\otimes I+I\otimes\nabla^{\T,u}\rp.
\]
Here $m_u:C\ell\lp T^*M,g^*_{M,u}\rp\rightarrow \End\lp E\otimes\wedge\T^*\rp$ is our notation for the Clifford action. When the shape $1$-form $S_u$ is non-zero, the connection $\nabla^{\F,u}\oplus\nabla^{\T,u}$ is not the Levi-Civita connection for the metric $g_{M,u}$. Bismut provides a remedy to this situation in $\cite{BisF}$ by perturbing the Clifford module connection by
\[
    \nabla^{\bE,u}:=\nabla^{E,u}\otimes I+I\otimes\nabla^{\T,u}+\frac{1}{2}m_u\omega_u.
\]  
The \textbf{\emph{Asymptotic Bismut Superconnection}} is defined as the finite part of the limit of the one-parameter family of differential operators
\[
    \bD^E:=\lim^{F.P}_{u\rightarrow 0^+}m_u\circ\nabla^{\bE,u}.
\]
\begin{remark}
    In the case of $\lp\F,g_{\F}\rp\rightarrow M$ is oriented and spin, if $\lp\Sc_{\F},h_{\Sc}\rp\rightarrow M$ is a choice of spinor bundle and $\nabla^{\Sc,0}$ is the spin Levi-Civita connection corresponding to $\nabla^{\F,0}$ then
    \[
        \nabla^{\Sc,u}=\nabla^{\Sc,0}+\frac{u^{-1}}{4}c\lp\Omega_{\F}\rp
    \]
    is the spin Levi-Civita connection corresponding to $\nabla^{\F,u}$. This explains our choice of adding $\frac{u^{-1}}{4}c\lp\Omega_{\F}\rp$ to a fixed $\nabla^E$.\par\qed
\end{remark}
Given our global formulas obtained above for $\omega_u$ we can write down a global formula for $\bD^E$. If $\{ e_i\}_{i=1}^{n_1}$ is a local orthonormal framing of $\lp\F,g_{\F}\rp$ with dual framing $\{\varphi^i\}_{i=1}^{n_1}$, and $\{f_{\mu}\}_{\mu=1}^{n_2}$ is a local orthonormal of $\lp\T,g_{\T}\rp$ with dual framing $\{\psi^{\mu}\}_{\mu=1}^{n_2}$ we can define the following operators
\begin{align*}
    D^{\F,E\otimes\wedge\T^*}&:=c\lp\varphi^i\rp\lp\nabla^E_{e_i}\otimes I+I\otimes\nabla^{\T,0}_{e_i}\rp\\
    \varepsilon^{\T}\circ \tr Lg_{\F}&:=\delta^{ij}\varepsilon^{\T}_{\psi^{\mu}}L_{f_{\mu}}g_{\F}\lp e_i,e_j\rp=-c\lp\varphi^i\rp c\lp\varphi^j\rp\varepsilon^{\T}_{\psi^{\mu}}L_{f_{\mu}}g_{\F}\lp e_i,e_j\rp\\
    \iota^{\T}\circ c\lp\Omega_{\F}\rp&:=\frac{1}{2}c\lp\varphi^i\rp c\lp\varphi^j\rp\iota^{\T}_{f_{\mu}}\iota_{f_{\mu}}\Omega_{\F}\lp e_i,e_j\rp\\
    d^{\T}_{\nabla^E}&:=\varepsilon^{\T}_{\psi^{\mu}}\lp\nabla^E_{f_{\mu}}\otimes I+I\otimes\nabla^{\T,0}_{f_{\mu}}\rp\\
    c\circ \varepsilon^{\T}_{\Omega_{\T}}&:=\frac{1}{2}c\lp\varphi^i\rp\varepsilon^{\T}_{\psi^{\mu}}\varepsilon^{\T}_{\psi^{\nu}}\iota_{e_i}\Omega_{\T}\lp f_{\mu},f_{\nu}\rp\\
    c\circ \tr Lg_{\T}&:=\delta^{\mu\nu}c\lp\varphi^i\rp L_{e_i}g_{\T}\lp f_{\mu},f_{\nu}\rp=-\lp \varepsilon^{\T}_{\psi^{\mu}}c\lp\varphi^i\rp\iota^{\T}_{f_{\nu}}+\iota^{\T}_{f_{\mu}}c\lp\varphi^i\rp\varepsilon^{\T}_{\psi^{\nu}}\rp L_{e_i}g_{\T}\lp f_{\mu},f_{\nu}\rp
\end{align*}
which one can check are independent of the local choices of local orthonormal framings. We then have the following global formula for the asymptotic Bismut super connection.
\begin{proposition}
\label{GFD}
    For a given $\bZ_2$-graaded covariant Hermitian Clifford module 
    \[
        \lp C\ell\lp\F^*,g^*_{\F}\rp,\nabla^{\F,0}\rp\circlearrowright_c\lp E,h_E,\nabla^E\rp
    \]
    the asymptotic Bismut superconnection is given globally as
    \[
        \bD^E=D^{\F,E\otimes\wedge\T^*}+d^{\T}_{\nabla^E}+\frac{1}{4}\iota^{\T}\circ c\lp\Omega_{\F}\rp+\frac{1}{4}\varepsilon^{\T}\circ\tr Lg_{\F}-\frac{1}{2}c\circ\varepsilon^{\T}_{\Omega_{\T}}+\frac{1}{4}c\circ\tr Lg_{\T}
    \]
\end{proposition}
\begin{proof}
    We first split the Clifford contraction into the $\F$ and $\T$ components
\[
    m_u\circ\nabla^{\bE,u}=m_u\big|_{\F}\circ\nabla^{\bE,u}+m_u\big|_{\T}\circ\nabla^{\bE,u}
\]
If $\{ e_i\}_{i=1}^{n_1}$ is a local orthonormal framing of $\lp\F,g_{\F}\rp$ with dual framing $\{\varphi^i\}_{i=1}^{n_1}$, and $\{f_{\mu}\}_{\mu=1}^{n_2}$ is a local orthonormal of $\lp\T,g_{\T}\rp$ with dual framing $\{\psi^{\mu}\}_{\mu=1}^{n_2}$ then the $\F$-contraction locally takes the form
\begin{align*}
    &m_u\big|_{\F}\circ\nabla^{\bE,u}=c\lp\varphi^i\rp\nabla^{\bE,u}_{e_i}\\
    &=c\lp\varphi^i\rp\lp\nabla^E_{e_i}\otimes I+I\otimes\nabla^{\T,0}_{e_i}\rp+\frac{u^{-1}}{4}c\lp\varphi^i\rp\iota_{e_i}c\lp\Omega_{\F}\rp+\frac{u}{4}c\lp\varphi^i\rp\text{ad}_{\iota_{e_i}\Omega_{\T}}+\frac{1}{2}c\lp\varphi^i\rp m_u\lp\iota_{e_i}\omega_u\rp\\
    &=c\lp\varphi^i\rp\lp\nabla^E_{e_i}\otimes I+I\otimes\nabla^{\T,0}_{e_i}\rp+\frac{u}{4}c\lp\varphi^i\rp\text{ad}_{\iota_{e_i}\Omega_{\T}}\\
    &\hspace{5cm}+\frac{1}{2}c\lp\varphi^i\rp c\lp\varphi^j\rp\lp\varepsilon^{\T}_{\psi^{\mu}}-u\iota^{\T}_{f_{\mu}}\rp\lp -\frac{1}{2}L_{f_{\mu}}g_{\F}\lp e_i,e_j\rp -\frac{u^{-1}}{2}\iota_{f_{\mu}}\Omega_{\F}\lp e_i,e_j\rp\rp
\end{align*}
where in the second to third equality we used that $\iota_X\Omega_{\F}=0$ if $X\in\Gamma\lp\F\rp$. In the limit as $u\rightarrow 0^+$ the finite part gives
\begin{align*}
    &\lim^{F.P.}_{u\rightarrow 0^+}m_u\big|_{\F}\circ\nabla^{\bE,u}=\\
    &\quad c\lp\varphi^i\rp\lp\nabla^E_{e_i}\otimes I+I\otimes\nabla^{\T,0}_{e_i}\rp-\frac{1}{4}c\lp\varphi^i\rp c\lp\varphi^j\rp\varepsilon^{\T}_{\psi^{\mu}}L_{f_{\mu}}g_{\F}\lp e_i,e_j\rp+\frac{1}{4}c\lp\varphi^i\rp c\lp\varphi^j\rp\iota^{\T}_{f_{\mu}}\iota_{f_{\mu}}\Omega_{\F}\lp e_i,e_j\rp
\end{align*}
giving the global formula
\[
    \lim^{F.P.}_{u\rightarrow 0^+}m_u\big|_{\F}\circ\nabla^{\bE,u}=D^{\F,E\otimes\wedge\T^*}+\frac{1}{4}\varepsilon^{\T}\circ \tr Lg_{\F}+\frac{1}{2}\iota^{\T}\circ c\lp\Omega_{\F}\rp
\]
Similarly, the $\T$-contraction is locally of the form
\begin{align*}
    m_u\big|_{\T}\nabla^{\bE,u}&=\lp\varepsilon^{\T}_{\psi^{\mu}}-u\iota^{\T}_{f_{\mu}}\rp\lp\nabla^E_{f_{\mu}}\otimes I+I\otimes\nabla^{\T,0}_{f_{\mu}}\rp+\frac{u^{-1}}{4}\lp\varepsilon^{\T}_{\psi^{\mu}}-u\iota^{\T}_{f_{\mu}}\rp \iota_{f_{\mu}}c\lp\Omega_{\F}\rp\\&\hspace{3cm}+\frac{u}{4}\lp\varepsilon^{\T}_{\psi^{\mu}}-u\iota^{\T}_{f_{\mu}}\rp\iota_{f_{\mu}}\text{ad}\Omega_{\T}
    +\frac{1}{2}\lp\varepsilon^{\T}_{\psi^{\mu}}-u\iota^{\T}_{f_{\mu}}\rp m_u\lp\iota_{f_{\mu}}\omega_u\rp\\
    &=\lp\varepsilon^{\T}_{\psi^{\mu}}-u\iota^{\T}_{f_{\mu}}\rp\lp\nabla^E_{f_{\mu}}\otimes I+I\otimes\nabla^{\T,0}_{f_{\mu}}\rp+\frac{u^{-1}}{4}\lp\varepsilon^{\T}_{\psi^{\mu}}-u\iota^{\T}_{f_{\mu}}\rp \iota_{f_{\mu}}c\lp\Omega_{\F}\rp\\
    &\hspace{2cm}\frac{1}{2}\lp\varepsilon^{\T}_{\psi^{\mu}}-u\iota^{\T}_{f_{\mu}}\rp c\lp\varphi^i\rp\lp\varepsilon^{\T}_{\psi^{\nu}}-u\iota^{\T}_{f_{\nu}}\rp \lp \frac{u^{-1}}{2} L_{e_i}g_{\T}\lp f_{\mu},f_{\nu}\rp+\frac{1}{2}\iota_{e_i}\Omega_{\T}\lp f_{\mu},f_{\nu}\rp\rp
\end{align*}
which implies that the finite part as $u\rightarrow 0^+$ is given as
\begin{align*}
    \lim^{F.P.}_{u\rightarrow0^+}&m_u\big|_{\T}\circ\nabla^{\bE,u}=\varepsilon^{\T}_{\psi^{\mu}}\lp\nabla^E_{f_{\mu}}\otimes I+I\otimes\nabla^{\T,0}_{f_{\mu}}\rp-\frac{1}{4}\iota^{\T}_{f_{\mu}}c\lp\iota_{f_{\mu}}\Omega_{\F}\rp\\
    &+\frac{1}{2}\varepsilon^{\T}_{\psi^{\mu}}c\lp\varphi^i\rp\lp\varepsilon^{\T}_{\psi^{\nu}}\frac{1}{2}\iota_{e_i}\Omega_{\T}\lp f_{\mu},f_{\nu}\rp-\frac{1}{2}\iota^{\T}_{f_{\nu}}L_{e_i}g_{\T}\lp f_{\mu},f_{\nu}\rp\rp-\frac{1}{2}\iota^{\T}_{f_{\mu}}c\lp\varphi^i\rp\varepsilon^{\T}_{\psi^{\nu}}\frac{1}{2}L_{e_i}g_{\T}\lp f_{\mu},f_{\nu}\rp
\end{align*}
thus giving the global formula
\[
    \lim^{F.P.}_{u\rightarrow 0^+}m_u\big|_{\T}\circ\nabla^{\bE,u}=d^{\T}_{\nabla^E}-\frac{1}{4}\iota^{\T}\circ c\lp\Omega_{\F}\rp-\frac{1}{2}c\circ\varepsilon^{\T}_{\Omega_{\T}}+\frac{1}{4}c\circ\tr Lg_{\T}
\]
\end{proof}
When the distribution $\F$ is the vertical distribution of a locally trivial fibration equipped with Ehresmann connection $\nabla$ and vertical fiber metric $g_{\F}$
\begin{displaymath}
  \begin{tikzcd}[column sep=2em]
   F \arrow{r}{ } \&  \lp M,g_{\F},\nabla\rp  \arrow{d}{\pi}\\
  \& B
  \end{tikzcd}
\end{displaymath}
one can take as transverse distribution $\T=\Ker\nabla\simeq\pi^*TB$ and transverse metric $g_{\T}=\pi^*g_B$ for some auxiliary Riemannian metric $g_B$ on $B$. In this case the distribution $\F$ is integrable, thus $\Omega_{\F}\equiv0$. The distribution $\T$ will not be integrable in general, however one does have that 
\[
    \nabla^{\T,0}=\pi^*\nabla^{g_B}
\]
where $\nabla^{g_B}$ is the Levi-Civita connection of $g_B$. We now compare our definition of asymptotic Bismut superconnection with the original definition of Bismut.
\begin{proposition}
     Let $\pi:\lp M,g_{\F},\nabla\rp\rightarrow B$ be a locally trivial fibration with Ehresmann connection $\nabla$ and vertical fiber metric $g_{\F}$, and let $g_B$ be a choice of Riemannian metric for $B$. Then for $\lp\F,g_{\F}\rp=\lp\Ker\pi_*,g_{\F}\rp$, $\lp\T,g_{\T}\rp=\lp\Ker\nabla,\pi^*g_B\rp$, and a $\bZ_2$-graded covariant Hermitian Clifford module
     \[
        \lp C\ell\lp\F^*,g^*_{\F}\rp,\nabla^{\F,0}\rp\circlearrowright\lp E,h_E,\nabla^E\rp
     \]
     then
     \[
        \bD^E=\bA^E-\frac{1}{4}c\circ\varepsilon^{\T}_{\Omega_{\T}}
     \]
     where $\bA^E$ is the Bismut supperconnection.
\end{proposition}
\begin{proof}
    Given that $\nabla^{\T,0}=\pi^*\nabla^{g_B}$, we have that
    \[
        \nabla^{\T,u}=\pi^*\nabla^{g_B}+\frac{u}{2}\so\Omega_{\T}.
    \]
    If $\omega^{\nabla^{\T,u}}_u$ denotes the dual shape-form for $\nabla^{\F,u}\oplus\nabla^{\T,u}$ and $\omega^{\pi^*\nabla^{g_B}}_u$ denotes the dual shape-form for $\nabla^{\F,u}\oplus\pi^*\nabla^{g_B}$ then we have the relation
    \[
        \omega^{\nabla^{\T,u}}_u=\omega^{\pi^*\nabla^{g_B}}_u-\frac{1}{2}\Omega_{\T}
    \]
    Because $\Omega_{\F}\equiv0$, there is no modification to the connection $\nabla^E$. Thus the asymptotic Bismut connection is
    \[
        \bD^E=\lim^{F.P.}_{u\rightarrow 0^+}m_u\circ\lp\nabla^E\otimes\lp\pi^*\nabla^{g_B}+\frac{u}{4}\ad_{\Omega_{\T}}\rp+\frac{1}{2}m_u\omega^{\nabla^{\T,u}}_u\rp
    \]
    and the same computation as in \ref{GFD} show that $\ad_{\Omega_{\T}}$ does not contribute in the limit as $u\rightarrow 0^+$. Thus we have
    \begin{align*}
        \bD^E&=\lim^{F.P.}_{u\rightarrow0^+}m_u\circ\lp\nabla^E\otimes\pi^*\nabla^{g_B}+\frac{1}{2}m_u\omega^{\pi^*\nabla^{g_B}}_u-\frac{1}{4}m_u\Omega_{\T}\rp\\
        &=\bA^E-\frac{1}{4}c\circ\varepsilon^{\T}_{\Omega_{\T}}
    \end{align*}
\end{proof}
In our analysis above, if we used $\nabla^{\T,0}$ instead of $\nabla^{\T,u}$ then the definition of $\bD^E$ would agree with Bismut's definition on the nose in the situation of covariant Riemnannian fibrations.\par
In general, the simplest examples of covariant Clifford modules one can cook up come from the usual recipes in Riemannian geometry. Namely when $\lp\F,g_{\F}\rp$ is oriented and admits a spin structure one can take $\lp\Sc_{\F},h_{\Sc},\nabla^{\Sc,0}\rp$ where $\nabla^{\Sc,0}$ is the spin Levi-Civita connection corresponding to $\nabla^{\F,0}$. When $\F$ admits an almost complex structure $J_{\F}$ compatible with $g_{\F}$ and $\nabla^{\F,0}$ one can use $\lp\wedge^{0,*}\F^*,\wedge^{0,*}g_{\F},\nabla^{\F,0}\rp$. And if no additional structure on $\lp\F,g_{\F}\rp$ is available one can always use $\lp \wedge\F^*,\wedge g^*_{\F},\nabla^{\F,0}\rp$. Moreover, once one is given a fixed covariant Clifford module, then the one can apply the operation of twisting. Suppose that $\lp B,h_B,\nabla^B\rp\rightarrow M$ is a $\bZ_2$-graded covariant Hermitian vector bundle. Then for a given $\bZ_2$-graded covariant Hermitian Clifford module
\[
    \lp C\ell\lp\F^*,g^*_{\F}\rp,\nabla^{\F,0}\rp\circlearrowright\lp E,h_E,\nabla^E\rp
\]
we can twist the bundle by the covariant data and obtain
\begin{align*}
    \lp C\ell\lp\F^*,g^*_{\F}\rp,\nabla^{\F,0}\rp\circlearrowright\lp E\otimes B,h_{E}\otimes h_B,\nabla^E\otimes I+I\otimes \nabla^B\rp
\end{align*}
and obtain twisted versions of the asymptotic Bismut superconnections, denoted $\bD^{E\otimes B}$. This operation of twisting provides plenty of examples to work with in this new setting.\par
For completeness we include the Weitzenbock formula for the horizontal Dirac operator $D^{\F,E\otimes\wedge\T^*}$. For simplicity of exposition we will omit the $\wedge\T^*$-variables on $E$ and let $\Delta^{\F,E}$ denote the square of $D^{\F,E}$. For $\{e_i\}_{i=1}^{n_1}$ a local orthonormal framing of $\lp\F,g_{\F}\rp$ with dual framing $\{\varphi^i\}_{i=1}^{n_1}$ we define the operators
\begin{align*}
    c\circ\nabla^{E}_{\widetilde{\Omega}_{\F}}&:=\frac{1}{2}c\lp\varphi^i\rp c\lp\varphi^j\rp\nabla^E_{\widetilde{\Omega}_{\F}\lp e_i,e_j\rp},\quad \widetilde{\Omega}_{\F}\lp e_e,e_j\rp:=[e_j,e_i]^{\T}\\
    c\circ K^E&:=\frac{1}{2}c\lp\varphi^i\rp c\lp\varphi^j\rp K^E\lp e_i,e_j\rp,\quad K^E:=\lp\nabla^E\rp^2
\end{align*}
which one verifies are independent of the choice of local orthonormal framing and thus define global differential operators.
\begin{proposition}
\label{WF}
    The Weitzenbock formula for $\Delta^{\F,E}$ is given by
    \[
        \Delta^{\F,E}=\lp\nabla^E\Big|_{\F}\rp^*\nabla^E\Big|_{\F}-c\circ\nabla^E_{\widetilde{\Omega}_{\F}}+c\circ K^E.
    \] 
\end{proposition}
\begin{proof}
    If $\{e_i\}_{i=1}^{n_1}$ is a local orthonormal framing of $\lp\F,g_{\F}\rp$ with dual framing $\{\varphi^i\}_{i=1}^{n_1}$ then we have
\begin{align*}
    \Delta^{\F,E}&=c\lp \varphi^i\rp\nabla^E_{e_i}c\lp \varphi^j\rp\nabla^E_{e_j}\\
    &=-\delta^{ij}\nabla^{E}_{e_i}\nabla^E_{e_j}+c\lp\varphi^i\rp c\lp\nabla^{\F,0}_{e_i}\varphi^j\rp\nabla^E_{e_j}+c\lp\varphi^i\rp c\lp\varphi^j\rp_{i<j} K^E\lp e_i,e_j\rp+c\lp \varphi^i\rp c\lp \varphi^j\rp_{i<j}\nabla^E_{[e_i,e_j]}.
\end{align*}
The curvature term can be expressed as a Clifford contraction
\[
    c\circ K^E=\frac{1}{2}c\lp\varphi^i\rp c\lp\varphi^j\rp K^E\lp e_i,e_j\rp.
\]
Next, we see that
\[
    c\lp\varphi^i\rp c\lp\nabla^{\F,0}_{e_i}\varphi^j\rp\nabla^E_{e_j}=-c\lp\varphi^i\rp c\lp\varphi^k\rp_{i<k}\varphi^j\lp\nabla^{\F,0}_{e_i}e_k-\nabla^{\F,0}_{e_k}e_i\rp\nabla^E_{e_j}+\varphi^j\lp\nabla^{\F,0}_{e_i}e_i\rp\nabla^E_{e_j}
\]
Using the restricted torsion free identity
\[
    \nabla^{\F,0}_{e_i}e_k-\nabla^{\F,0}_{e_k}e_i=[e_i,e_k]^{\F}
\]
as well as the Lie bracket identity
\[
    [e_i,e_j]=[e_i,e_j]^{\F}-\widetilde{\Omega}_{\F}\lp e_i,e_j\rp
\]
we see that
\[
    c\lp\varphi^i\rp c\lp\nabla^{\F,0}_{e_i}\varphi^j\rp\nabla^E_{e_j}+c\lp\varphi^i\rp c\lp\varphi^j\rp_{i<j}\nabla^E_{[e_i,e_j]}=\delta^{ij}\nabla^E_{\nabla^{\F,0}_{e_i}e_j}-c\circ\nabla^E_{\widetilde{\Omega}_{\F}}
\]
Thus, all together we have
\[
    \Delta^{\F,E}=-\delta^{ij}\lp \nabla^{E}_{e_i}\nabla^{E}_{e_j}-\nabla^{E}_{\nabla^{\F,0}_{e_i}e_j}\rp-c\circ\nabla^E_{\widetilde{\Omega}_{\F}}+c\circ K^E
\]
The horizontal covariant derivative 
\begin{align*}
    \nabla^E\Big|_{\F}:\Gamma\lp E\rp&\rightarrow \Gamma\lp\F^*\otimes E\rp\\
    s&\mapsto \nabla^Es\Big|_{\F}
\end{align*}
can be composed with it's formal adjoint (here the adjoint relation is defined using the Hermitian inner product of $E$ and the Riemannian metric $g_M=g_{\F}\oplus g_{\T}$) and is given locally as
\[
    \lp\nabla^E\Big|_{\F}\rp^*\nabla^E\Big|_{\F}=-\delta^{ij}\lp\nabla^E_{e_i}\nabla^E_{e_j}-\nabla^E_{\nabla^{\F,0}_{e_i}e_j}\rp
\]
Thus we have the global Weitzenbock formula
\[
    \Delta^{\F,E}=\lp\nabla^E\Big|_{\F}\rp^*\nabla^E\Big|_{\F}-c\circ\nabla^E_{\widetilde{\Omega}_{\F}}+c\circ K^E
\]
\end{proof}

\subsection{The Clifford Contracted Curvature}
Let $R^{\F,0}\in\Omega^2\lp M,\so\lp\F,g_{\F}\rp\rp$ denote the curvature of the connection $\nabla^{\F,0}$ and let 
\[
    c\lp R^{\F,0}\rp:=\tau^{-1}R^{\F,0}\in\Omega^2\lp M,C\ell^2\lp\F,g^*_{\F}\rp\rp
\]
where $\tau:C\ell^2\lp\F^*,g^*_{\F}\rp\xrightarrow{\sim}\so\lp\F,g_{\F}\rp$ is the usual isomorphism of Lie algebras. If $\{ e_i\}_{i=1}^{n_1}$ is a local orthonormal framing of $\lp\F,g_{\F}\rp$ with dual framing $\{\varphi^i\}_{i=1}^{n_1}$ then locally one has
\[
    c\lp R^{\F,0}\rp=\frac{1}{4}g_{\F}\lp R^{\F,0}e_i,e_j\rp c\lp\varphi^i\rp c\lp\varphi^j\rp
\]
Observe then that there exists a unique $F^{E/S}\in\Omega^2\lp M,\End_{C\ell}\lp E\rp\rp$ for which
\[
    K^E=c\lp R^{\F,0}\rp+F^{E/S}.
\]
We are interested in computing the Clifford contraction of $c\lp R^{\F,0}\rp$. It is classical calculation that when $\lp\F,g_{\F}\rp=\lp TM,g_M\rp$, then the Clifford contracted curvature is one quarter of the scalar curvature $k$ of $g$. To this end, we wish to evaluate
\[
    \frac{1}{2}c\lp\varphi^k\rp c\lp\varphi^l\rp c\lp R^{\F,0}\rp\lp e_k,e_l\rp =\frac{1}{8}c\lp \varphi^k\rp c\lp\varphi^l\rp c\lp\varphi^i\rp c\lp\varphi^j\rp g_{\F}\lp R^{\F,0}\lp e_k,e_l\rp e_i,e_j\rp
\]
The fact that $\nabla^{\F,0}$ is $g_{\F}$-compatible implies that the $\F$-tensor
\[
    Rm^{\F,0}\lp X,Y,Z,W\rp:=g_{\F}\lp R^{\F,0}\lp X,Y\rp Z,W\rp,\quad W,X,Y,Z\in\Gamma\lp\F\rp
\]
defines an operator section $\R^{\F,0}\in\Gamma\lp\End\wedge^2\F^*\rp$. Unfortunately this operator section may no longer be symmetric, the lack of symmetry can be traced back to the failure of the Bianchi identity 
\[
    R^{\F,0}\lp X,Y\rp Z+R^{\F,0}\lp Z,X\rp Y+R^{\F,0}\lp Y,Z\rp X=0,\quad \forall X,Y,Z\in\Gamma\lp\F\rp.
\]
This implies that the usual cancellations that appear for the Clifford contractions of the Riemann curvature tensor will not occur for the tensor $Rm^{\F,0}$. Thus we are tasked with understanding the deviation of the algebraic Bianchi identity
\[
    g_{\F}\lp R^{\F,0}\lp X,Y\rp Z+R^{\F,0}\lp Z,X\rp Y+R^{\F,0}\lp Y,Z\rp X,W\rp,\quad X,Y,Z,W\in\Gamma\lp\F\rp
\]
for the connection $\nabla^{\F,0}$.
\begin{proposition}
\label{ABia}
    For $X,Y,Z,W\in\Gamma\lp\F\rp$ one has
    \begin{align*}
    g_{\F}&\lp  R^{\F,0}\lp X,Y\rp Z+R^{\F,0}\lp Z,X\rp Y+R^{\F,0}\lp Y,Z\rp X,W\rp=\\
    &\hspace{3cm}\frac{1}{2}\lp L_{\widetilde{\Omega}_{\F}\lp Y,Z\rp}g_{\F}\lp X,W\rp+L_{\widetilde{\Omega}_{\F}\lp X,Y\rp}g_{\F}\lp Z,W\rp+L_{\widetilde{\Omega}_{\F}\lp Z,X\rp}g_{\F}\lp Y,W\rp\rp
\end{align*}
\end{proposition}
\begin{proof}
    Using the constrained torsion identity one evaluates that the above sum is
\begin{align*}
    R^{\F,0}\lp X,Y\rp Z+R^{\F,0}\lp Z,X\rp Y+R^{\F,0}\lp Y,Z\rp X&=\nabla^{\F,0}_X[Y,Z]^{\F}-\nabla^{\F,0}_{[Y,Z]}X+\nabla^{\F,0}_Z[X,Y]^{\F}-\nabla^{\F,0}_{[X,Y]}Z\\
    &\quad+\nabla^{\F,0}_Y[Z,X]^{\F}-\nabla^{\F,0}_{[Z,X]}Y
\end{align*}
We can rewrite the first difference in the above sum using the $u$-varying connections $\nabla^{g_{M,u}}$ as
\begin{align*}
    &\nabla^{\F,0}_X[Y,Z]^{\F}-\nabla^{\F,0}_{[Y,Z]}X=P^{\F}\lp\nabla^{g_{M,u}}_X[Y,Z]^{\F}-\nabla^{g_{M,u}}_{[Y,Z]}X\rp+\frac{u^{-1}}{2}\iota_X\so\Omega_{\F}[Y,Z]^{\F}-\frac{u^{-1}}{2}\iota_{[Y,Z]}\so\Omega_{\F}X\\
    &\hspace{3cm}=[X,[Y,Z]]^{\F}-P^{\F}\nabla^{g_{M,u}}_X[Y,Z]^{\T}+\frac{u^{-1}}{2}\iota_X\so\Omega_{\F}[Y,Z]^{\F}-\frac{u^{-1}}{2}\iota_{[Y,Z]}\so\Omega_{\F}X
\end{align*}
We obtain then
\begin{align*}
    &R^{\F,0}\lp X,Y\rp Z+R^{\F,0}\lp Z,X\rp Y+R^{\F,0}\lp Y,Z\rp X=\\
    &\quad[X,[Y,Z]]^{\F}-P^{\F}\nabla^{g_{M,u}}_X[Y,Z]^{\T}+\frac{u^{-1}}{2}\iota_X\so\Omega_{\F}[Y,Z]^{\F}-\frac{u^{-1}}{2}\iota_{[Y,Z]}\so\Omega_{\F}X\\
    &\quad+[Z,[X,Y]]^{\F}-P^{\F}\nabla^{g_{M,u}}_Z[X,Y]^{\T}+\frac{u^{-1}}{2}\iota_Z\so\Omega_{\F}[X,Y]^{\F}-\frac{u^{-1}}{2}\iota_{[X,Y]}\so\Omega_{\F}Z\\
    &\quad\quad +[Y,[Z,X]]^{\F}-P^{\F}\nabla^{g_{M,u}}_Y[Z,X]^{\T}+\frac{u^{-1}}{2}\iota_Y\so\Omega_{\F}[Z,X]^{\F}-\frac{u^{-1}}{2}\iota_{[Z,X]}\so\Omega_{\F}Y
\end{align*}
We see that the sum involving iterated Lie brackets vanishes due to the Jacobi identity. Moreover, when taking the inner product with $W$, the result must be independent of $u$. Using Proposition \ref{GET}, we find that 
\begin{align*}
    g_{\F}&\lp  R^{\F,0}\lp X,Y\rp Z+R^{\F,0}\lp Z,X\rp Y+R^{\F,0}\lp Y,Z\rp X,W\rp=\\
    &\hspace{4cm}\frac{1}{2}\lp L_{\widetilde{\Omega}_{\F}\lp Y,Z\rp}g_{\F}\lp X,W\rp+L_{\widetilde{\Omega}_{\F}\lp X,Y\rp}g_{\F}\lp Z,W\rp+L_{\widetilde{\Omega}_{\F}\lp Z,X\rp}g_{\F}\lp Y,W\rp\rp
\end{align*}
\end{proof}
For a fixed $j$ we consider the following sum
\[
    c\lp \varphi^k\rp c\lp\varphi^l\rp c\lp\varphi^i\rp  g_{\F}\lp R^{\F,0}\lp e_k,e_l\rp e_i,e_j\rp
\]
and note that the sum over all $k,l,i$ takes the form
\begin{align*}
    &c\lp \varphi^k\rp c\lp\varphi^l\rp c\lp\varphi^i\rp  g_{\F}\lp R^{\F,0}\lp e_k,e_l\rp e_i,e_j\rp=\\
    &c\lp \varphi^k\rp c\lp\varphi^l\rp c\lp\varphi^i\rp ^{k<l<i} \lp g_{\F}\lp R^{\F,0}\lp e_k,e_l\rp e_i,e_j\rp+ g_{\F}\lp R^{\F,0}\lp e_i,e_k\rp e_l,e_j\rp+g_{\F}\lp R^{\F,0}\lp e_l,e_i\rp e_k,e_j\rp\rp-\\
    &c\lp \varphi^k\rp c\lp\varphi^l\rp c\lp\varphi^i\rp^{k<l<i} \lp g_{\F}\lp R^{\F,0}\lp e_l,e_k\rp e_i,e_j\rp+ g_{\F}\lp R^{\F,0}\lp e_i,e_l\rp e_k,e_j\rp+g_{\F}\lp R^{\F,0}\lp e_k,e_i\rp e_l,e_j\rp\rp+\\
    &\delta_{li}c\lp \varphi^k\rp c\lp\varphi^l\rp c\lp\varphi^i\rp g_{\F}\lp R^{\F,0}\lp e_k,e_l\rp e_i,e_j\rp+
    \delta_{ki}c\lp \varphi^k\rp c\lp\varphi^l\rp c\lp\varphi^i\rp g_{\F}\lp R^{\F,0}\lp e_k,e_l\rp e_i,e_j\rp
\end{align*}
Using (\ref{ABia}) we have

\begin{align*}
    &c\lp \varphi^k\rp c\lp\varphi^l\rp c\lp\varphi^i\rp ^{k<l<i} \lp g_{\F}\lp R^{\F,0}\lp e_k,e_l\rp e_i,e_j\rp+ g_{\F}\lp R^{\F,0}\lp e_i,e_k\rp e_l,e_j\rp+g_{\F}\lp R^{\F,0}\lp e_l,e_i\rp e_k,e_j\rp\rp-\\
    &c\lp \varphi^k\rp c\lp\varphi^l\rp c\lp\varphi^i\rp ^{k<l<i} \lp g_{\F}\lp R^{\F,0}\lp e_l,e_k\rp e_i,e_j\rp+ g_{\F}\lp R^{\F,0}\lp e_i,e_l\rp e_k,e_j\rp+g_{\F}\lp R^{\F,0}\lp e_k,e_i\rp e_l,e_j\rp\rp\\
    &=\frac{1}{2}c\lp \varphi^k\rp c\lp\varphi^l\rp c\lp\varphi^i\rp^{k<l<i}\lp L_{\widetilde{\Omega}_{\F}\lp e_l,e_i\rp}g_{\F}\lp e_k,e_j\rp+L_{\widetilde{\Omega}_{\F}\lp e_k,e_l\rp}g_{\F}\lp e_i,e_j\rp+L_{\widetilde{\Omega}_{\F}\lp e_i,e_k\rp}g_{\F}\lp e_l,e_j\rp\rp-\\
    &\frac{1}{2}c\lp \varphi^k\rp c\lp\varphi^l\rp c\lp\varphi^i\rp ^{k<l<i}\lp L_{\widetilde{\Omega}_{\F}\lp e_k,e_i\rp}g_{\F}\lp e_l,e_j\rp+L_{\widetilde{\Omega}_{\F}\lp e_l,e_k\rp}g_{\F}\lp e_i,e_j\rp+L_{\widetilde{\Omega}_{\F}\lp e_i,e_l\rp}g_{\F}\lp e_k,e_j\rp\rp\\
    &=\frac{1}{2}c\lp \varphi^k\rp c\lp\varphi^l\rp c\lp\varphi^i\rp^{k\neq l\neq i}\lp L_{\widetilde{\Omega}_{\F}\lp e_k,e_i\rp}g_{\F}\lp e_l,e_j\rp\rp\\
    &=-\frac{1}{2}c\lp \varphi^k\rp c\lp\varphi^i\rp c\lp\varphi^l\rp^{k\neq i\neq l}\lp L_{\widetilde{\Omega}_{\F}\lp e_k,e_i\rp}g_{\F}\lp e_l,e_j\rp\rp
\end{align*}
If we now re-incorprate $c\lp\varphi^j\rp$ and sum over $j$ then we obtain
\begin{align*}
    &c\lp \varphi^k\rp c\lp\varphi^l\rp c\lp\varphi^i\rp c\lp\varphi^j\rp^{k\neq i\neq l}  g_{\F}\lp R^{\F,0}\lp e_k,e_l\rp e_i,e_j\rp\\
    &\hspace{5cm}=c\lp \varphi^k\rp c\lp\varphi^l\rp c\lp\varphi^i\rp c\lp\varphi^j\rp^{k\neq i\neq l} \lp L_{\widetilde{\Omega}_{\F}\lp e_k,e_i\rp}g_{\F}\lp e_l,e_j\rp\rp
\end{align*}
However the Lie derivative of $g_{\F}$ remains symmetric, thus for fixed $k,i$ when summing over $l\neq i,k$ and all $j\neq i,k$ the skew symmetry of $c\lp\varphi^l\rp c\lp\varphi^j\rp$ combined with the symmetry of $L_{\widetilde{\Omega}_{\F}\lp e_k,e_l\rp}g_{\F}\lp e_l,e_j\rp$ gives us cancellation and the sum simplifies to
\begin{align*}
    &c\lp \varphi^k\rp c\lp\varphi^i\rp c\lp\varphi^l\rp c\lp\varphi^j\rp^{k\neq i\neq l}\lp L_{\widetilde{\Omega}_{\F}\lp e_k,e_i\rp}g_{\F}\lp e_l,e_j\rp\rp=
    -c\lp\varphi^k\rp c\lp\varphi^i\rp \delta^{lj}_{l\neq k,i}L_{\widetilde{\Omega}_{\F}\lp e_k,e_i\rp}g_{\F}\lp e_l,e_j\rp\\
    &\quad+c\lp\varphi^k\rp c\lp\varphi^i\rp c\lp\varphi^l\rp c\lp\varphi^k\rp L_{\widetilde{\Omega}_{\F}\lp e_k,e_i\rp}g_{\F}\lp e_l,e_k\rp+c\lp\varphi^k\rp c\lp\varphi^i\rp c\lp\varphi^l\rp c\lp\varphi^i\rp L_{\widetilde{\Omega}_{\F}\lp e_k,e_i\rp}g_{\F}\lp e_l,e_i\rp
\end{align*}
For a tensor $T\in\Gamma\lp\F^{*,\otimes^k}\rp$ and distinct numbers $a<b\in\{1,\cdots,k\}$ we define
\begin{align*}
    S_{ab}T\lp\cdots,v_a,\cdots,v_b\cdots\rp&:=\frac{1}{2}\lp T\lp\cdots,v_a,\cdots,v_b\cdots\rp+T\lp\cdots,v_b,\cdots,v_a\cdots\rp\rp\\
    A_{ab}T\lp\cdots,v_a,\cdots,v_b\cdots\rp&:=\frac{1}{2}\lp T\lp\cdots,v_a,\cdots,v_b\cdots\rp-T\lp\cdots,v_b,\cdots,v_a\cdots\rp\rp\\
    \tr_{ab}T\lp v_1,\cdots,v_{k-2}\rp&:=\delta^{ij}T\lp v_1,\cdots,e_i,\cdots,e_j,\cdots,v_{k-2}\rp
\end{align*}
and note that $T=S_{ab}T+A_{ab}T$.
We define the $\F$-two-forms
\begin{align*}
    \tr_{34}^{\perp}L_{\widetilde{\Omega}_{\F}}g_{\F}\lp X,Y\rp&:=\tr_{34}L_{\widetilde{\Omega}_{\F}\lp X,Y\rp}g_{\F}\lp\cdot,\cdot\rp\big|_{\bR\cdot\{ X,Y\}^{\perp}}\\
    \tr^{\perp}_{23}A_{14}L_{\widetilde{\Omega}_{\F}}g_{\F}\lp X,Y\rp&:=\tr_{23}A_{14}L_{\widetilde{\Omega}_{\F}\lp X,\cdot\rp}g_{\F}\lp\cdot,Y\rp\big|_{\bR\cdot\{ X,Y\}^{\perp}}
\end{align*}
and observe that we have
\begin{align*}
    &-\frac{1}{2}c\lp \varphi^k\rp c\lp\varphi^i\rp c\lp\varphi^l\rp c\lp\varphi^k\rp^{k\neq i\neq l}L_{\widetilde{\Omega}_{\F}\lp e_k,e_i\rp}g_{\F}\lp e_l,e_k\rp\\
    &\hspace{7cm}-\frac{1}{2}c\lp\varphi^k\rp c\lp\varphi^i\rp c\lp\varphi^l\rp c\lp\varphi^i\rp^{k\neq i\neq l}L_{\widetilde{\Omega}_{\F}\lp e_k,e_i\rp} g_{\F}\lp e_l,e_i\rp\\
    &\hspace{5cm}=c\lp \varphi^k\rp c\lp\varphi^i\rp c\lp\varphi^l\rp c\lp\varphi^k\rp^{k\neq i\neq l}L_{\widetilde{\Omega}_{\F}\lp e_i,e_k\rp}g_{\F}\lp e_k,e_l\rp\\
    &\hspace{5cm}=-c\lp\varphi^i\rp c\lp \varphi^l\rp^{i\neq l}\tr^{\perp}_{23}A_{14}L_{\widetilde{\Omega}_{\F}}g_{\F}\lp e_i,e_l\rp\\
    &\hspace{5cm}=-2c\circ\tr^{\perp}_{23}A_{14}L_{\widetilde{\Omega}_{\F}}g_{\F}
\end{align*}
and 
\[
    \frac{1}{2}c\lp\varphi^k\rp c\lp\varphi^i\rp\delta^{lj}_{l\neq k,i}L_{\widetilde{\Omega}_{\F}\lp e_k,e_i\rp}g_{\F}\lp e_l,e_j\rp=c\circ\tr^{\perp}_{34}L_{\widetilde{\Omega}_{\F}}g_{\F}
\]
If we define $Ric^{\F,0}\in\Gamma\lp\F^*\otimes\F^*\rp$ by
\[
    Ric^{\F,0}\lp X,Y\rp:=\delta^{ij}g_{\F}\lp R^{\F,0}\lp X,e_i\rp e_j,Y\rp
\]
then we have 
\begin{align*}
    &\delta_{li}c\lp \varphi^k\rp c\lp\varphi^l\rp c\lp\varphi^i\rp c\lp\varphi^j\rp g_{\F}\lp R^{\F,0}\lp e_k,e_l\rp e_i,e_j\rp+
    \delta_{ki}c\lp \varphi^k\rp c\lp\varphi^l\rp c\lp\varphi^i\rp c\lp\varphi^j\rp g_{\F}\lp R^{\F,0}\lp e_k,e_l\rp e_i,e_j\rp\\
    &\hspace{5cm}=-2c\lp\varphi^k\rp c\lp\varphi^j\rp Ric^{\F,0}\lp e_k,e_j\rp\\
    &\hspace{5cm}=2\tr  SRic^{\F,0}-4c\circ ARic^{\F,0}
\end{align*}
where $SRic^{\F,0}$ and $ARic^{\F,0}$ denote the symmetric and anit-symmetric part of $Ric^{\F,0}$.
Putting all of the terms together we find 
\[
    c\circ c\lp R^{\F,0}\rp=\frac{1}{8}c\circ\tr^{\perp}_{34}L_{\widetilde{\Omega}_{\F}}g_{\F}-\frac{1}{4}c\circ\tr^{\perp}_{23}A_{14}L_{\widetilde{\Omega}_{\F}}g_{\F}+\frac{1}{4}\tr SRic^{\F,0}-\frac{1}{2}c\circ ARic^{\F,0}
\]
which gives the explicit Weitzenbock identity as
\begin{align*}
    &\Delta^{\F,E}=\lp\nabla^E\Big|_{\F}\rp^*\nabla^E\Big|_{\F}-c\circ\lp\nabla^E_{\widetilde{\Omega}_{\F}}\rp\\
    &\hspace{2cm}+c\circ F^{E/S}\Big|_{\F}+\frac{1}{8}c\circ\tr^{\perp}_{34}L_{\widetilde{\Omega}_{\F}}g_{\F}-\frac{1}{4}c\circ\tr^{\perp}_{23}A_{14} L_{\widetilde{\Omega}_{\F}}g_{\F}+\frac{1}{4}\tr SRic^{\F,0}-\frac{1}{2}c\circ ARic^{\F,0}
\end{align*}
Observe that a great simplification occurs when one has
\[
    L_{\widetilde{\Omega}_{\F}\lp X,Y\rp}g_{\F}\lp Z,W\rp+L_{\widetilde{\Omega}_{\F}\lp Z,X\rp}g_{\F}\lp Y,W\rp+L_{\widetilde{\Omega}_{\F}\lp Y,Z\rp}g_{\F}\lp X,W\rp=0
\]
for all $X,Y,Z,W\in\Gamma\lp\F\rp$. In this case the curvature tensor $Rm^{\F,0}$ satisfies all of the algebraic Bianchi identities, the operator section $\R^{\F,0}$ is symmetric, and the tensor $Ric^{\F,0}$ is symmetric. Thus the Weitzenbock identity takes on the more familiar form
\[
    \Delta^{\F,E}=\lp\nabla^E\Big|_{\F}\rp^*\nabla^E\Big|_{\F}-c\circ\nabla^E_{\widetilde{\Omega}_{\F}}+\frac{1}{4}\tr Ric^{\F,0}+ c\circ F^{E/S}.
\]
In particular we obtain a simple proposition in the low-rank setting.
\begin{proposition}
    Suppose $\F$ has rank $n_1=2$. Then for any sub-Riemannian metric $g_{\F}$ and transverse distribution $\T$ to $\F$, $g_{\F}$ satisfies
    \[
        L_{\widetilde{\Omega}_{\F}\lp X,Y\rp}g_{\F}\lp Z,W\rp+L_{\widetilde{\Omega}_{\F}\lp Z,X\rp}g_{\F}\lp Y,W\rp+L_{\widetilde{\Omega}_{\F}\lp Y,Z\rp}g_{\F}\lp X,W\rp=0
    \]
    for all $X,Y,Z,W\in\Gamma\lp\F\rp$. 
\end{proposition}

\subsection{The Case of a Contact Manifold}
Let $\lp M^{2n+1},\theta\rp$ be a contact manifold and $\F=\operatorname{Ker}\theta\leqslant TM$ the corresponding contact distribution equipped with sub-Riemannian metric $g_{\F}$. The contact 1-form comes with a unique vector field $T$ satisfying
\[
    \iota_T\theta=1,\quad \iota_Td\theta=0,
\]
which is called the \emph{Reeb vector field}. Let $\T$ be the smooth subbundle of $TM$ spanned by $T$ and $g_{\T}$ be the Euclidean fiber metric on $\T$ which gives the Reeb field length $1$. Because $\T$ is spanned by $T$ we see that $\theta_{\T}:=\theta\big|_{\T}$ spans $\T^*$ and we obtain the isomorphism $\wedge\T^*\simeq\wedge[\theta]$ and let $g_{\wedge[\theta]}$ denote the canonical metric for the trivial bundle $\wedge[\theta]$. A calculation reveals that in this setting one has
\begin{align*}
    \nabla^{\T,u}&=\nabla^{\T,0}=d^{\T},\quad u>0\\
    \Omega_{\T}&=0\\
    \quad L_Xg_{\T}&=0,\quad X\in\Gamma\lp\F\rp\\
    \iota_T\Omega_{\F}&=d\theta\\
    d^{\T}_{\nabla^E}&=\varepsilon^{\T}_{\theta_{\T}}\lp\nabla^E_T\otimes I+I\otimes d^{\T}_T\rp=\varepsilon^{\T}_{\theta_{\T}}\nabla^{E\otimes\wedge[\theta]}_T
\end{align*}
where $d^{\T}$ denotes the canonical flat connection of the trivial bundle $\T\simeq M\times\bR$. The global formula for the asymptotic Bismut superconnection becomes
\[
    \bD^E=D^{\F,E\otimes\wedge[\theta]}+\varepsilon^{\T}_{\theta_{\T}}\nabla^{E\otimes\wedge[\theta]}_T+\frac{1}{4}c\lp d\theta\rp\iota^{\T}_T+\frac{1}{4}\varepsilon^{\T}\circ\tr Lg_{\F}
\]
\subsection{The Equivariant Setting}
Suppose that a Lie group $G$ acts on $M$ and preserves the geometric structures $\lp\F,g_{\F}\rp$ and $\lp\T,g_{\T}\rp$. 
\begin{proposition}
    The Lie group $G$ preserves $\nabla^{\F,0}$ and $\nabla^{\T,0}$ as well as the forms $\Omega_{\F}$, $\Omega_{\T}$ and $\omega_u$.
\end{proposition}
\begin{proof}
    To see this simply note that $G$ preserves the splitting $TM=\F\oplus\T$ as well as the metrics $g_{M,u}=g_{\F}\oplus u^{-1}g_{\T}$. Thus it preserves the Levi-Civita connections $\nabla^{g_{M,u}}$ as well as the projected connections $\nabla^{\F,u}$ and $\nabla^{\T,u}$. For $g\in G$ the action takes the form
    \begin{align*}
        g\cdot\nabla^{\F,u}&=g\cdot\nabla^{\F,0}+\frac{u^{-1}}{2}g\cdot\so\Omega_{\F}\\
        g\cdot\nabla^{\T,u}&=g\cdot\nabla^{\T,0}+\frac{u}{2}g\cdot\so\Omega_{\T}.
    \end{align*}
    This implies that $G$ fixes the constant and $u$-dependent terms separately. Furthermore, $G$ must fix the shape one-form 
    \[
        S_u=\nabla^{g_{M,u}}-\nabla^{\F,u}\oplus\nabla^{\T,u}
    \]
    and thus preserves $\omega_u$.
\end{proof}
If we are further given a $G$-equivariant $\bZ_2$-graded covariant Hermitian Clifford module
\[
    G\circlearrowright\lp C\ell\lp\F^*,g^*_{\F}\rp,\nabla^{\F,0}\rp\circlearrowright G\circlearrowright\lp E,h_E,\nabla^E\rp
\]
then the one-parameter family of associated Dirac operators $D_u$ are $G$-equivariant. Our general calculation in Proposition \ref{GFD} shows that $D_u$ can be given a $u$-Laurent expansion
\[
    D_u=u^{-1}D_{-1}+\bD^E+uD_1
\]
and the $G$-equivariance implies that each component must be $G$-equivariant. Thus for all $g\in G$ we have
\[
    g\cdot\bD^E=\bD^E\cdot g.
\]
An example worth mentioning in this case is when a Lie group $G$ acts by contactomorphisms on a contact manifold $\lp M,\theta\rp$ and also preserves a sub-Riemannian metric $g_{\F}$ for $\F=\Ker\theta$. Then it preserves both $\lp\F,g_{\F}\rp$ and $\lp\T,g_{\T}\rp$, where $\lp\T,g_{\T}\rp$ is the transverse distribution determined by the Reeb field $T$ and assigning $g_{\T}\lp T,T\rp\equiv 1$. For example, if $G$ is a compact Lie group acting by contactomorphisms $G\circlearrowright\lp M,\theta\rp$, then by averaging along $G$ we can always construct a $G$-invariant sub-Riemannian metric $g_{\F}$.

\section{The Heisenberg Calculus and Hypoellipticity}
On a filtered manifold $\lp M,\F\rp$ there is a microlocal calculus, called the \textbf{\emph{Heisenberg Calculus}} which one can use to investigate problems of hypoellipticity when dealing with non-elliptic operators on $M$ such as a sub-Laplacian. For differential operators acting on the sections of a complex vector bundle $E\rightarrow M$, the description of the Heisenberg calculus is fairly straightforward and we will follow the description given in \cite{DH} and \cite{Van-Yun}. We will first assume that the filtered manifold $\lp M,\F\rp$ is a contact manifold $\lp M,\theta\rp$ with filtration $\F=\Ker\theta$ and provide a proof of hypoellipticity of $\bD^E$ and discuss the subtleties in choice of $u$-varying connection $\nabla^{E,u}$. We then increase the complexity to the setting of two-step filtrations $\F$ of $TM$ and prove that hypoellipticity of $\bD^E$ holds in this more general setting, in the process providing a second proof of the hypoellipticity of $\bD^E$ on contact manifolds. We then proceed to the general setting of arbitrary filtrations and give a redefinition of the asymptotic Bismut superconnection which properly generalizes our original definition. We also provide a close cousin of the asymptotic Bismut superconnection which is \emph{always hypoelliptic}. We end this section by computing the kernel of the asymptotic Bismut superconnection on principal $\bS^1$-bundles with non-vanishing curvature. 
\subsection{The Rockland Condition}
We begin with a general filtered manifold $\lp M,\F\rp$ of step size $m$
\[  
    \F=\F^1\leqslant\F^2\leqslant\cdots\leqslant\F^m=TM
\]
satisfying $[\Gamma\lp\F^i\rp,\Gamma\lp\F^j\rp]\subseteq\Gamma\lp\F^{i+j}\rp$, and we let $n_j$ denote the rank of $\F^j/\F^{j-1}$.
To describe the calculus Heisenberg calculus associated to the above filtration, first note that one can form the vector bundles
\[
    \t^j_{\F}M:=\F^j/\F^{j-1},\quad \t_{\F}M:=\bigoplus_{j=1}^m\t^j_{\F}M.
\]
At each $x\in M$ there is a natural bilinear form on $\t^j_{\F}M_x\otimes\t^k_{\F}M_x$ with values in $\t^{j+k}_{\F}M_x$ defined as
\begin{align*}
    [\cdot,\cdot]_x:\t^j_{\F}M_x\times\t^k_{\F}M_x\rightarrow \t^{k+j}_{\F}M_x\\
    \lp X_x,Y_x\rp\rightarrow [\tilde{X},\tilde{Y}]_x+\F^{j+k-1}_x
\end{align*}
where $\tilde{X},\tilde{Y}$ denote local sections of $\F^j$ and $\F^k$ which extend $X_x$ and $Y_x$. If we extend the bracket to $\t_{\F}M_x$ then one checks that this bracket satisfies the Jacobi identity. By defining the grading degree of $\t^j_{\F}M_x$ to be $j$, one obtains a \textbf{\emph{graded $m$-step nilpotent Lie algebra}}. We denote the \textbf{\emph{osculating group at $x$}} by $T_{\F}M_x:=\exp\t_{\F}M_x$ which is defined to be the simply connected exponential of $\t_{\F}M_x$. In this case 
\[
    \t_{\F}M:=\bigsqcup_{x\in M}\t_{\F}M_x,\quad T_{\F}M:=\bigsqcup_{x\in M} T_{\F}M_x,
\]
become bundles of Lie algebras and Lie groups, respectively. In general, the local bracket structure of $\t_{\F}M$ and $T_{\F}M$ \emph{cannot be locally trivialized}. Suppose we are given a rank $r$ vector bundle $E\rightarrow M$ and a differential operator $L:\Gamma\lp E\rp\rightarrow\Gamma\lp E\rp$. Along a neighborhood $U$ of $M$ for which each $\t^j_{\F}M$ admits a framing by projected vector fields $\{ X^j_{j_k}\}_{j_k=1}^{n_j}$, and $E$ for which trivializes, we can express $L$ as a finite sum
\[
    L=\sum_{\alpha_1,\cdots,\alpha_m} a_{\alpha_1,\cdots,\alpha_m}\lp \vec{X}^1\rp^{\alpha_1}\cdots \lp \vec{X}^m\rp^{\alpha_m},
\]
where $\alpha_j=\lp\alpha_{j_1},\cdots,\alpha_{j_{n_j}}\rp\in\lp\bZ_{\geq0}\rp^{n_j}$, $a_{\alpha_1,\cdots,\alpha_m}\in C^{\infty}\lp U,M_r\lp \bC\rp\rp$, and 
\[
    \lp \vec{X}^j\rp^{\alpha_j}:=\lp X^j_1\rp^{\alpha_{j_1}}\cdots\lp X^j_{n_j}\rp^{\alpha_{j_{n_j}}}.
\]
The operator $L$ has \textbf{\emph{$\F$-filtered differential order $l$}} if for arbitrary local choices of vector field trivializations of $\t^j_{\F}M$ and local trivializations of $E$, for any non-vanishing coefficient $a_{\alpha_1,\cdots,\alpha_m}$ one has
\[
    \sum_{j=1}^{m}j\cdot|\alpha_j|\leq l.
\]
For a collection of multi-indices $\lp\alpha_1,\cdots,\alpha_m\rp$ we will define their $\F$-homogeneous length as
\[
    |\lp\alpha_1,\cdots,\alpha_m\rp|_{\F}:=\sum_{j=1}^{m}j\cdot|\alpha_j|.
\]
For a given $x\in M$ one can freeze the coefficients of $a_{\alpha_1,\cdots,\alpha_m}$ and form the assignment
\[
    \lp \vec{X}^1\rp^{\alpha_1}\cdots \lp \vec{X}^m\rp^{\alpha_m}\Rightarrow \lp [\vec{X}^1]_x\rp^{\alpha_1}\cdots \lp [\vec{X}^{m-1}]_x\rp^{\alpha_m}\in\U\lp\t_{\F}M_x\rp
\]
and thus upon tensoring obtain an element
\[  
    \widecheck{\sigma}_l\lp L\rp_x:=\sum_{|\lp\alpha_1,\cdots,\alpha_m\rp|_{\F}=l} a_{\alpha_1,\cdots,\alpha_m}\lp x\rp\lp [\vec{X}^1]_x\rp^{\alpha_1}\cdots \lp [\vec{X}^{m-1}]_x\rp^{\alpha_m}\in\U\lp\t_{\F}M_x\rp\otimes M_r\lp\bC\rp.
\]
One checks that the above element transforms like a section of $\U\lp\t_{\F}M\rp\otimes\End\pi^*E$ and thus gives one a section of said bundle. For $L\in\text{Diff}\lp M,E\rp$ we let $|L|^{H.O.}\in\bZ_{\geq0}$ be the minimal integer for which in any local framing of $\t_{\F}M$ and local trivialization of $E$, the Heisenberg order of all terms appearing within the local expansion of $L$ are bounded by $|L|^{H.O.}$. If $E$ is equipped with a connection $\nabla^E$ then by definition we have
\[
    X\in\Gamma\lp\F^j\rp\Rightarrow |\nabla^E_X|^{H.O.}\leq j,\quad \widecheck{\sigma}_j\lp\nabla^E_X\rp_x=[X]^j_x\otimes I_{E_x}
\]
where $[X]^j_x\in\t^j_{\F}M_x$ denotes the projection of $X_x$. The filtered symbol has the important \textbf{\emph{composition property}}: If $L_1$ has filtered order $l_1$ and $L_2$ has filtered order $l_2$, then the composition $L_1L_2$ has filtered order $l_1+l_2$ and
\[
    \widecheck{\sigma}_{l_1+l_2}\lp L_1L_2\rp=\widecheck{\sigma}_{l_1}\lp L_1\rp\widecheck{\sigma}_{l_2}\lp L_2\rp
\]
The main use of this new notion of differential order is the following theorem relating the representation theory of the osculating Lie groups $T_{\F}M$ to the hypoellipticity of differential operators. To set up notation, if $x\in M$ we let $\Pi_x:T_{\F}M_x\rightarrow U\lp H \rp$ denote a unitary representation and $\pi_x:\t_{\F}M_x\rightarrow \End H^{\infty}$ the induced Lie algebra representation on the smooth vectors $H^{\infty}$. If $V$ is a fixed complex vector space then we can inflate the Lie algebra representation 
\[
    \pi^V_x:\t_{\F}M_x\otimes \gl\lp V\rp\rightarrow \gl\lp H^{\infty}\otimes V\rp
\]
by defining 
\[
    \pi^V_x\lp X\otimes S \rp:=\pi_x\lp X\rp\otimes I_{ V}+I_{H^{\infty}}\otimes S
\]
which extends to an algebra homomorphism
\[
    \pi^V_x:\U\lp \t_{\F}M_x\rp\otimes \End V\rightarrow \End\lp H^{\infty}\otimes V\rp.
\]
When expressing $\pi^V_x\lp X\otimes I_V\rp$ and $\pi^V_x\lp 1\otimes S\rp$ we will usually drop the $\otimes I_V$ and $I_{H^{\infty}}\otimes$ and simply express $\pi^V_x\lp X\otimes I_V\rp=\pi_x\lp X\rp$ and $\pi^V_x\lp 1\otimes S\rp=S$. We also will often identify elements of $\U\lp\t_{\F}M_x\rp$ with the algebra of left-invariant differential operators on $T_{\F}M_x$.
\begin{theorem}[van Erp-Yunken \cite{Van-Yun}, Dave-Haller \cite{DH}]
\label{OGRC}
    Suppose that $\lp M,\F\rp$ is an $m$-step filtered manifold and $E\rightarrow M$ is a complex vector bundle with differential operator $L$ with Heisenberg order $l$. Then $L$ is hypoelliptic if for every $x\in M$ and every irreducible unitary representation $\Pi_x:T_{\F}M_x\rightarrow U(H)$, the induced operator on smooth vectors of the inflated representation $\pi^{E_x}_x:\mathfrak{t}_{\F}M_x\otimes \gl\lp E_x\rp\rightarrow \gl\lp H^{\infty}\otimes E_x\rp$ 
    \[
        \pi^{E_x}_x\lp\widecheck{\sigma}_p\lp L\rp_x\rp:H^{\infty}\otimes E_x\rightarrow H^{\infty}\otimes E_x
    \]
    is injective. If $M$ is closed, then $\Ker L$ is a finite dimensional space of smooth sections.
\end{theorem}
One says that a differential operator $L\in \text{Diff}\lp M,E\rp$ is \textbf{\emph{Rockland}} if it satisfies the above criteria for hypoellipticity. For a general filtered manifold the above theorem reduces the determination of hypoellipticity to the study of the action of the cosymbol of the operator on the irreducible unitary representations of the simply connected nilpotent Lie group $T_{\F}M_x$, for each $x\in M$.
\subsection{Hypoellipticity on Contact Manifolds}
We now restrict ourselves to contact manifolds $\lp M,\theta\rp$. Fix a point $x\in M$ and recall that Darboux's theorem guarantees the existence of a neighborhood $U$ of $x$ and a \textbf{\emph{contact framing}} of $\F\big|_U$ of the form $\{ Q_j,P_j\}_{j=1}^n$ which satisfies
\[
    [Q_j,Q_k]=[P_j,P_k]=0,\quad [Q_j,P_k]=\delta_{jk} T.
\]
If we assume that $U$ is chosen so that $E\rightarrow M$ trivializes along $U$, $E\big|_U\simeq \underline{\bC^r}_U$, then any differential operator $L\in\text{Diff}\lp M,E\rp$ will take the form of a finite sum of operators
\[
    L=\sum_{|\alpha|+|\beta|+\gamma}a_{\alpha\beta\gamma}Q_i^{\alpha_i}P_j^{\beta_j}T^{\gamma},\quad a_{\alpha\beta\gamma}\in C^{\infty}\lp U,M_r\lp\bC\rp\rp.
\]
Given that the order of vector fields tangent to $\F$ are $1$ and the order of vector fields tangent to $\T\simeq TM/\F$ are $2$, in this case one says that $L$ has \textbf{\emph{Heisenberg order $\leq l$}} if for all non-vanishing coefficients $a_{\alpha\beta\gamma}$ above one has $|\alpha|+|\beta|+2\gamma\leq l$. The \textbf{\emph{Heisenberg co-symbol of order $l$}} of $L$ at $x$ is defined by taking the highest order components i.e. the components for which $|\alpha|+|\beta|+2\gamma=l$ within the local form of $L$, and one evaluates the matrix components at $x$. The cosymbol then is
\[
    \widecheck{\sigma}_l\lp L\rp_x:=\sum_{|\alpha|+|\beta|+2\gamma=l}a_{\alpha\beta\gamma}\lp x\rp Q_i^{\alpha_i}P^{\beta_j}_jT^{\gamma}.
\]
By definition, if $E$ is given a connection $\nabla^E$ then we have
\begin{align*}
    X\in\Gamma\lp\F\rp\Rightarrow |\nabla^E_X|^{H.O.}=1,\quad \widecheck{\sigma}_1\lp\nabla^E_X\rp_x=X_x\otimes I_{E_x},\\
    Z\in\Gamma\lp\T\rp\Rightarrow |\nabla^E_Z|^{H.O.}=2,\quad \widecheck{\sigma}_2\lp\nabla^E_Z\rp_x=Z_x\otimes I_{E_x}.
\end{align*}
It is here that the simplicity of the contact assumtion becomes apparent: For all $x\in M$ the osculating group is isomorphic to the Heisenberg group $T_{\F}M_x\simeq\H$ which is the smooth manifold $\H=\bC^n\times \bR$ with group structure 
\[
    \lp \vec{z},t\rp\cdot \lp\vec{w},s\rp:=\lp\vec{z}+\vec{w},t+s+\frac{1}{2}\text{Im}\overline{\vec{z}}\cdot \vec{w}\rp.
\]
As a basis for the Lie algebra $\h$ of $\H$ one can take the left-invariant vector fields
\[
    Q_j=\frac{\partial}{\partial x_j}-\frac{y_j}{2}\frac{\partial}{\partial t},\quad P_j=\frac{\partial}{\partial y_j}+\frac{x_j}{2}\frac{\partial}{\partial t},\quad T=\frac{\partial}{\partial t},
\]
which satisfy the commutation relations
\[
    [Q_j,Q_k]=[P_j,P_k]=0,\quad [Q_j,P_k]=\delta_{jk} T
\]
The Heisenberg group comes equipped with a contact form
\[
    \theta_{\H}:=dt-\frac{1}{2}\delta^{jk}\lp x_jdy_k-y_jdx_k \rp
\]
for which the vector fields $\{ Q_j,P_j\}_{j=1}^n$ give rise to a global contact framing of $\F$ and $T$ is the Reeb vector field of $\theta_{\H}$. \par
The irreducible representations of $\H$ fall into two types: a $2n$-parameter family of one-dimensional complex representations and a one-parameter family of infinite dimensional representations. For $\vec{\xi}=\lp\xi_1,\cdots,\xi_{2n}\rp\in\bR^{2n}\setminus{\vec{0}}$, the one-dimensional representations are given at the Lie algebraic level as
\[
    \pi_{\vec{\xi}}\lp Q_j\rp=2\pi i\xi_j,\quad \pi_{\vec{\xi}}\lp P_j\rp=2\pi i\xi_{n+j},\quad \pi_{\vec{\xi}}\lp T\rp=0.
\]
The infinite dimensional representations are parametrized by $\lambda\in\bR\setminus{0}$ and are the standard Schrödinger representations of $\H$ acting on $L^2\lp\bR^n_q\rp$. The smooth vectors of these representations are the Schwartz class functions and the representation at the Lie algebraic level is given by
\[
    \pi_{\lambda}\lp Q_j\rp= \frac{\lambda}{i} q_j,\quad
    \pi_{\lambda}\lp P_j\rp=\frac{\partial}{\partial q_j},\quad
    \pi_{\lambda}\lp T\rp=i\lambda.
\]
A quick look at our operator will show that it is \emph{not} a Rockland operator. The operator $D^{\F,E\otimes\wedge\T^*}$ will only differentiate along the $\F$-directions and thus locally will have Heisenberg order $1$, where as the terms with no derivatives will be order zero. Thus the highest order component of $\bD^{E}$ will be the component involving differentiation along the Reeb directions $T$ and its cosymbol at $x\in M$ is the operator
\[
    \widecheck{\sigma}_2\lp\bD^{E}\rp_x=\varepsilon^{\T}_{\theta_{\T_x}}T_x.
\]
However the element $T_x\in\mathfrak{t}_{\F}M_x$ is zero within any of the one-dimensional representations of $\H$. Our first method for determining the hypoellipticity of $\bD^E$ will be to look at its square
\[
  \Deltab^{E}:=\lp\bD^E\rp^2. 
\]
\begin{theorem}
\label{OGH}
    Suppose that $\lp M,\theta,g_{\F}\rp$ is a contact sub-Riemannian manifold and we have a $\bZ_2$-graded covariant Hermitian Clifford module
    \[
        \lp C\ell\lp\F^*,g^*_{\F}\rp,\nabla^{\F,0}\rp\circlearrowright\lp E,h_E,\nabla^E\rp
    \]
    Then the operator
    \begin{align*}
    \bD^E=D^{\F,E\otimes\wedge[\theta]}+\varepsilon^{\T}_{\theta_{\T}}\nabla^{E\otimes\wedge[\theta]}_T+\frac{1}{4}c\lp d\theta\rp\iota^{\T}_T+\frac{1}{4}\varepsilon^{\T}\circ\tr Lg_{\F}
\end{align*}
is hypoelliptic. If $M$ is closed, then $\Ker\bD^E$ is a finite dimensional space of smooth sections.
\end{theorem}
\begin{proof}
    We first make the observation that a differential operator $A$ is hypoelliptic if and only if $A^2$ is hypoelliptic. Turning to our operator $\bD^E$ we study its square $\Deltab^E$. Observe that by our global formula for $\bD^E$ we have
    \begin{align*}
        &\Deltab^E=\lp D^{\F,E\otimes\wedge[\theta]}\rp^2+
        \left[D^{\F,E\otimes\wedge[\theta]},\varepsilon^{\T}_{\theta_{\T}}\nabla^{E\otimes\wedge[\theta]}_T+\frac{1}{4}c\lp d\theta\rp\iota^{\T}_T+\frac{1}{4}\tr L_Tg_{\F}\varepsilon^{\T}_{\theta}\right]\\&\hspace{8cm}+\left[\varepsilon^{\T}_{\theta_{\T}}\nabla^{E\otimes\wedge[\theta]}_T,\frac{1}{4}c\lp d\theta\rp\iota^{\T}_T\right]+\frac{1}{16}\tr L_Tg_{\F}c\lp d\theta\rp
    \end{align*}
    where the commutators above are supercommutators of odd operators.
    Observe that the distribution $\F$ is left invariant under the flow of $T$ which implies that for any $X\in\Gamma\lp\F\rp$ one has $[T,X]\in\Gamma\lp \F\rp$. The presence of $\nabla^{E\otimes\wedge[\theta]}_T$ in the second commutator tells us that the co-symbol of $\Deltab^E$ at a given $x\in M$ will be at least of order $2$. Moreover, because $\theta_{\T}\in\Gamma\lp\T^*\rp$ is parallel, we have
    \[
        D^{\F,E\otimes\wedge[\theta]}\varepsilon^{\T}_{\theta_{\T}}=-\varepsilon^{\T}_{\theta_{\T}}D^{\F,E\otimes\wedge[\theta]}\Rightarrow \left[D^{\F,E\otimes\wedge[\theta]},\varepsilon^{\T}_{\theta_{\T}}\nabla^0_T\right]=-\varepsilon^{\T}_{\theta_{\T}}\left[D^{\F,E\otimes\wedge[\theta]},\nabla^{E\otimes\wedge[\theta]}_T\right].
    \]
    
    Thus all terms within the commutator involving $D^{\F,E\otimes\wedge[\theta]}$ have Heisenberg order $\leq1$ and we can discard the commutator involving $D^{\F,E\otimes\wedge[\theta]}$ and the zeroth order term $\tr L_Tg_{\F}c\lp d\theta\rp$. By Proposition \ref{WF} applied to $D^{\F,E\otimes\wedge[\theta]}$ we have
    \begin{align*}
        \Delta^{\F,E\otimes\wedge[\theta]}&=\lp\nabla^{E\otimes\wedge[\theta]}\Big|_{\F}\rp^*\nabla^{E\otimes\wedge[\theta]}\Big|_{\F}-c\circ\lp\nabla^{E\otimes\wedge[\theta]}_{\widetilde{\Omega}_{\F}}\rp +c\circ K^E
    \end{align*}
    We can simplify one operator above as
    \[
        c\circ\lp \nabla^{E\otimes\wedge[\theta]}_{\widetilde{\Omega}_{\F}}\rp=c\lp d\theta\rp\nabla^{E\otimes\wedge[\theta]}_T
    \]
    thus giving
    \begin{align*}
        \Delta^{\F,E\otimes\wedge[\theta]}&=\lp\nabla^{E\otimes\wedge[\theta]}\Big|_{\F}\rp^*\nabla^{E\otimes\wedge[\theta]}\Big|_{\F}-c\lp d\theta\rp\nabla^{E\otimes\wedge[\theta]}_T+c\circ K^E
    \end{align*}
   Collecting all terms within $\Deltab^E$ we see that at $x\in M$ the cosymbol is given by
   \begin{align*}
        \widecheck{\sigma}_2\lp \Deltab^E\rp_x&=\widecheck{\sigma}_2\lp \lp\nabla^{E\otimes\wedge[\theta]}\Big|_{\F}\rp^*\nabla^{E\otimes\wedge[\theta]}\Big|_{\F}-c\lp d\theta\rp\nabla^{E\otimes\wedge[\theta]}_T+\left[\varepsilon^{\T}_{\theta_{\T}}\nabla^{E\otimes\wedge[\theta]}_T,\frac{1}{4}c\lp d\theta\rp\iota^{\T}_T\right]\rp_x\\
        &=\widecheck{\sigma}_2\lp\lp\nabla^{E\otimes\wedge[\theta]}\Big|_{\F}\rp^*\nabla^{E\otimes\wedge[\theta]}\Big|_{\F}-\frac{3}{4}c\lp d\theta\rp\nabla^{E\otimes\wedge[\theta]}_T\rp_x
   \end{align*}
   where we used
   \[
    \widecheck{\sigma}_2\lp\left[\varepsilon^{\T}_{\theta_{\T}}\nabla^{E\otimes\wedge[\theta]}_T,\frac{1}{4}c\lp d\theta\rp\iota^{\T}_T\right]\rp_x=\widecheck{\sigma}_2\lp\frac{1}{4}c\lp d\theta\rp\nabla^{E\otimes\wedge[\theta]}_T\rp_x
   \]
    Now for a fixed $x\in M$ let $\{ u_j,v_j\}_{i=j}^n$ be an orthonormal basis of $\lp\F_x,g_{\F_x}\rp$ which diagonalizes $d\theta_x$
    \[
        d\theta_x\lp u_j,v_k\rp=\delta_{jk}\alpha_j,\quad \alpha_j>0.
    \]
    Let $\{ U_j,V_j\}_{i=j}^n$ be a local orthonormal framing of $\lp\F,g_{\F}\rp$ which extends the framing $\{ u_j,v_j\}_{j=1}^n$. Then at $x$ we have
    \[
        \left[U_j,U_k\right]_x=\left[V_j,V_k\right]_x=0,\quad \left[U_j,V_k\right]_x=-\delta_{jk}\alpha_j T_x+E_x,
    \]
    where $E_x\in\F_x$. The local form of the laplacian
    \[
        \lp\nabla^E\Big|_{\F}\rp^*\nabla^E\Big|_{\F}=-\delta^{jk}\lp\nabla^E_{U_j}\nabla^E_{U_k}-\nabla^E_{\nabla^{\F,0}_{U_j}U_k}+\nabla^E_{V_j}\nabla^E_{V_k}-\nabla^E_{\nabla^{\F,0}_{V_j}V_k}\rp
    \]
    demonstrates that the cosymbol of $\Deltab^E$ at $x$ is 
    \begin{align*}
        \widecheck{\sigma}_2\lp \Deltab^E\rp_x&=-\delta^{jk}\lp u_ju_k+v_jv_k\rp-\frac{3}{4}c\lp d\theta_x\rp T_x\\
        &=-\delta^{jk}\lp u_ju_k+v_jv_k-\frac{3}{4}\alpha_jc\lp\varphi_{u_j}\rp c\lp\varphi_{v_k}\rp T_x\rp
    \end{align*}
    We now must check that the above operator is injective in every irreducible unitary representation of $\mathfrak{t}_{\F}M_x$. First note that if we define $e_j=\frac{1}{\sqrt{\alpha_j}}v_j$ and $f_j=\frac{1}{\sqrt{\alpha_j}}u_j$, then we have
    \[
        \left[e_j,e_k\right]=\left[f_j,f_k\right]=0,\quad\left[e_j,f_k\right]=\delta_{jk}T_x,
    \]
    where the commutators are now taken according to the pointwise Lie algebra structure of $\t_{\F}M_x$.
    The cosymbol takes the form
    \[
        \widecheck{\sigma}_2\lp\Deltab^E\rp_x=\delta^{jk}\alpha_j\lp- e_je_k-f_jf_k-\frac{3}{4}c\lp\varphi_{u_j}\varphi_{v_k}\rp T_x\rp,
    \]
    and for $\vec{\xi}\in\bR^2\setminus{\vec{0}}$, the inflated finite dimensional representation gives
    \[
        \pi^{E\otimes\wedge[\theta]_x}_{\vec{\xi}}\lp\widecheck{\sigma}_2\lp\Deltab^E\rp_x\rp=(2\pi)^2\lp \sum_{i=j}^n\alpha_j\rp|\vec{\xi}|^2,
    \]
    which is injective on $E\otimes\wedge[\theta]_x$ for any $\vec{\xi}\neq 0$. For $\lambda\in\bR\setminus{0}$ we have
    \[
        \pi_{\lambda}^{E\otimes\wedge[\theta]_x}\lp\widecheck{\sigma}_2\lp\Deltab^E\rp_x\rp=\delta^{jk}\alpha_j\lp-\frac{\partial}{\partial q_j}\frac{\partial}{\partial q_k}+\lambda^2q_jq_k-\frac{3\lambda i}{4}c\lp \varphi_{u_j}\varphi_{v_k}\rp \rp.
    \]
    Observe that the operator $c\lp\varphi_{u_j}\varphi_{v_j}\rp$ squares to $-1$ and is skew-adjoint and thus has eigenvalues $\pm i$. With respect to a diagonalizing basis for $c\lp\varphi_{u_j}\varphi_{v_j}\rp$, the $j$-th term in the above sum is
    \[
        \alpha_j\lp -\frac{\partial^2}{\partial q_j^2}+\lambda^2q^2\pm\frac{3}{4}\lambda\rp
    \]
    The operator $-\frac{\partial^2}{\partial q_j^2}+\lambda^2q^2$ is bounded below by $|\lambda|$ and thus we have for a Schwartz class function $\psi$ with values in $E\otimes\wedge[\theta]_x$ one has
    \[
        \int_{\bR^n}h_{E\otimes\wedge[\theta]_x}\lp\alpha_j\lp-\frac{\partial^2}{\partial q_j^2}+\lambda^2q_j^2-\frac{3\lambda i}{4}c_{\F}\lp \varphi_{u_j}\varphi_{v_j}\rp\hat{\otimes}I\rp\psi,\psi\rp dq\geq \frac{|\lambda|\alpha_j}{4}\int_{\bR^n}h_{E\otimes\wedge[\theta]_x}\lp\psi,\psi\rp dq
    \]
    where $h_{E\otimes\wedge[\theta]}=h_{E}\otimes g_{\wedge[\theta]}$. Summing over $j$ gives
    \[
        \int_{\bR^n}h_{E\otimes\wedge[\theta]_x}\lp\pi_{\lambda}^{E\otimes\wedge[\theta]_x}\widecheck{\sigma}_2\lp\Deltab^E\rp_x\psi,\psi\rp dq\geq\frac{|\lambda|}{4}\lp\sum_{j=1}^n\alpha_j\rp\int_{\bR^n}h_{E\otimes\wedge[\theta]_x}\lp\psi,\psi\rp dq.
    \]
    Thus $\widecheck{\sigma}_2\lp\Deltab^E\rp_x$ is injective in every inflated irreducible unitary representation of $T_{\F}M_x$ for every $x\in M$. Thus $\Deltab^{\bE}$ is a Rockland operator and by Theorem \ref{OGRC} above, the operator $\Deltab^E$ is hypoelliptic implying that $\bD^E$ is hypoelliptic as well.\par
\end{proof}
\subsection{A subtlety within the choice of connection}
When working with spinors, the natural choice of connections $\nabla^{\Sc,u}$ grants a Rockland operator $\Deltab^{\Sc_{\F}}$ and the choice generalized to an arbitrary Clifford module. One can inquire about the dependence of the Rockland condition on the choice on Clifford compatible connection. We investigate the issue for the Clifford module given by $\wedge\F^*$ for in this case their are three natural choices of Clifford compatible connection and we will see that our choice is the only candidate which satisfies the Rockland condition. Recall that $\wedge\F^*$ is a \emph{bimodule} for the algebra bundle $C\ell\lp\F^*,g^*_{\F}\rp$ and we denote the left and right actions with superscripts $c^L$ and $c^R$. For the first case we take our covariant Clifford module to be $\wedge\F^*$ with the natural induced connection $\nabla^{\wedge\F^*,u}$ coming from $\nabla^{\F,u}$. Let us see the problem with this connection on the three dimensional Heisenberg group. 
\begin{example}
    Let us see what we have computed in the case of the $3$ dimensional Heisenberg group $\lp \H,\theta_{\H},g_{\F}\rp$. We let $Q,P$ be the standard contact vector fields satisfying
    \[
        [Q,P]=T,\quad g_{\F}\lp Q,P\rp=0,\quad g_{\F}\lp Q,Q\rp=g_{\F}\lp P,P\rp=1.
    \]
    We let $\{\varphi_Q,\varphi_P,\theta_{\H}\}$ denote the dual framing to $\{ Q,P,T\}$ and $\theta_{\H,\T}$ the restriction of $\theta_{\H}$ to $\T$. In this case, for $u>0$ if $A^u$ denotes the connection matrix for $\nabla^{\F,u}$ relative to the framing $\{ Q,P\}$ of $\F$ then
    \begin{align*}
        A^u&=\theta\otimes \begin{bmatrix}
            0 & u^{-1}/2 \\
            -u^{-1}/2 & 0
        \end{bmatrix},\\
        \omega_u&=\varphi_Q\otimes \frac{1}{2}u^{-1}\varphi_P\wedge\theta-\varphi_P\otimes\frac{1}{2}u^{-1}\varphi_Q\wedge\theta.
    \end{align*}
    In which case the one-parameter family of Dirac operators $D_u$ acting on $\Gamma\lp\wedge\F^*\otimes\wedge\T^*\rp$ is 
    \begin{align*}
        D_u=c^L&\lp\varphi_Q\rp\lp Q+\frac{1}{4}u^{-1}c^L\lp\varphi_P\rp\lp \varepsilon^{\T}_{\theta_{\H,\T}}-u\iota^{\T}_{T}\rp\rp+c^L\lp\varphi_P\rp\lp P-\frac{1}{4}u^{-1}c^L\lp\varphi_Q\rp\lp\varepsilon^{\T}_{\theta_{\H,\T}}-u\iota^{\T}_{T}\rp\rp\\
        &+\lp\varepsilon^{\T}_{\theta_{\H,\T}}-u\iota^{\T}_{T}\rp\lp T-\frac{1}{4}u^{-1}\lp c^L\lp\varphi_Q\rp c^L\lp\varphi_P\rp-c^R\lp\varphi_P\rp c^R\lp\varphi_Q\rp\rp\rp.
    \end{align*}
    Collecting the terms constant in $u$ gives the limiting operator $\bD=\frac{d}{du}\big|_{u=0}\lp uD_u\rp$ below
    \[
        \bD=c^L\lp\varphi_Q\rp Q+c^L\lp \varphi_P\rp P +\varepsilon^{\T}_{\theta_{\H,\T}}T-\frac{1}{4}\lp c^L\lp \varphi_Q\rp c^L\lp\varphi_P\rp+c^R\lp\varphi_P\rp c^R\lp \varphi_Q\rp\rp\iota^{\T}_{T},
    \]
    which when squared gives
    \[
      \Deltab=-Q^2-P^2+c^L\lp \varphi_Q\varphi_P\rp T-\frac{1}{4}\lp c^L\lp \varphi_Q\rp c^L\lp\varphi_P\rp+c^R\lp\varphi_P\rp c^R\lp \varphi_Q\rp\rp T+\O_{\Psi^{\theta}}\lp 1\rp. 
    \]
     The unfortunate feature of the above operator is that when we restrict to forms in $\wedge\F^*$ with odd degree the term
     \[
        c^L_{\F}\lp \varphi_Q\rp c^L_{\F}\lp\varphi_P\rp+c^R_{\F}\lp\varphi_P\rp c^R_{\F}\lp \varphi_Q\rp,
     \]
     vanishes, and thus at any $x\in \H$, relative to the splitting 
     \[
        \wedge\F^*\otimes\wedge\T^*=\wedge^{even}\F^*\otimes\wedge\T^*\oplus \wedge^{odd}\F^*\otimes\wedge\T^*
     \]
     the symbol of $\Deltab$ takes the block diagonal form
     \[
        \widecheck{\sigma}_2\lp\Deltab\rp_x=\begin{bmatrix}
            -Q^2-P^2+\frac{1}{2}c^L_{\F}\lp\varphi_Q\varphi_P\rp T & 0 \\
            0 & -Q^2-P^2+ c^L_{\F}\lp\varphi_Q\varphi_P\rp T 
        \end{bmatrix}, 
     \]
     and the lower block is \emph{not} Rockland in the Heisenberg calculus.\par\qed
\end{example}
While the above calculation is quite depressing, we can take a step back and see what went wrong. If $A^{\F}\in\Omega^1\lp P_O\lp\F,g_{\F}\rp,\o(2n)\rp$ denotes the connection $1$-form, then we have that the induced connection $1$-form on $\wedge\F^*$ is given by 
\[
    A^{\wedge\F^*}=\frac{1}{2}[A^{\F},\cdot],
\]
where the bracket is with respect to Clifford multiplication on $C\ell\lp\bR^n\rp$. This brought in the pesky $c^R_{\F}\lp\varphi_P\rp c^R_{\F}\lp\varphi_Q\rp$ above. It is here that we have to remember that there is a bit of freedom in our choice of covariant Clifford structure $\nabla^{E,u}$ as well as perturbation $\nabla^{\bE,u}$ for $E\otimes\wedge\T^*$. We see here that there is another natural choice for $\nabla^{\bE,u}$. This comes from the observation that the above connection $\nabla^u$ on $\wedge\F^*\hat{\otimes}\wedge\T^*$ is \emph{not} the Levi-Civita connection on $\wedge\F^*\hat{\otimes}\wedge\T^*\simeq\wedge T^*M$. To obtain the Levi-Civita connection we would need to add in the term $-\frac{1}{2}\omega^R_u$ to $\nabla^u$. However, in this case the addition of extra right multiplication does not remedy the issue of being Rockland as we show below.
\begin{example}
    If we instead add in the right multiplication by $-\frac{1}{2}\omega_u^R$, we obtain for $u>0$
    \begin{align*}
        D_u=c^L&\lp\varphi_Q\rp\lp Q+\frac{1}{4}u^{-1}\lp c^L\lp\varphi_P\rp\lp \varepsilon^{\T,L}_{\theta_{\H,\T}}-u\iota^L_T\rp-\lp\varepsilon^{\T,R}_{\theta_{\H,\T}}-u\iota^{\T,R}_T\rp c^R\lp\varphi_P\rp\rp \rp\\
        &+c^L\lp\varphi_P\rp\lp P-\frac{1}{4}u^{-1}\lp c^L\lp\varphi_Q\rp\lp\varepsilon^{\T,L}_{\theta_{\H,\T}}-u\iota^{\T,L}_T\rp-\lp\varepsilon^{\T,R}_{\theta_{\H,\T}}-u\iota^{\T,R}_T\rp c^R\lp\varphi_Q\rp\rp\rp\\
        &\quad+\lp\varepsilon^{\T,L}_{\theta_{\H,\T}}-u\iota^{\T,L}_T\rp\lp T-\frac{1}{4}u^{-1}\lp c^L\lp\varphi_Q\rp c^L\lp\varphi_P\rp-c^R\lp\varphi_P\rp c^R\lp\varphi_Q\rp\rp\rp.
    \end{align*}
    Taking the finite part gives
    \begin{align*}
        \bD=c^L&\lp\varphi_Q\rp\lp Q-\frac{1}{4}\lp c^L\lp\varphi_P\rp\iota^{\T,L}_T-\iota^{\T,R}_T c^R\lp\varphi_P\rp\rp \rp
        +c^L\lp\varphi_P\rp\lp P+\frac{1}{4}\lp c^L\lp\varphi_Q\rp\iota^{\T,L}_T-\iota^{\T,R}_T c^R\lp\varphi_Q\rp\rp\rp\\
        &\quad+\varepsilon^{\T,L}_{\theta_{\H,\T}}T-\frac{1}{4}\lp c^L\lp \varphi_Q\rp c^L\lp\varphi_P\rp-c^R\lp\varphi_P\rp c^R\lp \varphi_Q\rp\rp\iota^{\T,L}_T\\
        &=c^L\lp \varphi_Q\rp Q+c^L\lp\varphi_P\rp P+\varepsilon^{\T,L}_{\theta_{\H,\T}}T\\
        &\quad-\frac{1}{4}\lp c^L\lp \varphi_Q\rp c^L\lp\varphi_P\rp+c^R\lp\varphi_P\rp c^R\lp \varphi_Q\rp\rp\iota^{\T,L}_T+\frac{1}{4}\lp c^L\lp\varphi_Q\rp\iota^{\T,R}_Tc^R\lp\varphi_P\rp-c^L\lp\varphi_P\rp\iota^{\T,R}_Tc^R\lp\varphi_Q\rp \rp
    \end{align*}
    which when squared gives 
    \begin{align*}
        \Deltab=&-Q^2-P^2+c^L\lp\varphi_Q\varphi_P\rp T-\frac{1}{4}\lp c^L\lp \varphi_Q\rp c^L\lp\varphi_P\rp+c^R\lp\varphi_P\rp c^R\lp \varphi_Q\rp\rp T\\
        &\quad +\frac{1}{4}[\varepsilon^{\T,L}_{\theta_{\H,\T}}, c^L\lp\varphi_Q\rp\iota^{\T,R}_Tc^R\lp\varphi_P\rp-c^L\lp\varphi_P\rp\iota^{\T,R}_Tc^R\lp\varphi_Q\rp]T+\O_{\Psi^{\theta}}\lp1\rp\\
        &=-Q^2-P^2+c^L\lp\varphi_Q\varphi_P\rp T-\frac{1}{4}\lp c^L\lp \varphi_Q\rp c^L\lp\varphi_P\rp+c^R\lp\varphi_P\rp c^R\lp \varphi_Q\rp\rp T\\
        &\quad +\frac{(-1)^H}{4}\lp c^L\lp\varphi_Q\rp c^R\lp\varphi_P\rp-c^L\lp\varphi_P\rp c^R\lp\varphi_Q\rp\rp T+\O_{\Psi^{\theta}}\lp1\rp
    \end{align*}
    where $(-1)^H$ is the horizontal form degree operator on $\wedge\F^*$. Once again we run into the same issue, which is that both terms
    \[
        c^L\lp \varphi_Q\rp c^L\lp\varphi_P\rp+c^R\lp\varphi_P\rp c^R\lp \varphi_Q\rp,\quad c^L\lp\varphi_Q\rp c^R\lp\varphi_P\rp-c^L\lp\varphi_P\rp c^R\lp\varphi_Q\rp
    \]
    vanish on odd elements of $\wedge\F^*$, leaving up to highest order the term $-Q^2-P^2+c^L_{\F}\lp \varphi_Q\varphi_P\rp T$ which isn't Rockland.\par\qed
\end{example}
From the above example, we see that we may rule out the Levi-Civita connection on $\wedge\F^*\otimes\wedge\T^*\simeq\wedge T^*M$ as a choice of Clifford compatible connection $\nabla^{\bE,u}$. 
\begin{remark}
    When one selects $\nabla^{\wedge\F^*,u}$ as the Clifford connection and adds in $-\frac{1}{2}\omega^R_u$, then the Dirac operator $D_u$ is the de Rham-Dirac operator $d_M+d_M^{*,u}$ on $\Omega^*\lp M\rp$, where the adjoint is taken relative to $g_{M,u}$. One may be a bit suspicious that the construction does not behave well for this one-parameter family of Dirac operators. However, within the framework of the families index theorem, when one takes the vertical Dirac operators $D^{\V}$ to be the de Rham-Dirac operators of the fibers, then the one-parameter family of Dirac operators $D_u\circlearrowright\Gamma\lp\wedge\V^*\hat{\otimes}\pi^*\wedge T^*B\rp$ are \emph{not} the de Rham-Dirac operators corresponding to the metrics $g_{\Mc,u}$ on $\Mc$. As we have not done the calculation for ourselves, it is unclear how the one-parameter family of de Rham-Dirac operators $d_{\Mc}+d_{\Mc}^{*,u}$ will behave with regards to local index theoretic considerations. 
\end{remark}
\subsection{The Graded Rockland Condition}
One may find it a bit bizarre that the operator $\bD^E$ is \emph{not} Rockland within the Heisenberg calculus but its square $\Deltab^E$ \emph{is}. There is a remedy here which is to use the notion of \emph{graded bundles} and \emph{graded symbol calculus} as described in \cite{DH}. If one is equipped with a graded vector bundle $G=\bigoplus _{j=0}^kG^k\rightarrow M$, one can express a differential operator $L:\Gamma\lp G\rp\rightarrow\Gamma\lp G\rp$ in matrix form as
\[
    L=\begin{bmatrix}
        L_{00} & \cdots & L_{0k} \\
        \vdots & & \vdots \\
        L_{k0} & \cdots & L_{kk}
    \end{bmatrix}
\]
where $L_{pq}:\Gamma\lp G^q\rp\rightarrow \Gamma\lp G^p\rp$. The operator $L$ is said to have \textbf{\emph{graded Heisenberg order $l$}} if each $L_{pq}$ has Heisenberg order $l+p-q$ and the \textbf{\emph{graded Heisenberg cosymbol of $L$ at $x\in M$}} is defined to be the matrix of co-symbols
\[
    \widecheck{\sigma}^G_l\lp L\rp_x:=\begin{bmatrix}
        \widecheck{\sigma}_l\lp L_{00}\rp_x &\cdots & \widecheck{\sigma}_{l-k}\lp L_{0k}\rp_x \\
        \vdots & & \vdots \\
        \widecheck{\sigma}_{l+k}\lp L_{k0}\rp_x & \cdots & \widecheck{\sigma}_l\lp L_{kk}\rp_x
    \end{bmatrix}.
\]
Note that at each $x\in M$ the graded cosymbol gives an element $\widecheck{\sigma}^G_l\lp L\rp_x\in\U\lp\t_{\F}M_x\rp\otimes \End G_x$. Moreover the graded Heiseneberg order satisfies the important composition property: If $L_1$ has graded Heisenberg order $l_1$ and $L_2$ has graded Heisenberg order $l_2$ then the composition $L_1L_2$ has graded Heisenberg order $l_1+l_2$ and at each $x\in M$ the cosymbol satisfies
\[
    \widecheck{\sigma}^G_{l_1+l_2}\lp L_1L_2\rp_x=\widecheck{\sigma}^G_{l_1}\lp L_1\rp_x\widecheck{\sigma}^G_{l_2}\lp L_2\rp_x
\]
where when taking products of cosymbols we are using the tensor product of the natural product structures on $\U\lp\t_{\F}M_x\rp$ and $\End G_x$.\par
For a given irreducible unitary representation $\Pi_x:T_{\F}M_x\rightarrow U\lp H\rp$ one can inflate by $G_x$ and obtain an operator
\[
    \pi_x^{G_x}\lp \widecheck{\sigma}^G_l\lp L\rp_x\rp:H^{\infty}\otimes G_x\rightarrow H^{\infty}\otimes G_x.
\]
\begin{theorem}[Dave-Haller \cite{DH}]
\label{GRC}
    Suppose that $\lp M,\F\rp$ is a $m$-step filtered manifold and $G=\bigoplus_{j=1}^kG^j\rightarrow M$ a graded complex vector bundle with differential operator $L$ of graded Heisenberg order $l$. Then $L$ is hypoelliptic if for every $x\in M$ and every irreducible unitary representation $\Pi_x:T_{\F}M_x\rightarrow U(H)$, the induced operator on smooth vectors of the inflated representation $\pi^{E_x}_x:\mathfrak{t}_{\F}M_x\otimes \gl\lp E_x\rp\rightarrow \gl\lp H^{\infty}\otimes E_x\rp$ 
    \[
        \pi^{E_x}_x\lp\widecheck{\sigma}^G_l\lp L\rp_x\rp:H^{\infty}\otimes E_x\rightarrow H^{\infty}\otimes E_x
    \]
    is injective. If $M$ is closed, then $\Ker L$ is a finite dimensional space of smooth sections.
\end{theorem}
It will be a benefit for us to slightly generalize the setting from contact manifolds $\lp M,\theta\rp$ to filtered manifolds $\lp M,\F\rp$ for which $\Gamma\lp\F\rp+[\Gamma\lp\F\rp,\Gamma\lp\F\rp]=\Gamma\lp TM\rp$, i.e. $2$-step filtrations $\F$ of $TM$. For a given $\bZ_2$-graded covariant Hermitian Clifford module
\[
    \lp C\ell\lp\F^*,g^*_{\F}\rp,\nabla^{\F,0}\rp\circlearrowright\lp E,h_E,\nabla^E\rp
\]
we assign elements of $E\otimes \wedge^k\T^*$ to have degree $k$. At a given $x\in M$ the osculating Lie algebra $\t_{\F}M_x$ is a two-step nilpotent Lie algebra $\t_{\F}M_x=\t^1_{\F}M_x\oplus\t^2_{\F}M_x$ and the splitting $TM=\F\oplus\T$ allows us to make the identifications $\F_x\simeq\t^1_{\F}M_x$ and $\T_x\simeq\t^2_{\F}M_x$ and equip $\t^1_{\F}M_x$ and $\t^2_{\F}M_x$ with the metrics $g_{\F_x}$ and $g_{\T_x}$, respectively. One can view $T_{\F}M_x$ as a manifold in its own right and use the subspaces $\t^1_{\F}M_x$ and $\t^2_{\F}M_x$, along with their metrics $g_{\F_x}$ and $g_{\T_x}$, to generate left-invariant transverse Euclidean distributions $TT_{\F}M_x=\t^1_{\F}M^L_x\oplus \t^2_{\F}M_x^L$. One can compute that under a left-invariant global framing of $\t^1_{\F}M^L_x$ and $\t^2_{\F}M^L_x$ the connections $\nabla^{\F,0}$ and $\nabla^{\T,0}$ will be the canonical flat connections. We can left translate the Clifford action 
\[
    C\ell\lp \F^*_x,g^*_{\F_x}\rp\circlearrowright\lp E_x,h_{E_x}\rp
\]
and denoting $E_x^L\rightarrow T_{\F}M_x$ the trivial bundle, the flat connection $d^{E_x^L}$ gives rise to a $\bZ_2$-graded covariant Hermitian Clifford module
\[
    \lp C\ell\lp\t^1_{\F}M^*_x,g^*_{\t^1_{\F}M^L_x}\rp,\nabla^{\F,0}\rp\circlearrowright\lp E_x^L,h_{E_x},d^{E^L_x}\rp
\]
and thus we obtain an associated asymptotic Bismut superconnection which we denote by $\bD^{E^L_x}$.
\begin{proposition}
\label{LFD}
    At a given $x\in M$, let $\{e_i\}_{i=1}^{n_1}$ be an orthonormal basis of $\lp\F_x,g_{\F_x}\rp$ with dual framing $\{\varphi^i\}_{i=1}^{n_1}$ and left translation $\{ E_i\}_{i=1}^{n_1}\subseteq\Gamma^L\lp \t^1_{\F}M_x^L\rp$, and let $\{f_{\mu}\}_{\mu=1}^{n_2}$ be an orthonormal basis of $\lp\T_x,g_{\T_x}\rp$ with dual framing $\{\psi^{\mu}\}_{\mu=1}^{n_2}$ and left translation $\{ F_{\mu}\}_{\mu=1}^{n_2}\subseteq \Gamma\lp\t^2_{\F}M_x^L\rp$. Then the asymptotic Bismut superconnection $\bD^{E^L_x}$ acting on $C^{\infty}\lp T_{\F}M_x,E_x\otimes\wedge\T^*_x\rp$ is given by
    \[
        \bD^{E^L_x}=c\lp\varphi^i\rp E_i+\varepsilon^{\T_x}_{\psi^{\mu}}F_{\mu}+\frac{1}{4}\iota^{\T_x}_{f_{\mu}}c\lp\iota_{F_{\mu}}\Omega_{\t^1_{\F}M_x^L}\rp
    \]
\end{proposition}
\begin{proof}
    Taking a look at the global form of the asymptotic Bismut superconnection provided in Proposition \ref{GFD}, the terms involving Lie derivatives of $g_{\F_x}$ and $g_{\T_x}$ do not appear since they are left-invariant. The tensor $\Omega_{\t^2_{\F}M^L_x}$ also vanishes due to the two-step nilpotency of $\t_{\F}M_x$. Thus, the global formula takes the form
    \[
        \bD^{E^L_x}=D^{\t^1_{\F}M_x^L\otimes\wedge\t^2_{\F}M^L_x}+d^{\t^2_{\F}M^L_x}_{d^{E^L_x}}+\frac{1}{4}\iota^{\t^2_{\F}M^L_x}\circ c\lp\Omega_{\t^1_{\F}M_x^L}\rp.
    \]
    Taking the global framing as described and using the fact that both $\nabla^{\F,0}$ and $\nabla^{\T,0}$ are flat when trivialized along a left-invariant framing of $\F$ and $\T$ proves the proposition.
\end{proof}
Using the graded Heisenberg calculus of $E\otimes\wedge\T^*$ we are able to capture the correct micro-local model operator of $\bD^E$.
\begin{theorem}
\label{MLMD}
    Suppose that $\lp M,\F,g_{\F}\rp$ is a $2$-step filtered sub-Riemannian manifold and $\lp\T,g_{\T}\rp$ is a Euclidean distribution which is transverse $TM=\F\oplus\T$. Then for a given $\bZ_2$-graded covariant Hermitian Clifford module
    \[
        \lp C\ell\lp\F^*,g^*_{\F}\rp,\nabla^{\F,0}\rp\circlearrowright\lp E,h_E,\nabla^E\rp
    \]
    the asymptotic Bismut superconnection $\bD^E$ has graded Heisenberg order $1$. Moreover, at any $x\in M$ one has
    \[
        \widecheck{\sigma}^G_1\lp\bD^E\rp_x=\bD^{E^L_x}.
    \]
\end{theorem}
\begin{proof}
The global formula of Proposition \ref{GFD} gives us
\[
        \bD^E=D^{\F,E\otimes\wedge\T^*}+d^{\T}_{\nabla^E}+\frac{1}{4}\iota^{\T}\circ c\lp\Omega_{\F}\rp+\frac{1}{4}\varepsilon^{\T}\circ\tr Lg_{\F}-\frac{1}{2}c\circ\varepsilon^{\T}_{\Omega_{\T}}+\frac{1}{4}c\circ\tr Lg_{\T}.
\]
We will analyze each operator separately. Fix a local orthonormal framing $\{e_i\}_{i=1}^{n_1}$ of $\lp\F,g_{\F}\rp$ with dual framing $\{\varphi^i\}_{i=1}^{n_1}$ and an orthonormal framing $\{ f_{\mu}\}_{\mu=1}^{n_2}$ of $\lp\T,g_{\T}\rp$ with dual framing $\{\psi^{\mu}\}_{\mu=1}^{n_2}$ and note that $D^{\F,E\otimes\wedge\T^*}$ takes the local form
\[
    D^{\F,E\otimes\wedge\T^*}=c\lp\varphi^i\rp\lp\nabla^E_{e_i}\otimes I+I\otimes\nabla^{\T,0}_{e_i}\rp.
\]
All operators in the above expansion preserve the grading degree of $E\otimes\wedge\T^*$ and the differential operator components have Heisenberg order $1$ whereas the $c\lp\varphi^i\rp$ have Heisenberg order $0$. Thus $D^{\F,E\otimes\wedge\T^*}$ has graded Heisenberg order $1$ and at $x\in M$ one has
\[
    \widecheck{\sigma}^G_1\lp D^{\F,E\otimes\wedge\T^*}\rp_x=c\lp \varphi^i_x\rp [e_i]_x
\]
where $[e_i]_x\in\t^1_{\F}M_x$ denotes the projection of $e_i$. But when identifying $[e_i]_x$ with the induced left-invariant vector field on $T_{\F}M_x$ we obtain
\[
    \widecheck{\sigma}^G_1\lp D^{\F,E\otimes\wedge\T^*}\rp_x=c\lp\varphi^i_x\rp E_i.
\]
The operator $d^{\T}_{\nabla^E}$ takes the local form
\[
    d^{\T}_{\nabla^E}=\varepsilon^{\T}_{\psi^{\mu}}\lp\nabla^E_{f_{\mu}}\otimes I+I\otimes\nabla^{\T,0}_{f_{\mu}}\rp.
\]
For each $0\leq k\leq n_2$, one has
\[
    d^{\T}_{\nabla^E}\Gamma\lp E\otimes\wedge^k\T^*\rp\subseteq\Gamma\lp E\otimes\wedge^{k+1}\T^*\rp
\]
and each covariant derivative operator in $d^{\T}_{\nabla^E}$ has Heisenberg order $2$. Thus since $d^{\T}_{\nabla^E}$ has Heisenberg differential order $2$ and shifts the grading degree of each factor within $E\otimes\wedge\T^*$ by $1$ we see that $d^{\T}_{\nabla^E}$ has graded Heisenberg order $1$ and that its graded Heisenberg symbol at $x\in M$ is
\[
    \widecheck{\sigma}^G_1\lp d^{\T}_{\nabla^E}\rp_x=\varepsilon^{\T_x}_{\psi_x^{\mu}}[f_{\mu}]_x
\]
where $[f_{\mu}]_x\in\t^2_{\F}M_x$ denotes the projection of $f_{\mu}$. But when identifying $[f_{\mu}]_x$ with the induced left-invariant vector field on $T_{\F}M_x$ we obtain
\[
    \widecheck{\sigma}\lp d^{\T}_{\nabla^E}\rp_x=\varepsilon^{\T_x}_{\psi^{\mu}_x}F_{\mu}.
\]
The operator $\iota^{\T}\circ c\lp\Omega_{\F}\rp$ takes the local form
\[
    \iota^{\T}\circ c\lp\Omega_{\F}\rp=\iota^{\T}_{f_{\mu}}\frac{1}{2}c\lp\varphi^i\rp c\lp\varphi^j\rp\iota_{f_{\mu}}\Omega_{\F}\lp e_i,e_j\rp.
\]
which demonstrates that for each $0\leq k\leq n_2$ one has
\[
    \iota^{\T}\circ c\lp \Omega_{\F}\rp\Gamma\lp E\otimes\wedge^k\T^*\rp\subseteq\Gamma\lp E\otimes\wedge^{k-1}\T^*\rp
\]
and all operators within the local form of $\iota^{\T}\circ c\lp\Omega_{\F}\rp$ are have Heisenberg order $0$. Thus since $\iota^{\T}\circ c\lp\Omega_{\F}\rp$ has Heisenberg order $0$ and shifts the grading degree of each factor within $E\otimes\wedge\T^*$ by $-1$ we see that $\iota^{\T}\circ c\lp\Omega_{\F}\rp$ has graded Heisenberg order $1$ and that its graded Heisenberg symbol at $x\in M$ is
\[
    \widecheck{\sigma}^G_1\lp\iota^{\T}\circ c\lp\Omega_{\F}\rp\rp_x=\iota^{\T_x}_{f_{\mu,x}}\frac{1}{2}c\lp\varphi^i_x\rp c\lp\varphi^j_x\rp\iota_{f_{\mu,x}}\Omega_{\F}\lp e_{i,x},e_{j,x}\rp.
\]
However, by definition one has that
\begin{align*}
    \iota_{f_{\mu,x}}\Omega_{\F}\lp e_{i,x},e_{j,x}\rp&=g_{\T_x}\lp f_{\mu,x},[e_j,e_i]^{\T}_x\rp\\
    &=g_{\t^2_{\F}M_x}\lp f_{\mu,x},[e_{j,x},e_{i,x}]\rp\\
    &=\iota_{f_{\mu,x}}\Omega_{\t^1_{\F}M_x}\lp e_{i,x},e_{j,x}\rp
\end{align*}
where the bracket in the second line is the Lie bracket within the Lie algebra $\t_{\F}M_x$. Lastly, the operators $\varepsilon^{\T}\circ \tr Lg_{\F}$, $c\circ\varepsilon^{\T}_{\Omega_{\T}}$, and $c\circ\tr L g_{\T}$ are all zeroth order Heisenberg differential operators which either preserve the degree or positively shift the degree of $E\otimes\wedge\T^*$. Therefore their graded Heisenberg symbol of degree $1$ vanishes. Putting all of the calculations together we have found
\begin{align*}
    \widecheck{\sigma}^G_1\lp\bD^E\rp_x&=\widecheck{\sigma}^G_1\lp D^{\F,E\otimes\wedge\T^*}+d^{\T}_{\nabla^E}+\frac{1}{4}\iota^{\T}\circ c\lp\Omega_{\F}\rp\rp_x\\
    &=c\lp\varphi^i_x\rp E_i+\varepsilon^{\T_x}_{\psi^{\mu}_x}F_{\mu}+\frac{1}{4}\iota^{\T_x}_{f_{\mu}}c\lp\iota_{F_{\mu}}\Omega_{\t^1_{\F}M_x^L}\rp\\
    &=\bD^{E^L_x}.
\end{align*}
\end{proof}
The question of whether or not $\bD^E$ is a graded Rockland operator now becomes a question of how the simpler operator $\bD^{E^L_x}$ behaves on the simply connected nilpotent Lie group $T_{\F}M_x$ for each $x\in M$. The fact that we have assumed that the manifold $\lp M,\F\rp$ is two-step filtered implies that for each $x\in M$ the osculating Lie group $T_{\F}M_x$ will be a two-step graded simply connected nilpotent Lie group, and as $x\in M$ varies, the group structure of $T_{\F}M_x$ can vary. However, in the two-step case this variation can be simply understood. First observe that at each $x\in M$ the bracket
\[
    B_x:\t^1_{\F}M_x\times \t^1_{\F}M_x\rightarrow\t^2_{\F}M_x
\]
induces a linear map $B_x:\wedge^2\t^1_{\F}M_x\rightarrow\t^2_{\F}M_x$ which must be \emph{surjective} due to the two-step non-integrability of $\F$, and in fact the smooth variation of the Lie algebraic structure of $\t_{\F}M$ is determined entirely by the smooth variation of the field of bilinear forms $B\in\Gamma\lp\wedge^2\t^1_{\F}M^*\otimes\t^2_{\F}M\rp$. Using Kirillov's orbit method one can compute all irreducible unitary representations of $T_{\F}M_x$ for each $x\in M$ and we record the representations below. To set up notation, if $W$ is a finite dimensional $\bR$ vector space and $w\in W$ we let $E_w:W^*\rightarrow \bR$ be the evaluation functional on $W^*$ and we let $\partial{\tau}$ denote the directional derivative operator on $C^{\infty}\lp W^*,\bR\rp$ determined by the vector $\tau\in W^*$.
\begin{proposition}
    \label{TNG}
    L $\n=\n^1\oplus\n^2$ be a two-step nilpotent Lie algebra, let 
    \[
        B:\n^1\times\n^1\rightarrow\n^2
    \]
    be the restriction of the Lie bracket on $\n$ to the subspace $\n^1$, and suppose that the induced linear map $B:\wedge^2\n^2\rightarrow\n^2$ is surjective. Then the irreducible unitary representations of the simply connected Lie group $N=\exp\n$ fall into two classes.
    \begin{itemize}
        \item[(I)] For a non-zero $\eta\in\n^{1,*}$ one has the representation
        \begin{align*}
            \pi_{\eta}:\n&\rightarrow \bC\\
            \lp x,a\rp&\mapsto \eta\lp x\rp
        \end{align*}
        and these exhaust all possible finite dimensional irreducible unitary representations of $N$.
        \item[(II)] For a non-zero $\tau\in\n^{2,*}$ the form 
        \[
            \tau B:\n^1\times\n^1\rightarrow\bR
        \]
        gives a splitting $\n^1=I_{\tau B}\oplus Q_{\tau B}\oplus P_{\tau B}$ for which $\tau B\big|_{I_{\tau B}}\equiv 0$, $\tau B\big|_{Q_{\tau B}\oplus P_{\tau B}}$ is symplectic, and $Q_{\tau B}$ is Lagrangian. Let $\eta\in I^*_{\tau B}$ and $\S\lp Q^*_{\tau B}\rp$ denote the Schwartz class functions on $Q^*_{\tau B}$. Expressing a vector $x\in\n^1$ as $x=n+q+p$ consistent with the splitting $\n^1=I_{\tau B}\oplus Q_{\tau B}\oplus P_{\tau B}$, one has the representation
        \begin{align*}
            \pi_{\eta+\tau}:\n&\rightarrow \gl\lp \S\lp Q^*_{\tau B}\rp\rp\\
            \lp n+q+p,a\rp&\mapsto i\eta\lp n\rp+ \frac{1}{i}E_q+\partial_{\tau B\lp\cdot,p\rp}+i\tau\lp a\rp
        \end{align*}
        and these exhaust all possible infinite dimensional unitary representations of $N$.
    \end{itemize}
\end{proposition}
Combining Theorem \ref{MLMD} with Proposition \ref{TNG} we have a method of checking the pointwise graded Rockland condition and apply Theorem \ref{GRC}.
\begin{theorem}
\label{TSRC}
    Suppose that $\lp M,\F,g_{\F}\rp$ is a two-step filtered sub-Riemannian manifold and $\lp\T,g_{\T}\rp$ is a Euclidean distribution which is transverse $TM=\F\oplus\T$. Then for a given $\bZ_2$-graded covariant Hermitian Clifford module
    \[
        \lp C\ell\lp\F^*,g^*_{\F}\rp,\nabla^{\F,0}\rp\circlearrowright\lp E,h_E,\nabla^E\rp
    \]
    the asymptotic Bismut superconnection $\bD^E$ is a graded Rockland operator of order $1$. Moreover, if $M$ is closed then $\Ker\bD^E$ is a finite dimensional subspace of $\Gamma\lp E\otimes\wedge\T^*\rp$.
\end{theorem}
\begin{proof}
    By Theorem \ref{MLMD} we need only check the graded Rockland condition for the operators $\bD^{E^L_x}$ for each $x\in M$. For a non-zero $\eta\in\t^1_{\F}M^*_x$ with $g_{\F_x}$-dual $n\in\t^1_{\F}M_x$, we let $\pi_{\eta}$ denote the corresponding irreducible representation of $\t_{\F}M_x$ as given in Proposition \ref{TNG}. If we inflate the representation by $E_x\otimes\wedge\T^*_x$, then the global formula in Proposition \ref{GFD} evaluates
    \[
        \pi^{E_x\otimes\wedge\T^*_x}_{\eta}\lp\bD^{E^L_x}\rp=c\lp n\rp+\frac{1}{4}\iota^{\T}\circ c\lp \Omega_{\F_x}\rp\circlearrowright E_x\otimes\wedge\T^*_x.
    \]
    The operator $c\lp n\rp$ is invertible and the operator $\iota^{\T}\circ c\lp\Omega_{\F_x}\rp$ is nilpotent and the product $c\lp n\rp\iota^{\T}\circ c\lp\Omega_{\F_x}\rp$ is nilpotent, thus the sum is invertible for any non-zero $\eta\in\t^1_{\F}M_x^*$.\par
    For a given $\tau\in\t^2_{\F}M_x^*$ with $g_{\T_x}$-dual $t\in\t^2_{\F}M_x$, note that
    \[
        \tau B_x=\iota_t\Omega_{\F_x}.
    \]
    Let $\{ n_j\}_{j=1}^m$, $\{ q_k,p_k\}_{k=1}^{\lp n_1-m\rp/2}$ be a $g_{\F_x}$-orthonormal basis which diagonalizes the form $\iota_t\Omega_{\F_x}$
    \[
        \iota_t\Omega_{\F_x}\big|_{\text{Span}\{ n_j\}}\equiv 0,\quad \iota_t\Omega_{\F_x}\lp q_k,p_l\rp=\delta_{kl}\alpha_k,\quad \alpha_k>0.
    \]
    By defining $I_{\tau B}:=\text{Span}\{ n_j\}_{j=1}^m$, $Q_{\tau B}:=\text{Span}\{ q_k\}_{k=1}^{\lp n_1-m\rp/2}$, and $P_{\tau B}:=\text{Span}\{ p_k\}_{k=1}^{\lp n_1-m\rp/2}$ we obtain a splitting
    \[
        \t^1_{\F}M_x=I_{\tau B}\oplus Q_{\tau B}\oplus P_{\tau B}
    \]
    which is consistent with the splitting required of $\t^1_{\F}M_x$ in Proposition \ref{TNG} (II). Let $\eta\in I^*_{\tau B}$ and let $\pi_{\eta+\tau}:\t_{\F}M_x\rightarrow \S\lp Q^*_{\tau B}\rp$ denote the corresponding irreducible representation as given in Proposition \ref{TNG}. Let $\{ N_j\}_{j=1}^m$ and $\{ Q_k,P_k\}_{k=1}^{\lp n_1-m\rp/2}$ denote the induced left-invariant of $\t^1_{\F}M^L_x$ and choose an arbitrary left-invariant orthonormal framing $\{ F_{\mu}\}_{\mu=1}^{n_2}$ of $\t^2_{\F}M^L_x$. Then by Proposition \ref{LFD} one has 
    \[
        \bD^{E^L_x}=c\lp\varphi_{n_j}\rp N_j+c\lp\varphi_{q_k}\rp Q_k+c\lp\varphi_{p_l}\rp P_l+\varepsilon^{\T_x}_{\psi^{\mu}}F_{\mu}+\frac{1}{4}\iota^{\T_x}_{f_{\mu}}c\lp\iota_{F_{\mu}}\Omega_{\t^1_{\F}M_x^L}\rp
    \]
    Denoting $\Deltab^{E^L_x}=\lp\bD^{E^L_x}\rp^2$ we have
    \begin{align*}
        &\Deltab^{E^L_x}=-\delta^{ij}\lp N_iN_j+Q_iQ_j+P_iP_j\rp +c\lp\varphi_{n_j}\rp c\lp \varphi_{q_k}\rp[N_j,Q_k]+c\lp\varphi_{n_j}\rp c\lp\varphi_{p_l}\rp[N_j,P_l]\\
        &\hspace{5cm}+c\lp\varphi_{q_k}\rp c\lp\varphi_{p_l}\rp[Q_k,P_l]+\frac{1}{4}F_{\mu}c\lp\iota_{F_{\mu}}\Omega_{\t^1_{\F}M^L_x}\rp.
    \end{align*}
    Note that one has
    \begin{align*}
        [Q_k,P_l]&=F_{\mu}g_{\T_x}\lp F_{\mu},[Q_k,P_l]\rp\\
        &=-F_{\mu}\iota_{F_{\mu}}\Omega_{\t^1_{\F}M^L_x}\lp Q_k,P_l\rp.
    \end{align*}
    thus giving the evaluation
    \begin{align*}
        &\pi^{E\otimes\wedge\T^*_x}_{\eta+\tau}\lp \Deltab^{E^L_x}\rp=\delta^{ij}\eta\lp N_i\rp\eta\lp N_j\rp+\delta^{kl}\lp- \partial_{\beta B\lp\cdot,p_k\rp}\partial_{\beta B\lp\cdot,p_l\rp}+E_{q_k}E_{q_l}\rp-\frac{3}{4} c\lp\iota_t\Omega_{\F_x}\rp\\
        &=\delta^{ij}\eta\lp N_i\rp\eta\lp N_j\rp+\delta^{kl}\lp- \partial_{\beta B\lp\cdot,p_k\rp}\partial_{\beta B\lp\cdot,p_l\rp}+E_{q_k}E_{q_l}+\frac{3i}{4}\alpha_k c\lp\varphi_{q_k}\rp c\lp\varphi_{p_l}\rp\rp\circlearrowright\S\lp Q^*\rp\otimes E_x\otimes\wedge\T^*_x.
    \end{align*}
    The same analysis in Theorem \ref{OGH} demonstrates that each operator in the above sum is formally self-adjoint, and positive with explicit lower bound 
    \[
        -\partial^2_{\beta B\lp\cdot,p_k\rp}+E^2_{q_k}+\frac{3i}{4}\alpha_kc\lp\varphi_{q_k}\rp c\lp\varphi_{p_k}\rp\geq \frac{\alpha_k}{4}>0
    \]
    hence one has that the lower bound
    \[
        \pi^{E_x\otimes\wedge\T^*_x}_{\eta+\tau}\lp\Deltab^{E^L_x} \rp\geq \delta^{ij}\eta\lp N_i\rp\eta\lp N_j\rp+\frac{\sum_{j=1}^{n_2-m}\alpha_j}{4}>0
    \]
    which by the composition rule for graded Heisenberg symbols
    \[
        \pi^{E_x\otimes\wedge\T^*_x}_{\eta+\tau}\lp\Deltab^{E^L_x} \rp=\lp\pi^{E_x\otimes\wedge\T^*_x}_{\eta+\tau}\lp\bD^{E^L_x}\rp\rp^2
    \]
    we see that $\pi^{E_x\otimes\wedge\T^*_x}_{\beta}\lp\bD^{E^L_x}\rp$ is injective. Since we have verified injectivity of $\pi^{E_x\otimes\wedge\T^*_x}_x\lp\bD^{E^L_x}\rp$ for every irreducible unitary representation $\pi_x$ of $T_{\F}M_x$ at every $x\in M$, we can apply Theorem \ref{GRC} which implies that $\bD^E$ is a graded Rockland operator of Heisenberg order $1$.
\end{proof}
The above theorem can be applied in the contact manifold setting $\lp M,\theta\rp$ as well as the more general poly-contact setting given in \cite{Van2}. In the presence of a Lie group $G$ acting on $M$ and preserving the geometric structures $\lp\F,g_{\F}\rp$ and $\lp\T,g_{\T}\rp$ a direct application of Theorem \ref{TSRC} above in the setting of two-step filtered manifolds gives the following theorem.
\begin{theorem}
\label{GET}
    Suppose a Lie group $G$ acts simultaneously by isometries on a two-step subRiemannian manifold $\lp M,\F,g_{\F}\rp$ and also prserves a Euclidean distribution $\lp\T,g_{\T}\rp$ which is transvers $TM=\F\oplus\T$. Then for a given $G$-equivariant $\bZ_2$-graded covariant Hermitian Clifford module
    \[
        G\circlearrowright\lp C\ell\lp\F^*,g^*_{\F}\rp,\nabla^{\F,0}\rp\circlearrowright G\circlearrowright\lp E,h_E,\nabla^E\rp
    \]
    we have that $\bD^{E}$ is a $G$-equivariant hypoelliptic operator and if $M$ is closed the kernel is a finite dimensional $\bZ_2$-graded representations of $G$ 
    \[
        G\circlearrowright\Ker\bD^{E}
    \]
\end{theorem}
\subsection{The General Case}
We now call into question the generality of our construction, namely whether $\bD^E$ is hypoelliptic on general filtered manifolds with $m$-step filtrations
\[
    \F=\F^1\leqslant\F^2\leqslant\cdots\leqslant\F^m=TM
\]
satisfying $[\Gamma\lp\F^i\rp,\Gamma\lp\F^j\rp]\subseteq\Gamma\lp\F^{i+j}\rp$. If one were to take an arbitrary Euclidean distribution $\lp\T,g_{\T}\rp$ which is transverse $\F^1\oplus\T=TM$, then we would immediately begin to have issues with the Heisenberg order of $\Deltab^E$ as it would contain derivative terms in the $\F^2,\cdots,\F^m$-directions and each $\F^k$ derivative has order at least $k$. The notion of graded Heisenberg order in the case of two-step filtrations was able to provide the correct micro-local model of $\bD^E$ as seen in Theorem \ref{MLMD}. Thus we first seek to put a grading on $\wedge\T^*$ that will force $\bD^E$ to have Heisenberg order $1$. There is a bit of experimentation that one needs to do in order to arrive at a definition of $\bD^E$ which guarantees a graded Heisenberg differential operator of order $1$. What we have found is a working definition, but it is by no means the only approach to defining $\bD^E$ in the general case.\par
In the case of a general filtration of $TM$, we can choose a sequence of transverse distributions 
\[
    \F^j=\F^{j-1}\oplus\T^{j},\quad \T:=\bigoplus_{j=2}^{m}\T^j\quad\Rightarrow\quad TM=\F\oplus\T
\]
along with a sequence of transverse metrics $g_{\T^j}$ for each factor $\T^j$ and define the full transverse metric via
\[
    g_{\T}:=g_{\T^2}\oplus\cdots\oplus g_{\T^m}.
\]
We obtain the global isomorphism of vector bundles 
\[
    \wedge\T^*\simeq \wedge\T^{2,*}\otimes\cdots\wedge\T^{m,*}.
\]
For each factor $\wedge\T^{j,*}$ we assign the grading 
\[
    |\wedge^{k_j}\T^{j,*}|:=(j-1)\cdot k_j,\quad 0\leq k_j\leq j,\quad 2\leq j\leq m.
\]
We will apply a new adiabatic scaling in this case and define
\[
    g_{M,u}:=g_{\F}\oplus\bigoplus_{j=2}^{m}u^{-j+1}g_{\T^j}.
\]
We will need a small generalization of our curvature forms given above. For $1\leq i\neq j, k\leq m$ we define
\[
    \iota_X\Omega^{\T^i}_{\T^j,\T^k}\lp U_j,V_k\rp:=g_{\T^i}\lp X^{\T^i},[V_k,U_j]^{\T^i}\rp,\quad X\in\Gamma\lp TM\rp,\quad U_j\in\Gamma\lp\T^j\rp,\quad V_k\in\Gamma\lp\T^k\rp
\]
where we are taking $\T^1=\F$.
The connection $\nabla^{\F,u}$ is defined as projection of the Levi-Civita connection to $\F$, and the relation
\[
    \nabla^{\F,u}=\nabla^{\F,0}+\frac{u^{-1}}{2}\so\Omega^{\T^2}_{\F,\F}
\]
remains valid. The reason for restricting the above brackets to $\T^2$ is due to the filtration condition $[\Gamma\lp\F^1\rp,\Gamma\lp\F^1\rp]\subseteq\Gamma\lp\F^2\rp$. Thus the only interesting $\T$ component that a bracket of vector fields tangent to $\F^1$ can have is in the $\T^2$-component. In the general case of $m$-step filtered manifolds we will take $\nabla^{\T,u}$ to be defined as
\[
    \nabla^{\T,u}:=\bigoplus_{j=1}^{m-1}P^{\T^j}\nabla^{g_{M,u}}.
\]
This connection will have the desired property of preserving the filtration degree on $\Gamma\lp\wedge\T^*\rp$, however the relationship between $\nabla^{\T,u}$ and $\nabla^{\T,0}$ will be slightly more complicated. We define $\Omega_{\T^j}\in\Omega^1\lp M,\wedge^2\T^{j,*}\rp$ by
\[
    \Omega_{\T^j,u}:=\frac{u^{j-1}}{2}\Omega^{\F}_{\T^j,\T^j}+\sum_{j\neq k}\frac{u^{j-k}}{2}\Omega^{\T^k}_{\T^j,\T^j}
\]
and using $g_{\T^j}$ we obtain an element $\so\Omega_{\T^j,u}\in\Omega^1\lp M,\so\lp\T^j,g_{\T^j}\rp\rp$. Observe that all terms in $\so\Omega_{\T^j,u}$ come with a non-zero power of $u$.

\begin{proposition}
    There is a $u$-independent $g_{\T^j}$-compatible connection $\nabla^{\T^j,0}$ for which
    \[
        \nabla^{\T^j,u}=\nabla^{\T^j,0}+\so\Omega_{\T^j,u}
    \]
\end{proposition}
\begin{proof}
    This follows directly by the Koszul identity. 
\end{proof}
We now study the dual of the shape form
\[
    \nabla^{g_{M,u}}=\nabla^{\F,u}\oplus\bigoplus_{j=2}^{m}\nabla^{\T^j,u}+S_u
\]
given at $u>0$ by
\[
    \iota_X\omega_u\lp Y,Z\rp:=g_{M,u}\lp \iota_XS_uY,Z\rp,\quad X,Y,Z\in\Gamma\lp TM\rp.
\]
Firstly observe that by definition one has
\[
    \omega_u\big|_{\wedge^2\F}\equiv0,\quad \omega_u\big|_{\wedge^2\T^{j}}\equiv 0,\quad 2\leq j\leq m.
\]
The Koszul identity gives the following proposition.
\begin{proposition}
    For $T_k\in\Gamma\lp \T^k\rp$, $U_l\in\Gamma\lp\T^l\rp$, and $V_j\in\Gamma\lp \T^j\rp$ with $l\neq j$ and $1\leq j,k,l\leq m$ one has 
    \begin{itemize}
        \item For $k=l$ the dual shape-form is
        \[
            \iota_{T_k}\omega_u\lp U_k,V_l\rp=-\frac{u^{-l+1}}{2}L_{V_j}g_{\T^l}\lp T_l,U_l\rp-\frac{u^{-j+1}}{2}\iota_{V_j}\Omega^{\T^j}_{\T^l,\T^l}\lp T_l,U_l\rp
        \]
        \item For $k\neq l\neq j$ the dual shape-form is
        \[
            \iota_{T_k}\omega_u\lp U_l,V_j\rp=-\frac{u^{-l+1}}{2}\iota_{U_l}\Omega^{\T^l}_{\T^j,\T^k}\lp V_j,T_k\rp+\frac{u^{-k+1}}{2}\iota_{T_k}\Omega^{\T^k}_{\T^l,\T^j}\lp U_l,V_j\rp-\frac{u^{-j+1}}{2}\iota_{V_j}\Omega^{\T^j}_{\T^k,\T^l}\lp T_k,U_l\rp
        \]
    \end{itemize}
\end{proposition}
For a given $\bZ_2$-graded covariant Hermitian Clifford module
\[
    \lp C\ell\lp\F^*,g^*_{\F}\rp\rp\circlearrowright\lp E,h_E,\nabla^E\rp
\]
we take 
\[
    \nabla^{E,u}:=\nabla^E+\frac{u^{-1}}{4}c\lp\Omega^{\T^2}_{\F,\F}\rp
\]
and define the asymptotic Bismut superconnection associated to the one-parameter family of connection $\nabla^{\T,u}$ constructed above as
\[
    \bD^E:=\lim^{F.P.}_{u\rightarrow 0^+}m_u\lp\nabla^{E,u}\otimes I+I\otimes \nabla^{\T,u}+\frac{1}{2}m_u\omega_u\rp.
\]
We assign elements of $E\otimes \bigotimes_{j=2}^m\wedge^{k_j}\T^{j,*}$ degree $k_2+2\cdot k_3+\cdots+(m-1)\cdot k_m$. At a given $x\in M$ the osculating Lie algebra $\t_{\F}M_x$ is a $m$-step nilpotent Lie algebra and the splitting $TM=\F\oplus\bigoplus^m_{j=2}\T^j$ allows us to make the identifications $\F_x\simeq\t^1_{\F}M_x$ and $\T^j_x\simeq \t^j_{\F}M_x$ and equip $\t^j_{\F}M_x$ with the metrics $g_{\F_x}$ and $g_{\T^j_x}$. One can view $T_{\F}M_x$ as a manifold in its own right and use the subspaces $\t^j_{\F}M_x$ along with their metrics to generate left-invariant transverse Euclidean distributions $TT_{\F}M_x=\bigoplus_{j=1}^m\t^j_{\F}M^L_x$. One can compute that under a left-invariant framing of $\t^j_{\F}M^L_x$ the connections $\nabla^{\T^j,0}$ will be the canonical flat connections. We can left translate the Clifford action
\[
    C\ell\lp\T^1,g^*_{\T^1}\rp\circlearrowright\lp E_x,h_{E_x}\rp
\]
and denoting $E^L_x\rightarrow T_{\F}M_x$ the trivial bundle, the flat connection $d^{E^L_x}$ gives rise to a $\bZ_2$-graded covariant Hermitian Clifford module
\[
   \lp C\ell\lp \t^1_{\F}M^L_x,g^*_{\t^1_{\F}M_x^L}\rp,\nabla^{\T^1,0}\rp\circlearrowright\lp E^L_x,h_{E^L_x},d^{E^L_x}\rp
\]
and thus obtain an associated asymptotic Bismut superconnection which we denote by $\bD^{E^l_x}$. We are now in a position to collect the relevant micro-local information regarding this definition of the asymptotic Bismut superconnection $\bD^E$ associated to the one-parameter family of connections $\bigoplus_{j=1}^m\nabla^{\T^j,u}$.
\begin{theorem}
\label{FABS}
    Suppose that $\lp M,\F,g_{\F}\rp$ is an $m$-step filtered subRiemannian manifold and $\lp\T^j,g_{\T^j}\rp$ is a sequence of Euclidean distributions which are transverse $\F^j=\F^{j-1}\oplus\T^j$. Then for a $\bZ_2$-graded covariant Hermitian Clifford module
    \[
        \lp C\ell\lp\F^*,g^*_{\F}\rp,\nabla^{\F,0}\rp\circlearrowright\lp E,h_E,\nabla^E\rp
    \]
    the asymptotic superconnection $\bD^E$ for the one-parameter family of connections $\bigoplus_{j=1}^m\nabla^{\T^j,u}$ has graded Heisenberg order $1$ and at $x\in M$, the graded Heisenberg cosymbol is given by 
    \[
        \widecheck{\sigma}^G_1\lp\bD^E\rp_x=\bD^{E^L_x}.
    \]
\end{theorem}
\begin{proof}
    We split the Clifford contraction into its $\T^j$-components
    \[
        D_u=\sum_{j=1}^mm_u\Big|_{\T^j}\circ \nabla^{\bE,u}
    \]
    and analyze each component separately. For $\T^1=\F$ let $\{ e_i\}_{i=1}^{n_1}$ be a local $g_{\F}$-orthonormal framing and let $\{\varphi^i\}_{i=1}^{n_1}$ denote the dual framing and for each $2\leq j\leq m$ we let $\{f^j_{\mu_j}\}_{\mu_j=1}^{n_j}$ be a local $g_{\T_j}$-orthonormal framing and $\{\psi^{\mu_j}_j\}_{\mu_j=1}^{n_j}$ denote the corresponding dual framing. Then we have
    \[
        m_u\Big|_{\T^1}\circ\nabla^{\bE,u}=c\lp \varphi^i\rp\lp \nabla^{E,u}_{e_i}\otimes I+I\otimes\nabla^{\T,u}_{e_i}\rp+\frac{1}{2}c\lp \varphi^i\rp m_u\lp\iota_{e_i}\omega_u\rp
    \]
    First observe that the $u$-dependent portion of $\nabla^{E,u}$ is completely supported on $\T^1$, hence $\nabla^{E,u}_{e_i}=\nabla^E_{e_i}$. The covariant derivative $\nabla^{\T,u}_{e_i}$ will contain non-trivial $u$-dependent terms, however since we are only interested in the $u$-independent terms, we see that the $u$-independent term is given by $\nabla^{\T,0}_{e_i}$. Moreover, each covariant derivative $\nabla^E_{e_i}$ and $\nabla^{\T,0}_{e_i}$ has Heisenberg differential order $1$ and each $c\lp\varphi^i\rp$ has Heisenberg differential order $0$ and all operators preserve the grading degree of $E\otimes\wedge\T^*$. Thus we have
    \[
        \widecheck{\sigma}^G_1\lp\lim^{F.P.}_{u\rightarrow 0^+}c\lp \varphi^i\rp\lp \nabla^{E,u}_{e_i}\otimes I+I\otimes\nabla^{\T,u}_{e_i}\rp\rp=c\lp\varphi^i_x\rp[e_i]_x
    \]
    We now analyze the $\F$-Clifford contracted term involving $\omega_u$. Note that by the above proposition we have
    \begin{align*}
        &m_u\lp\iota_{e_i}\omega_u\rp=c\lp\varphi^k\rp\lp\varepsilon^{\T^j}_{\psi^{\mu_j}_j}-u^{j-1}\iota^{\T^j}_{f^j_{\mu_j}}\rp\lp-\frac{1}{2}L_{f^j_{\mu_j}}g_{\F}\lp e_i,e_k\rp-\frac{u^{-j+1}}{2}\iota_{f^j_{\mu_j}}\Omega^{\T^j}_{\F,\F}\lp e_i,e_k\rp\rp\\
        &\lp \varepsilon^{\T^l}_{\psi^{\mu_l}_l}-u^{l-1}\iota^{\T^l}_{f^l_{\mu_l}}\rp\lp\varepsilon^{\T^j}_{\psi^{\mu_j}_j}-u^{j-1}\iota^{\T^j}_{f^j_{\mu_j}}\rp \cdot\\
        &\lp -\frac{u^{-l+1}}{2}\iota_{f^l_{\mu_l}}\Omega^{\T^l}_{\T^j,\F}\lp f^j_{\mu_j},e_i\rp+\frac{1}{2}\iota_{e_i}\Omega^{\F}_{\T^l,\T^j}\lp f^l_{\mu_l},f^j_{\mu_j}\rp-\frac{u^{-j+1}}{2}\iota_{f^j_{\mu_j}}\Omega^{\T^j}_{\F,\T^l}\lp e_i,f^l_{\mu_l}\rp\rp
    \end{align*}
    Thus the constant $u$-term is given by
    \begin{align*}
        &m_u\lp\iota_{e_i}\omega_u\rp_0=c\lp\varphi^k\rp\varepsilon^{\T^j}_{\psi^{\mu_j}_j}\lp-\frac{1}{2}L_{f^j_{\mu_j}}g_{\F}\lp e_i,e_k\rp\rp+c\lp\varphi^k\rp\iota^{\T^1}_{f^1_{\mu_1}}\lp\frac{1}{2}\iota_{f^1_{\mu_1}}\Omega^{\T^1}_{\F,\F}\lp e_i,e_k\rp\rp\\
        &+\varepsilon^{\T^l}_{\psi^{\mu_l}_l}\varepsilon^{\T^j}_{\psi^{\mu_j}_j}\frac{1}{2}\Omega^{\F}_{\T^l,\T^j}\lp f^l_{\mu_l},f^j_{\mu_j}\rp+\iota^{\T^l}_{f^l_{\mu_l}}\varepsilon^{\T^l}_{\psi^{\mu_l}_l}\lp\frac{1}{2}\iota_{f^l_{\mu_l}}\Omega^{\T^l}_{\T^j,\F}\lp f^j_{\mu_j},e_i\rp\rp+\varepsilon^{\T^l}_{\psi^{\mu_l}_l}\iota^{\T^j}_{f^j_{\mu_j}}\frac{1}{2}\iota_{f^j_{\mu_j}}\Omega^{\T^j}_{\F,\T^l}\lp e_i,f^l_{\mu_l}\rp.
    \end{align*}
    The operators $\varepsilon^{\T^j}_{\psi^{\mu_j}_j}$ have shift the grading degree by $j-1$ and the operators $\iota^{\T^j}_{f^j_{\mu_j}}$ shift the grading degree by $1-j$. Moreover when $j+1<l$ one has $\Omega^{\T^l}_{\T^j,\F}\equiv 0$ and when $l+1<j$ one has $\Omega^{\T^j}_{\F,\T^l}\equiv 0$. Thus all terms have graded Heisenberg differential order $\leq 1$ and one has
    \begin{align*}
        \widecheck{\sigma}^G_1\lp\lim^{F.P.}_{u\rightarrow0^+} c\lp\varphi^i\rp m_u\lp\iota_{e_i}\omega_u\rp\rp_x&=c\lp\varphi^i_x\rp c\lp\varphi^k_x\rp\iota^{\T^1}_{f^1_{\mu_1,x}}\lp\frac{1}{2}\iota_{f^1_{\mu_1,x}}\Omega^{\T^1}_{\F,\F}\lp e_{i,x},e_{k,x}\rp\rp\\
        &+c\lp\varphi^i_x\rp\iota^{\T^{j+1}}_{f^{j+1}_{\mu_{j+1},x}}\varepsilon^{\T^j}_{\psi^{\mu_j}_{j,x}}\lp\iota_{f^j_{\mu_j,x}}\Omega^{\T^{j+1}}_{\T^j,\F}\lp f^j_{\mu_j,x},e_{i,x}\rp\rp 
    \end{align*}
    We now analyze a generic Clifford contraction restricted to $\T^j$ for $2\leq j\leq m$
    \[
        m_u\Big|_{\T^j}\circ\nabla^{\bE,u}=\lp\varepsilon^{\T^j}_{\psi^{\mu_j}_j}-u^{j-1}\iota^{\T^j}_{f^j_{\mu_j}}\rp\lp\nabla^{E,u}_{f^j_{\mu_j}}\otimes I+I\otimes\nabla^{\T,u}_{f^j_{\mu_j}}\rp+\frac{1}{2}\lp\varepsilon^{\T^j}_{\psi^{\mu_j}_j}-u^{j-1}\iota^{\T^j}_{f^j_{\mu_j}}\rp m_u\lp\iota_{f^j_{\mu_j}}\omega_u\rp
    \]
By definition we have
\[
    \nabla^{E,u}_{f^j_{\mu_j}}=\nabla^E_{f^j_{\mu_j}}+\frac{u^{-1}}{4}c\lp\iota_{f^j_{\mu_j}}\Omega_{\F}\rp
\]
and the singular $u$-term is non-vanishing only when $j=1$. Thus the constant term within the $E$-covariant derivative term is
\[
    \varepsilon^{\T^j}_{f^j_{\mu_j}}\nabla^E_{f^j_{\mu_j}}-\frac{1}{4}\delta_{2j}\iota^{\T^j}_{f^j_{\mu_j}}c\lp \iota_{f^j_{\mu_j}}\Omega^{\T^j}_{\T^1,\T^1}\rp.
\]
As for the $\T$-covariant derivative term, the derivation property of $\nabla^{\T,u}$ on $\wedge\T^*$ implies that it is enough to study the restriction of $\nabla^{\T,u}$ to $\T^2\oplus\cdots\oplus\T^m$ in which case the connection breaks into a sum of the $\nabla^{\T^k,u}$ which allows us to study each $\T^k$-factor separately. For a given $2\leq k\leq m$ we have
\[
    \lp\varepsilon^{\T^j}_{\psi^{\mu_j}_j}-u^{j-1}\iota^{\T^j}_{f^j_{\mu_j}}\rp\nabla^{\T^k,u}_{f^j_{\mu_j}}=\lp\varepsilon^{\T^j}_{\psi^{\mu_j}_j}-u^{j-1}\iota^{\T^j}_{f^j_{\mu_j}}\rp\lp\nabla^{\T^k,0}_{f^j_{\mu_j}}+\sum_{l\neq k}\frac{u^{k-l}}{2}\so\iota_{f^j_{\mu_j}}\Omega^{\T^l}_{\T^k,\T^k}\rp
\]
We see that a $\so\Omega^{\T^l}_{\T^k,\T^k}$ will contribute only if $j-1+k-l=0$. Moreover, $\iota_{f^j_{\mu_j}}\Omega^{\T^l}_{\T^k,\T^k}$ is non-vanishing only if $j=l$. Thus one must have $k=1$ but we are assuming that $2\leq k$, hence none of the $\so\Omega^{\T^l}_{\T^k,\T^k}$ terms make the final cut when $u\rightarrow 0^+$. Hence what we have found so far is that 
\begin{align*}
    &\lim^{F.P.}_{u\rightarrow 0^+}\lp\varepsilon^{\T^j}_{\psi^{\mu_j}_j}-u^{j-1}\iota^{\T^j}_{f^j_{\mu_j}}\rp\lp\nabla^{E,u}\otimes I+I\otimes\nabla^{\T,u}_{f^j_{\mu_j}}\rp=\\
    &\hspace{6cm}\varepsilon^{\T^j}_{f^j_{\mu_j}}\lp\nabla^E_{f^j_{\mu_j}}\otimes I+I\otimes\nabla^{T,0}_{f^j_{\mu_j}}\rp-\frac{1}{4}\delta_{2j}\iota^{\T^j}_{f^j_{\mu_j}}c\lp\iota_{f^j_{\mu_j}}\Omega^{\T^j}_{\T^1,\T^1}\rp
\end{align*}
All terms above have graded Heisenberg order $\leq 1$ with
\[
    \widecheck{\sigma}^G_1\lp \lim_{u\rightarrow 0^+}\lp \varepsilon^{\T^j}_{\psi^{\mu_j}_j}-u^{j-1}\iota^{\T^j}_{f^j_{\mu_j}}\rp\lp\nabla^{E,u}_{f^j_{\mu_j}}\otimes I+I\otimes\nabla^{\T,u}_{f^j_{\mu_j}}\rp\rp_x=\varepsilon^{\T^j}_{\psi^{\mu_j}_{j,x}}[f^j_{\mu_j}]_x-\frac{1}{4}\delta_{2j}\iota^{\T^j}_{f^j_{\mu_j,x}}c\lp\iota_{f^j_{\mu_j,x}}\Omega^{\T^j}_{\T^1,\T^1}\rp 
\]
We now analyze the terms involving $\omega_u$. Observe that $m_u\lp\iota_{f^j_{\mu_j}}\omega_u\rp$ takes the form
\begin{align*}
    &m_u\lp\iota_{f^j_{\mu_j}}\omega_u\rp=c\lp\varphi^k\rp\lp\varepsilon^{\T^j}_{\psi^{\nu_j}_j}-u^{j-1}\iota^{\T^i}_{f^j_{\nu_j}}\rp\lp \frac{u^{-j+1}}{2}L_{e_k}g_{\T^j}\lp f^j_{\nu_j},f^j_{\mu_j}\rp-\frac{1}{2}\iota_{e_k}\Omega^{\T^1}_{\T^j,\T^j}\lp f^j_{\nu_j},f^j_{\mu_j}\rp\rp\\
    &+\lp\varepsilon^{\T^k}_{\psi^{\mu_k}_k}-u^{k-1}\iota^{\T^k}_{f^k_{\mu_k}}\rp\lp\varepsilon^{\T^j}_{\psi^{\nu_j}_j}-\iota^{\T^j}_{f^j_{\nu_j}}\rp\lp \frac{u^{-j+1}}{2}L_{f^k_{\mu_k}}g_{\T^j}\lp f^j_{\nu_j},f^j_{\mu_j}\rp-\frac{u^{-k+1}}{2}\iota_{f^k_{\mu_k}}\Omega^{\T^k}_{\T^j,\T^j}\lp f^j_{\nu_j},f^j_{\mu_j}\rp\rp\\
    &\lp\varepsilon^{\T^k}_{\psi^{\mu_k}_k}-u^{k-1}\iota^{\T^k}_{f^k_{\mu_k}}\rp\lp\varepsilon^{\T^l}_{\psi^{\mu_l}_l}-u^{l-1}\iota^{\T^l}_{f^l_{\mu_l}}\rp_{j\neq k\neq l}\cdot\\
    &\lp -\frac{u^{-k+1}}{2}\iota_{f^k_{\mu_k}}\Omega^{\T^k}_{\T^l,\T^j}\lp f^l_{\mu_l},f^j_{\mu_j}\rp+\frac{u^{-j+1}}{2}\iota_{f^j_{\mu_j}}\Omega^{\T^j}_{\T^k,\T^l}\lp f^k_{\mu_k},f^l_{\mu_l}\rp-\frac{u^{-l+1}}{2}\iota_{f^l_{\mu_l}}\Omega^{\T^l}_{\T^j,\T^k}\lp f^j_{\mu_j},f^k_{\mu_k}\rp\rp\\
    &+c\lp\varphi^k\rp\lp \varepsilon^{\T^l}_{\psi^{\mu_l}_l}-u^{l-1}\iota^{\T^l}_{f^l_{\mu_l}}\rp_{l\neq j}\cdot\\
    &\lp -\frac{1}{2}\iota_{e_i}\Omega^{\F}_{\T^l,\T^j}\lp f^l_{\mu_l},f^j_{\mu_j}\rp+\frac{u^{-j+1}}{2}\iota_{f^j_{\mu_j}}\Omega^{\T^j}_{\F,\T^l}\lp e_k,f^l_{\mu_l}\rp-\frac{u^{-l+1}}{2}\iota_{f^l_{\mu_l}}\Omega^{\T^l}_{\T^j,\F}\lp f^j_{\mu_j},e_k\rp\rp
\end{align*}
where the $\cdot$ term indicates that the last two terms should be multiplied. We analyze each term separately. First note that in
\begin{align*}
    &\lp\varepsilon^{\T^j}_{\psi^{\mu_j}_j}-u^{j-1}\iota^{\T^j}_{f^j_{\mu_j}}\rp c\lp\varphi^k\rp\lp \varepsilon^{\T^j}_{\psi^{\nu_j}_j}-u^{j-1}\iota^{\T^j}_{f^j_{\nu_j}}\rp\cdot\lp \frac{u^{-j+1}}{2}L_{e_k}g_{\T^j}\lp f^j_{\nu_j},f^j_{\mu_j}\rp-\frac{1}{2}\iota_{e_k}\Omega^{\F}_{\T^j,\T^j}\lp f^j_{\nu_j},f^j_{\mu_j}\rp\rp
\end{align*}
all constant $u$-terms have graded Heisenberg order $<1$. Thus none of the above terms make the cut when computing $\widecheck{\sigma}^G_1$. In the second sum we have
\begin{align*}
    &\lp\varepsilon^{\T^j}_{\psi^{\mu_j}_j}-u^{j-1}\iota^{\T^j}_{f^j_{\mu_j}}\rp\lp \varepsilon^{\T^k}_{\psi^{\mu_k}_k}-u^{k-1}\iota^{\T^k}_{f^k_{\mu_k}}\rp\lp\varepsilon^{\T^j}_{\psi^{\nu_j}_j}-u^{j-1}\iota^{\T^j}_{f^j_{\nu_j}}\rp\cdot\\
    &\hspace{5cm}\lp \frac{u^{-j+1}}{2}L_{f^k_{\mu_k}}g_{\T^j}\lp f^j_{\nu_j},f^j_{\mu_j}\rp-\frac{u^{-k+1}}{2}\iota_{f^k_{\mu_k}}\Omega^{\T^k}_{\T^j,\T^j}\lp f^j_{\nu_j},f^j_{\mu_j}\rp\rp
\end{align*}
All constant $u$-terms have graded Heisenberg order $\leq 1$ and after taking $\widecheck{\sigma}^G_1$ one obtains
\[
    \varepsilon^{\T^j}_{\psi^{\mu_j}_{j,x}}\iota^{\T^{2j}}_{f^{2j}_{\mu_{2j},x}}\varepsilon^{\T^j}_{\psi^{\nu_j}_{j,x}}\lp\frac{1}{2}\iota_{f^{2j}_{\mu_{2j},x}}\Omega^{\T^{2j}}_{\T^j,\T^j}\lp f^j_{\nu_j,x},f^j_{\mu_j,x}\rp\rp
\]
The third term takes the form
\begin{align*}
    &\lp\varepsilon^{\T^j}_{\psi^{\mu_j}_j}-u^{j-1}\iota^{\T^j}_{f^j_{\mu_j}}\rp\lp \varepsilon^{\T^k}_{\psi^{\mu_k}_k}-u^{k-1}\iota^{\T^k}_{f^k_{\mu_k}}\rp\lp\varepsilon^{\T^l}_{\psi^{\mu_l}_l}-u^{l-1}\iota^{\T^l}_{f^l_{\mu_l}}\rp_{j\neq k\neq l}\cdot\\
    &\lp -\frac{u^{-k+1}}{2}\iota_{f^k_{\mu_k}}\Omega^{\T^k}_{\T^l,\T^j}\lp f^l_{\mu_l},f^j_{\mu_j}\rp+\frac{u^{-j+1}}{2}\iota_{f^j_{\mu_j}}\Omega^{\T^j}_{\T^k,\T^l}\lp f^k_{\mu_k},f^l_{\mu_l}\rp-\frac{u^{-l+1}}{2}\iota_{f^l_{\mu_l}}\Omega^{\T^l}_{\T^j,\T^k}\lp f^j_{\mu_j},f^k_{\mu_k}\rp\rp
\end{align*}
All constant $u$-terms have graded Heisenberg order $\leq 1$ and after taking $\widecheck{\sigma}^G_1$ one obtains
\begin{align*}
    &\varepsilon^{\T^j}_{\psi^{\mu_j}_{j,x}}\varepsilon^{\T^k}_{\psi^{\mu_k}_{k,x}}\iota^{\T^{j+k}}_{f^{j+k}_{\mu_{j+k},x}}\lp\frac{1}{2}\iota_{f^{j+k}_{\mu_{j+k},x}}\Omega^{\T^{j+k}}_{\T^j,\T^k}\lp f^j_{\mu_j,x},f^k_{\mu_k,x}\rp\rp_{j\neq k}\\
    &\quad-\iota^{\T^j}_{f^j_{\mu_j,x}}\varepsilon^{\T^k}_{\psi^{\mu_k}_{k,x}}\varepsilon^{\T^l}_{\psi^{\mu_l}_{l,x}}\lp\frac{1}{2}\iota_{f^j_{\mu_j,x}}\Omega^{\T^j}_{\T^k,\T^l}\lp f^k_{\mu_k,x},f^l_{\mu_l,x}\rp\rp_{j=k+l,k\neq l}\\
    &\quad\quad +\varepsilon^{\T^j}_{\psi^{\mu_j}_{j,x}}\iota^{\T^{l+j}}_{f^{l+j}_{\mu_{l+j},x}}\varepsilon^{\T^l}_{\psi^{\mu_l}_{l,x}}\lp\frac{1}{2}\iota_{f^{l+j}_{\mu_{l+j},x}}\Omega^{\T^{l+j}}_{\T^l,\T^j}\lp f^l_{\mu_l,x},f^j_{\mu_j,x}\rp\rp_{j\neq l}
\end{align*}
All terms above are signed multiple of one another, hence when summing over all $j\neq k\neq l$ one obtains the term
\[
    \varepsilon^{\T^j}_{\psi^{\mu_j}_{j,x}}\varepsilon^{\T^k}_{\psi^{\mu_k}_{k,x}}\iota^{\T^{j+k}}_{f^{j+k}_{\mu_{j+k},x}}\lp\frac{1}{2}\iota_{f^{j+k}_{\mu_{j+k},x}}\Omega^{\T^{j+k}}_{\T^j,\T^k}\lp f^j_{\mu_j,x},f^k_{\mu_k,x}\rp\rp_{j\neq k}
\]
The last term takes the form
\begin{align*}
    &\lp\varepsilon^{\T^j}_{\psi^{\mu_j}_j}-u^{j-1}\iota^{\T^j}_{f^j_{\mu_j}}\rp c\lp\varphi^k\rp\lp\varepsilon^{\T^l}_{\psi^{\mu_l}_l}-u^{l-1}\iota^{\T^l}_{f^l_{\mu_l}}\rp\cdot\\
    &\lp -\frac{1}{2}\iota_{e_i}\Omega^{\F}_{\T^l,\T^j}\lp f^l_{\mu_l},f^j_{\mu_j}\rp+\frac{u^{-j+1}}{2}\iota_{f^j_{\mu_j}}\Omega^{\T^j}_{\F,\T^l}\lp e_k,f^l_{\mu_l}\rp-\frac{u^{-l+1}}{2}\iota_{f^l_{\mu_l}}\Omega^{\T^l}_{\T^j,\F}\lp f^j_{\mu_j},e_k\rp\rp
\end{align*}
All terms in the above expansion have graded Heisenberg order $\leq 1$ and after taking $\widecheck{\sigma}^G_1$ one obtains 
\begin{align*}
    \varepsilon^{\T^j}_{\psi^{\mu_j}_{j,x}}c\lp\varphi^k_x\rp\iota^{\T^{j+1}}_{f^{j+1}_{\mu_{j+1},x}}\lp\frac{1}{2}\iota_{f^{j+1}_{\mu_{j+1}}}\Omega^{\T^{j+1}}_{\T^j,\F}\lp f^j_{\mu_j,x},e_{k,x}\rp\rp -\iota^{\T^j}_{f^j_{\mu_j,x}}c\lp\varphi^k_x\rp\varepsilon^{\T^{j-1}}_{\psi^{\mu_{j-1}}_{j-1,x}}\lp\frac{1}{2}\iota_{f^j_{\mu_j,x}}\Omega^{\T^{j}}_{\F,\T^{j-1}}\lp e_{k,x},f^{j-1}_{\mu_{j-1},x}\rp\rp
\end{align*}
However summing over all $j$ above the terms above cancel. Collecting our calculations, we have found that at any $x\in M$ $\bD^E$ has graded Heisenberg order $1$ and that graded cosymbol is given by
\begin{align*}
    \widecheck{\sigma}^G_1\lp\bD^E\rp_x=c\lp\varphi^i_x\rp[e_i]_x&+\varepsilon^{\T^j}_{\psi^{\mu_j}_{j,x}}[f^j_{\mu_j}]_x+\frac{1}{4}\iota^{\T^2}_{f^2_{\mu_2,x}}c\lp\varphi^i_x\rp c\lp\varphi^k_x\rp \iota_{f^2_{\mu_2,x}}\Omega^{\T^1}_{\F,\F}\lp e_{i,x},e_{k,x}\rp\\
    &\quad+\frac{1}{2}c\lp\varphi^i_x\rp\iota^{\T^{j+1}}_{f^{j+1}_{\mu_{j+1},x}}\varepsilon^{\T^j}_{\psi^{\mu_j}_{j,x}}\lp\iota_{f^{j+1}_{\mu_{j+1},x}}\Omega^{\T^{j+1}}_{\T^j,\F}\lp f^j_{\mu_j,x},e_{i,x}\rp\rp\\
    &\quad\quad+\frac{1}{2}\varepsilon^{\T^j}_{\psi^{\mu_j}_{j,x}}\varepsilon^{\T^k}_{\psi^{\mu_k}_{k,x}}\iota^{\T^{j+k}}_{f^{j+k}_{\mu_{j+k},x}}\lp\frac{1}{2}\iota_{f^{j+k}_{\mu_{j+k},x}}\Omega^{\T^{j+k}}_{\T^j,\T^k}\lp f^j_{\mu_j,x},f^k_{\mu_k,x}\rp\rp.
\end{align*}
In the above calculation, the cosymbol converted all covariant derivatives to ordinary derivatives, got rid of all terms involving $Lg_{\T^j}$ and $\Omega^{\T^l}_{\T^j,\T^k}$ when $l\neq j+k$, and kept all other terms. However, this is precisely the recipe for computing $\bD^{E^L_x}$ relative to a left-invariant framing of $\t_{\F}M^L_x$. Thus we have found that
\[
    \widecheck{\sigma}^G_1\lp\bD^E\rp=\bD^{E^L_x}.
\]
\end{proof}
The hypoellipticity of the above operator is now left to checking injectivity of $\pi^{E_x\otimes\wedge\T^*_x}_x\lp\bD^{E^L_x}\rp$ on all non-trivial irreducible unitary representations of $T_{\F}M_x$ for all $x\in M$. Given the complexity of the local model operators, the use of Kirillovs method to compute the operators $\pi^{E_x\otimes\wedge\T^*_x}_x\lp\bD^{E^L_x}\rp$ and prove their injectivity seems out of reach. We show however that general hypoellipticity can be achieved if one allows for a modification within the definition of the asymptotic Bismut superconnection. 
\begin{theorem}
    Suppose that $\lp M,\F,g_{\F}\rp$ is an $m$-step filtered subRiemannian manifold and $\lp\T^j,g_{\T^j}\rp$ is a sequence of Euclidean distributions which are transverse $\F^j=\F^{j-1}\oplus\T^j$. Suppose that $\F^1$ is bracket generating. Then for a $\bZ_2$-graded covariant Hermitian Clifford module
    \[
        \lp C\ell\lp\F^*,g^*_{\F}\rp,\nabla^{\F,0}\rp\circlearrowright\lp E,h_E,\nabla^E\rp
    \]
    the operator
    \[
        \widetilde{\bD}^E:=D^{\F,E\otimes\wedge\T^*}+d^{\T}_{\nabla^E}+\lim^{F.P.}_{u\rightarrow 0^+}m_u\circ m_u\lp\omega_u\rp
    \]
   has graded Heisenberg order $1$ and satisfies the graded Rockland condition.
\end{theorem}
\begin{proof}
    For $\T^1=\F$ let $\{ e_i\}_{i=1}^{n_1}$ be a local $g_{\F}$-orthonormal framing and let $\{\varphi^i\}_{i=1}^{n_1}$ denote the dual framing and for each $2\leq j\leq m$ we let $\{f^j_{\mu_j}\}_{\mu_j=1}^{n_j}$ be a local $g_{\T_j}$-orthonormal framing and $\{\psi^{\mu_j}_j\}_{\mu_j=1}^{n_j}$ denote the corresponding dual framing. A repeat of the calculations presented in the proof of Theorem \ref{FABS} show that the operator $\widetilde{\bD}^E$ has graded Heisenberg order $1$ and at $x\in M$ the cosymbol is
    \begin{align*}
    \widecheck{\sigma}^G_1\lp\widetilde{\bD}^E\rp_x=c\lp\varphi^i_x\rp[e_i]_x&+\varepsilon^{\T^j}_{\psi^{\mu_j}_{j,x}}[f^j_{\mu_j}]_x+\iota^{\T^2}_{f^2_{\mu_2,x}}c\lp\varphi^i_x\rp c\lp\varphi^k_x\rp \lp\frac{1}{2}\iota_{f^2_{\mu_2,x}}\Omega^{\T^1}_{\F,\F}\lp e_{i,x},e_{k,x}\rp\rp\\
    &\quad+c\lp\varphi^i_x\rp\iota^{\T^{j+1}}_{f^{j+1}_{\mu_{j+1},x}}\varepsilon^{\T^j}_{\psi^{\mu_j}_{j,x}}\lp\iota_{f^{j+1}_{\mu_{j+1},x}}\Omega^{\T^{j+1}}_{\T^j,\F}\lp f^j_{\mu_j,x},e_{i,x}\rp\rp\\
    &\quad\quad+\varepsilon^{\T^j}_{\psi^{\mu_j}_{j,x}}\varepsilon^{\T^k}_{\psi^{\mu_k}_{k,x}}\iota^{\T^{j+k}}_{f^{j+k}_{\mu_{j+k},x}}\lp\frac{1}{2}\iota_{f^{j+k}_{\mu_{j+k},x}}\Omega^{\T^{j+k}}_{\T^j,\T^k}\lp f^j_{\mu_j,x},f^k_{\mu_k,x}\rp\rp.
\end{align*}
If we square the above cosymol we obtain
\begin{align*}
    \lp\widecheck{\sigma}^G_1\lp\widetilde{\bD}^E\rp_x\rp^2=-\delta^{ij}[e_i]_x[e_j]_x+[c\lp\varphi^i_x\rp[e_i]_x,A]+A^2
\end{align*}
where
\begin{align*}
    A&=\iota^{\T^2}_{f^2_{\mu_2,x}}c\lp\varphi^i_x\rp c\lp\varphi^k_x\rp \lp\frac{1}{2}\iota_{f^2_{\mu_2,x}}\Omega^{\T^1}_{\F,\F}\lp e_{i,x},e_{k,x}\rp\rp+c\lp\varphi^i_x\rp\iota^{\T^{j+1}}_{f^{j+1}_{\mu_{j+1},x}}\varepsilon^{\T^j}_{\psi^{\mu_j}_{j,x}}\lp\iota_{f^{j+1}_{\mu_{j+1},x}}\Omega^{\T^{j+1}}_{\T^j,\F}\lp f^j_{\mu_j,x},e_{i,x}\rp\rp\\
    &\quad\quad+\varepsilon^{\T^j}_{\psi^{\mu_j}_{j,x}}\varepsilon^{\T^k}_{\psi^{\mu_k}_{k,x}}\iota^{\T^{j+k}}_{f^{j+k}_{\mu_{j+k},x}}\lp\frac{1}{2}\iota_{f^{j+k}_{\mu_{j+k},x}}\Omega^{\T^{j+k}}_{\T^j,\T^k}\lp f^j_{\mu_j,x},f^k_{\mu_k,x}\rp\rp.
\end{align*}
All terms in $[c\lp\varphi^i_x\rp[e_i]_x,A]+A^2$ will negatively shift the degree by $-1$ or $-2$. Moreover, within any irreducible unitary representation $\pi_x$ of $T_{\F}M_x$ the operator $\pi^{E_x\otimes\wedge\T^*_x}_x\lp -\delta^{ij}[e_i]_x[e_j]_x\rp$ is non-negative and injective, due to the bracket generating assumption on $\F^1$. Thus, if one were to construct a basis of $E_x\otimes\wedge\T^*_x$ which was compatible with the grading, then $\pi_x^{E_x\otimes\wedge\T^*_x}\lp\widecheck{\sigma}^G_1\lp\widetilde{\bD}^E\rp_x^2\rp$ would be triangular with $\pi^{E_x\otimes\wedge\T^*_x}_x\lp-\delta^{ij}[e_i]_x[e_j]_x\rp$ on the diagonal. The injectivity of the diagonal proves the injectivity of $\pi_x^{E_x\otimes\wedge\T^*_x}\lp\widecheck{\sigma}^G_1\lp\widetilde{\bD}^E\rp_x^2\rp$.
\end{proof}
\subsection{An Example: Principal $\bS^1$-bundles With Non-Vanishing Curvature}
Consider a closed manifold $\Sigma$ along with a covariant principal $\bS^1$-bundle $\lp P,\theta\rp\circlearrowleft\bS^1\xrightarrow{\pi} \Sigma$ for which the curvature $d\theta=\pi^*\bOmega$ is non-vanishing along $\Sigma$. This implies that the distribution $\F:=\Ker\theta$ gives rise to a two-step filtered manifold $\lp P,\F\rp$. We define the Reeb field $T$ as the generator of the $\bS^1$ action
\[
    T_p=\frac{d}{dt}\Big|_{t=0}p\cdot e^{it}.
\]
and take $\T$ to be the smooth subbundle of $TP$ spanned by $T$ along with transverse metric $g_{\T}\lp T,T\rp\equiv 1$. If $g_{\Sigma}$ is an arbitrary Riemannian metric on $\Sigma$, then using the isomorphism $\F\simeq\pi^*T\Sigma$ we may define $g_{\F}:=\pi^*g_{\Sigma}$ and compute that 
\[
    \nabla^{\F,0}=\pi^*\nabla^{g_{\Sigma}}.
\]
Note that one has $L_Tg_{\F}=0$.
We will consider a generic $\bZ_2$-graded covariant Hermitian Clifford module
\[
    \lp C\ell\lp T^*\Sigma,g^*_{\Sigma}\rp,\nabla^{g_{\Sigma}}\rp\circlearrowright_c\lp E,h_E,\nabla^E \rp
\]
and lift along $\pi:P\rightarrow \Sigma$ to form the $u$-dependent family of $\bZ_2$-graded covariant Hermitian Clifford modules
\[
    \lp C\ell\lp \F^*,g^*_{\F}\rp,\nabla^{\F,u}\rp\circlearrowright_c\lp\pi^*E,\pi^*h_E,\nabla^{E,u}\rp
\]
where
\[
    \nabla^{E,u}:=\pi^*\nabla^E+\frac{u^{-1}}{4}\theta\otimes c\lp d\theta\rp
\]
From here one obtains the asymptotic Bismut superconnection
\[
    \bD^{\pi^*E}=D^{\F,E\otimes\wedge[\theta]}+\pi^*\nabla^{E\otimes\wedge[\theta]}_T\varepsilon^{\T}_{\theta_{\T}}+\frac{1}{4}c\lp d\theta\rp\iota^{\T}_T.
\]
Observe that under the isomorphism $\pi^*E\otimes\wedge\T^*\simeq \pi^*E\otimes\wedge[\theta]$ we have the global Fourier expansion
\[
    \Gamma\lp P,\pi^* E\otimes\wedge[\theta]\rp\simeq\bigoplus_{m\in\bZ}\Gamma\lp\Sigma, E\otimes L_m\otimes\wedge[\theta]\rp
\]
where $\lp L_m,h_m,\nabla^{L_m}\rp:=\lp P,\theta\rp\times_{\rho_m}\lp\bC,h_{\bC}\rp\rightarrow \Sigma$ is the covariant Hermitian line bundle associated to the unitary representation $\rho_m\lp z\rp=z^m$ of $\bS^1$. Given $s\in\Gamma\lp\pi^* E\otimes\wedge[\theta]\rp$ the Fourier expansion takes the form
\[
    s=\sum_{m\in\bZ}s_m,\quad s_m:=\frac{1}{2\pi}\int_0^{2\pi}s\lp p\cdot e^{it}\rp e^{-imt}dt\in \Gamma\lp P,\pi^* E\otimes\wedge[\theta]\rp^{\rho_m}\simeq \Gamma\lp\Sigma, E\otimes L_m\otimes\wedge[\theta]\rp
\]
The operator $\bD^{\pi^*E}$ leaves each Fourier component invariant and on a given component takes the form
\[
    \bD^{\pi^*E}\Big|_{\Gamma\lp\Sigma, E\otimes L_m\otimes \wedge[\theta]\rp}\simeq D^{E\otimes L_m\otimes\wedge[\theta] }+im\varepsilon^{\T}_{\theta_{\T}}+\frac{1}{4}c\lp\bOmega\rp\iota^{\T}_T
\]
where $D^{E\otimes L_m\otimes\wedge[\theta]}$ is the Dirac operator corresponding to the twisted covariant Hermitian Clifford module
\[
    \lp C\ell\lp T^*\Sigma,g^*_{\Sigma}\rp,\nabla^{g_{\Sigma}}\rp\circlearrowright_c\lp E,h_E,\nabla^E\rp\otimes\lp L_m,h_m,\nabla^{L_m}\rp\otimes\lp \wedge[\theta],g_{\wedge[\theta]},d^{\T}\rp
\]
Given our general facts concerning the hypoellipticity of $\bD$ we have that $\Ker\bD$ is a finite dimensional $\bZ_2$-graded subspace of $\Gamma\lp P,\pi^* E\otimes\wedge\T^*\rp$. Using the Fourier theoretic description given above we can compute the kernel precisely. Thus suppose that $\eta+\tau\theta\in\Gamma^0\lp P,\pi^*E\otimes\wedge[\theta]\rp$ satisfies $\bD^{\pi^*E}\lp\eta+\tau\theta\rp=0$. Then if $\eta_m$ and $\tau_m$ denote the Fourier components of sections then we have that 
\[
  D^{E\otimes L_m}\eta_m-\frac{1}{4}c\lp\bOmega\rp\tau_m=0,\quad D^{E\otimes L_m}\tau_m+im\eta_m=0,\quad \forall m\in\bZ  
\]
When $m\neq0$ we have 
\[
    \eta_m=\frac{i}{m}D^{E\otimes L_m}\tau_m\quad\Rightarrow\quad \frac{1}{m}\lp D^{E\otimes L_m}\rp^2\tau_m+\frac{1}{4}ic\lp\bOmega\rp\tau_m=0.
\]
For $m=0$ we obtain the system
\[
    D^E\tau_0=0,\quad D^E\eta_0=\frac{1}{4}c\lp\bOmega\rp\tau_0
\]
The latter equation is satisfies if and only if 
\[
    \la c\lp\bOmega\rp\tau_0,\xi\ra_{L^2\lp E\rp}=0,\quad \text{for all $\xi\in\Ker^1 D^{E}$}
\]
Considering now the case of $\sigma+\xi\theta\in\Gamma^1\lp P,\pi^*E\otimes\wedge[\theta]\rp$ satisfying $\bD^{\pi^*E}\lp\sigma+\xi\theta\rp=0$ we have that the Fourier components satisfy
\[
    D^{E\otimes L_m}\sigma_m+\frac{1}{4}c\lp\bOmega\rp\xi_m=0,\quad -im\sigma_m+D^{E\otimes L_m}\xi_m=0.
\]
When $m\neq0$ have
\[
    \sigma_m=\frac{1}{im}D^{E\otimes L_m}\xi_m\quad\Rightarrow\quad \frac{1}{m}\lp D^{E\otimes L_m}\rp^2\xi_m+\frac{1}{4}ic\lp\bOmega\rp\xi_m=0
\]
For $m=0$ we obtain the system
\[
    D^E\xi_0=0,\quad D^E\sigma_0=-\frac{1}{4}c\lp\bOmega\rp\xi_0.
\]
Collecting our calculations, we obtain the following proposition.
\begin{proposition}
\label{S1C}
    Suppose that $\Sigma$ is a closed manifold, $\lp P,\theta\rp\circlearrowleft\bS^1\xrightarrow{\pi}\Sigma$ is a covariant principal $\bS^1$-bundle with non-vanishing curvature $d\theta=\pi^*\bOmega$, and $g_{\Sigma}$ is a Riemannian metric of $\Sigma$. Then if we have a $\bZ_2$-graded covariant Hermitian Clifford module
    \[
        \lp C\ell\lp T^*\Sigma,g^*_{\Sigma}\rp,\nabla^{g_{\Sigma}}\rp\circlearrowright_c\lp E,h_E,\nabla^E \rp
    \]
    the kernel of the asymptotic Bismut superconnection $\bD^{\pi^*E}$ for the metric $g_{\F}:=\pi^*g_{\Sigma}$ defined above is given by 
    \begin{itemize}
        \item The even component of $\Ker\bD^{\pi^*E}$ is isomorphic to the direct summand
        \[
            \hspace{-.8cm}\Ker^0D^E\oplus\left\{ \tau\in\Ker^1D^E\Big|\text{ $c\lp\bOmega\rp\tau\perp \Ker^1D^E$}\right\}\bigoplus_{m\neq0}\left\{\tau_m\in\Gamma^1\lp E\otimes L_m\rp\Big|\text{ $c\lp \bOmega\rp\tau_m=\frac{4i}{m}\lp D^{E\otimes L_m}\rp^2\tau_m$}\right\}
        \]
        \item The odd component of $\Ker\bD^{\pi^*E}$ is isomorphic to the direct summand
        \[
            \hspace{-.8cm}\Ker^1D^E\oplus\left\{\xi\in\Ker^0D^E\Big|\text{ $c\lp\bOmega\rp\xi\perp\Ker^0D^E$}\right\}\bigoplus_{m\neq0}\left\{ \xi_m\in\Gamma^0\lp E\otimes L_m\rp\Big|\text{ $c\lp\bOmega\rp\xi_m=\frac{4i}{m}\lp D^{E\otimes L_m}\rp^2\xi_m$}\right\}
        \]
    \end{itemize}
\end{proposition}
An interesting feature of the above result is that the hypoellipticity of $\bD^{\pi^*E}$ forces the above summands, although indexed by the integers, to be \emph{finite dimensional}. An alternative proof that the above summands are always finite dimensional that does not use non-commutative micro-local calculus would be a great sanity check. In the case of a closed, connected surface $\Sigma$, one can obtain a more precise description when $\bOmega$ is \emph{conformally compatible} with $g_{\Sigma}$, i.e. when there exists $\alpha\in\bR$ such that $\lp\alpha\cdot g_{\Sigma},\bOmega\rp$ gives rise to a Kähler structure. Note that this directly implies that $\bOmega=\alpha dA$, where $dA$ is the oriented area element of $g_{\Sigma}$. In this case the above theorem instructs us to count the dimension of solutions to the equation
\[
    \alpha c\lp dA\rp s_m=\frac{4i}{m}\lp D^{E\otimes L_m}\rp^2s_m,\quad s_m\in\Gamma\lp E\otimes L_m\rp.
\]
which one can check is effectively asking whether the operator $\lp D^{E\otimes L_m}\rp^2$ has $\pm m\alpha/4$ as an eigenvalue. We now turn to an even simpler example where the full kernel can be computed, the case of $\Sigma\simeq \bS^2$. We will use the Peter-Weyl theorem in this case to diagonalize the operators and obtain a precise description of the kernel. 
\begin{theorem}
\label{S1S2}
    Suppose we equip $\bS^2$ with the round metric $g$ of radius $1/2$ and a conformally compatible symplectic form $\bOmega$, and consider a covariant principal $\bS^1$-bundle $\lp P,\theta\rp\circlearrowleft\bS^1\xrightarrow{\pi} \bS^2$ with $d\theta=\pi^*\bOmega$. Then for the $\bZ_2$-graded covariant Clifford module
    \[
        \lp C\ell\lp T^*\bS^2,g^*\rp,\nabla^g\rp\circlearrowright\lp\wedge T^*\bS^2,\wedge g^*,\nabla^g\rp
    \]
    one has
    \[
        \Ker\bD^{\pi^*\wedge T^*\bS^2}\simeq H^*_{dR}\lp\bS^2\rp
    \]
    where the isomorphism is at the level of $\bZ_2$-graded vector spaces.
\end{theorem}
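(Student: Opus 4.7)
The plan is to exploit the principal $\bS^1$-action on $P$ to reduce the global kernel problem to a family of twisted de Rham-Dirac problems on $\bS^2$. Since $\theta$, $g_{\F} = \pi^*g$, and the splitting $TP = \F \oplus \T$ are all $\bS^1$-invariant, and the generator of the $\bS^1$-action agrees with $\nabla^{0}_T$, the operator $\bD^{\wedge\F^*}$ commutes with this action and hence respects the Fourier decomposition $\Gamma(\wedge\F^* \hat\otimes \wedge\T^*) = \bigoplus_{k \in \bZ}\mathcal{H}_k$. Under the isomorphism $\wedge\F^* \simeq \pi^*\wedge T^*\bS^2$ and the trivialization $\wedge\T^* \simeq \underline{\bR} \oplus \underline{\bR}\theta$, each $\mathcal{H}_k$ identifies with two copies of $\Gamma(\bS^2; L^k \otimes \wedge T^*\bS^2)$, where $L := P \times_{\bS^1} \bC$ is the associated complex line bundle whose induced connection has curvature proportional to $\bOmega$.

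First I would unpack the restriction $\bD^{\wedge\F^*}|_{\mathcal{H}_k}$. Writing a section as $\alpha + \beta \otimes \theta$, and using $\nabla^{0}_T = ik$ on the $k$-th mode together with $\iota_T(\beta \otimes \theta) = \beta$, $\varepsilon_\theta(\alpha) = \alpha \otimes \theta$, the identity $D^{\wedge\F^*}\tilde{\eta} = \widetilde{(d_{\bS^2}+d_{\bS^2}^*)\eta}$, and $\Omega_\F = \widetilde{\bOmega}$ from the preceding discussion, one obtains the block form
\begin{align*}
    \bD^{\wedge\F^*}\big|_{\mathcal{H}_k}(\alpha + \beta \otimes \theta) = \bigl(D_k\alpha + \tfrac{1}{4}c_\F(\bOmega)\beta\bigr) + \bigl(D_k\beta + ik\alpha\bigr) \otimes \theta,
\end{align*}
where $D_k := d_{L^k} + d_{L^k}^*$ is the twisted de Rham-Dirac operator on $L^k \otimes \wedge T^*\bS^2$.

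For the zero mode $k = 0$ the system partially decouples: the $\theta$-component forces $D_0 \beta = 0$, so by Hodge theory $\beta$ is harmonic on $\bS^2$, and the remaining equation reads $D_0\alpha = -\tfrac{1}{4}c_\F(\bOmega)\beta$. Since $\bOmega$ is a nonzero scalar multiple of the Riemannian volume form (by conformal compatibility), a direct Clifford calculation shows $c_\F(\bOmega)$ restricts to an invertible operator on the two-dimensional space of harmonic forms on $\bS^2$. Hence $c_\F(\bOmega)\beta$ is itself a nonzero harmonic form whenever $\beta \neq 0$, which cannot lie in the image of $D_0$ by the Fredholm alternative. So $\beta = 0$ and $\alpha$ must itself be harmonic; the zero-mode contribution to $\Ker\bD^{\wedge\F^*}$ is exactly $H^*_{dR}(\bS^2) \simeq \bC^2$, concentrated in the even $\bZ_2$-grading.

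The main obstacle is ruling out kernel contributions from the modes with $k \neq 0$. Substituting $\alpha = \tfrac{i}{k}D_k\beta$ from the second block equation reduces the problem to showing that the operator $D_k^2 - \tfrac{ik}{4}c_\F(\bOmega)$ has trivial kernel on $L^k$-valued differential forms on $\bS^2$. The natural approach is a Weitzenböck-type positivity argument: using the Kähler structure on $(\bS^2, g, J)$ one decomposes $D_k^2$ as the twisted Bochner Laplacian plus a curvature endomorphism, and notes that the curvature of $L^k$ is proportional to $k\bOmega$, producing a zeroth-order endomorphism of size $|k|$. Combined with the strictly positive scalar curvature of the round $\bS^2$, the leading $|k|$-behavior should dominate the additional zeroth-order term $-\tfrac{ik}{4}c_\F(\bOmega)$, yielding a strict positivity $\langle (D_k^2 - \tfrac{ik}{4}c_\F(\bOmega))\beta, \beta\rangle \geq \epsilon_k\|\beta\|^2$ with $\epsilon_k > 0$. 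The delicate point, and the technical heart of the argument, is verifying through the Kähler identities that the $|k|$-sized twist-curvature endomorphism has the right sign to reinforce rather than cancel the additional $c_\F(\bOmega)$ term; once this positivity is confirmed, the kernel is exhausted by the zero-mode contribution and the claimed graded isomorphism follows.
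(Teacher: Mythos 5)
Your overall architecture is sound and genuinely different from the paper's: you Fourier-decompose along the $\bS^1$-fibers and reduce to twisted operators on the base $\bS^2$, whereas the paper works entirely on the total space (first $\bS^3$, using the global $\mathfrak{su}(2)$-frame $X,Y,T$ to identify $\Delta^{\wedge\F^*}$ with $\Delta_{\bS^3}+T^2+\mathrm{const}\cdot c_{\F}(dV_{\F})T$ and then diagonalizing on homogeneous harmonic polynomials $V_m$), and then treats a general $P$ by pulling back along the covering $\bS^3\to P$. Your $k=0$ analysis is correct and recovers $H^*_{dR}(\bS^2)$ in the even grading, and the reduction of the $k\neq 0$ modes to the injectivity of $D_k^2\mp\tfrac{ik}{4}c_{\F}(\bOmega)$ is also correct.

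The genuine gap is the step you yourself flag as "the technical heart": the strict positivity of $D_k^2\mp\tfrac{ik}{4}c_{\F}(\bOmega)$ for $k\neq 0$ is asserted but not proved, and it cannot be obtained by a soft Weitzenb\"ock/scalar-curvature argument, because the competition is between two zeroth-order terms that are \emph{both} of size $|k|$ and the outcome depends on exact constants. Concretely, on the Hopf fibration the paper's computation exhibits the two sibling operators $\Delta^{\wedge\F^*}\mp\tfrac{1}{4}c_{\F}(\Omega_{\F})T$ with spectra $m(m+2)-k^2\pm\tfrac{5}{2}k$ and $m(m+2)-k^2\pm\tfrac{3}{2}k$ on $V_m$: the first one attains the \emph{negative} eigenvalue $-\tfrac{1}{2}$ already at $(m,k)=(1,1)$ (its kernel is trivial only because a quadratic Diophantine equation has no admissible integer solutions, not because of positivity), while the second is bounded below only by the razor-thin margin $\tfrac{|k|}{2}$, coming from $|k|(|k|+2)-k^2-\tfrac{3}{2}|k|$. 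Your choice of elimination (solving for $\alpha$ in terms of $\beta$) does lead to the "good" operator, so the strategy is salvageable, but establishing the bound requires the sharp bottom of the spectrum of the twisted Laplacian on $L^k\otimes\wedge T^*\bS^2$ (equivalently, exactly the harmonic analysis on $\bS^3$ that the paper performs), and a sign error or a non-sharp curvature estimate lands you on the $\tfrac{5}{2}$-operator, which is not positive. A second, smaller issue: for bundles $P$ of degree $d\neq\pm 1$ the normalization $d\theta=\pi^*\bOmega$ rescales $\Omega_{\F}$ and hence the coefficient of the $c_{\F}(\bOmega)$ term by $|d|$, so your per-mode estimate must be redone for each degree; the paper's covering argument $\bS^3\to P$ sidesteps this entirely by injectivity of $p^*$.
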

\begin{proof}
    We first consider the case of the Hopf fibration $\bS^3\circlearrowleft\bS^1\rightarrow \bS^2$ and proceed to compute the even part of the kernel of $\bD^{\pi^*\wedge T^*\bS^2}$. Suppose we have an even section $\eta+\tau\theta$ of $\pi^*\wedge T^*\bS^2\otimes\wedge[\theta]$ satisfying
    \[
        \bD^{\pi^*\wedge T^*\bS^2}\lp\eta+\tau\theta\rp=0.
    \]
    Then we obtain the system
    \[
        D^{\F,\pi^*\wedge T^*\bS^2}\eta-\frac{1}{4}c\lp d\theta\rp \tau=0,\quad D^{\pi^*\wedge T^*\bS^2}\tau+\pi^*\nabla^{g_{\Sigma}}_T\eta=0.
    \]
    Because $\bOmega_{\bS^2}=-2dV_g$ we see that $D^{\pi^*\wedge T^*\bS^2}$ \emph{anti-commutes} with $c\lp d\theta\rp$. Applying $D^{\pi^*\wedge T^*\bS^2}$ to the first equation and denoting $\Delta^{\pi^*\wedge T^*\bS^2}=\lp D^{\pi^*\wedge T^*\bS^2}\rp^2$ gives
    \[
        \Delta^{\pi^*\wedge T^*\bS^2}\eta+\frac{1}{4}c\lp d\theta\rp D^{\pi^*\wedge T^*\bS^2}\tau=0
    \]
and using the second equation we obtain
\[
    \Delta^{\pi^*\wedge T^*\bS^2}\eta-\frac{1}{4}c\lp d\theta\rp\pi^*\nabla^{g_{\Sigma}}_T\eta=0.
\]
Observe that solutions to the above equation determine solutions to the original system since we can solve for $\tau$ in the first equation by using $c\lp d\theta\rp^2=-4$ giving
\[
    \tau=-c\lp d\theta\rp D^{\pi^*\wedge T^*\bS^2}\eta.
\]
The choice of metric of radius $1/2$ on $\bS^2$ guarantees that the induced metric
\[
    g_{\bS^3}=g_{\F}\oplus g_{\T}
\]
is the round metric of radius $1$. Furthermore recall that for the Hopf fibration $\bS^3\circlearrowleft\bS^1\rightarrow\bS^2$ there exists a global orthonormal framing $X,Y\in\Gamma\lp \F\rp$ such that 
\[
    [X,Y]=2T,\quad [Y,T]=2X,\quad [T,X]=2Y.
\]
The above bracket relations imply, via the Koszul identity, that we also have that
\[
    \nabla^{\F,0}_XY=\nabla^{\F,0}_YX=\nabla^{\F,0}_XX=\nabla^{\F,0}_YY=0.
\]
Lastly, we have $\pi^*\wedge^2 T^*\bS^2\simeq \underline{\bC}_{\bS^3}$ where the isomorphism is given by the global non-vanishing section $\pi^*dV_g$ of $\pi^*\wedge^2T^*\bS^2$. Thus under the isomorphism $\wedge^{ev}\F^*\simeq\underline{\bC^2}_{\bS^3}$ the connection $\nabla^{\F,0}$ becomes the canonical flat connection on $\underline{\bC^2}_{\bS^3}$.
Thus we have the following global formula 
\[
    \Delta^{\pi^*\wedge T^*\bS^2}=-X^2-Y^2-c\lp d\theta\rp T,
\]
which when combined with $-\frac{1}{4}c_{\F}\lp\Omega_{\F}\rp T$ gives
\[
    \Delta^{\pi^*\wedge T^*\bS^2}-\frac{1}{4}c\lp d\theta\rp T=-X^2-Y^2-\frac{5}{4}c\lp d\theta\rp T.
\]  
If we use $\bOmega=-2dV_g$ and a hidden zero we obtain
\[
    \Delta^{\pi^*\wedge T^*\bS^2}+\frac{1}{4}c\lp d\theta\rp T= \Delta_{\bS^3}+T^2+\frac{5}{2}c\lp \pi^*dV_g\rp T.
\]
Here we $\Delta_{\bS^3}$ denotes the Laplacian on $\bS^3$ with the round metric of radius $1$. With the respect to the global trivialization $\pi^*\wedge^{ev}T^*\bS^2\simeq\underline{\bC^2}_{\bS^3}$ given above, the operator section $c\lp \pi^*dV_g\rp$ is the constant matrix
\[
    c\lp\pi^*dV_g\rp=\begin{bmatrix}
        0 & -1 \\
        1 & 0
    \end{bmatrix}
\]
which squares to $-1$ and the eigenspaces for this matrix will be denoted $\E_{\pm i}$. If we use the homogeneous polynomials as a basis for $L^2\lp\bS^3,\E_{\pm i}\rp$ and if $V^{\E_{\pm i}}_m$ denotes the space of homogeneous polynomials of degree $m\geq 0$ with coefficients in $\E_{\pm i}$, the operator $\Delta^{\pi^*\wedge T^*\bS^2}-\frac{1}{4}c\lp d\theta\rp T$ leaves each of these subspaces invariant and is diagonalizable on each such subspace. The spectrum is given as
\[
    \sigma\lp\Delta^{\pi^*\wedge T^*\bS^2}-\frac{1}{4}c\lp d\theta\rp T\big|_{V^{\E_{\pm i}}_m}\rp=\{ m(m+2)-k^2\pm \frac{5}{2}k\}_{k=-m}^m.
\]
Thus the only possible subspaces where $\Delta^{\pi^*\wedge T^*\bS^2}+\frac{1}{4}c\lp d\theta\rp T$ has non-trivial kernel are those subspaces for which there is an $m\geq0$ and $0\leq|k|\leq m$ such that
\[
    m(m+2)-k^2\pm \frac{5}{2}k=0.
\]
Since $k$ is already assumed to lie within $-m\leq k\leq m$ we can absorb the $\pm$ into $k$ and simply solve
\[
    m(m+2)-k^2+\frac{5}{2}k=0.
\]
The integer solutions to the above quadratic equation are
\[
    \lp m,k\rp=(-2,0),(-1,2),(0,0).
\]
Because $m\geq0$, we see that the kernel of $\Delta^{\pi^*\wedge T^*\bS^2}-\frac{1}{4}c\lp d\theta\rp T$ is the constant functions of $\bS^3$ with values in $\bC^2$. But these are precisely the functions on $\bS^3$ which are pullbacks of constant functions on $\bS^2$, hence
\[
    \Ker^+\bD^{\pi^*\wedge T^*\bS^2}\simeq H^{ev}_{dR}\lp\bS^2\rp.
\]
Suppose now that we have an odd element $\chi+\xi\theta$ satisfying
\[
    \bD^{\pi^*\wedge T^*\bS^2}\lp \chi+\xi\theta\rp=0.
\]
Once again we obtain a system
\[
    D^{\pi^*\wedge T^*\bS^2}\chi+\frac{1}{4}c\lp d\theta\rp\xi=0,\quad D^{\pi^*\wedge T^*\bS^2}\xi-\pi^*\nabla^{g_{\Sigma}}_T\chi=0,
\]
for which the same trick gives
\[
    \Delta^{\pi\wedge T^*\bS^2}\chi-\frac{1}{4}c\lp d\theta\rp\pi^*\nabla^{g_{\Sigma}}_T\chi=0,\quad \xi=c\lp d\theta\rp D^{\pi^*\wedge T^*\bS^2}\chi.
\]
Now observe that if we apply $D^{\pi^*\wedge T^*\bS^2}$ to the first equation and let $\alpha=D^{\pi^*\wedge T^*\bS^2}\chi\in\Gamma\lp\pi^*\wedge^{ev} T^*\bS^2\rp$, then
\[
    \Delta^{\pi^*\wedge T^*\bS^2}\alpha+\frac{1}{4}c\lp d\theta\rp \pi^*\nabla^{g_{\Sigma}}_T\alpha=0.
\]
Thus we must now compute the kernel of the operator
\[  
    \Delta^{\pi^*\wedge T^*\bS^2}+\frac{1}{4}c\lp d\theta\rp \pi^*\nabla^{g_{\Sigma}}_T=\Delta_{\bS^3}+T^2-\frac{3}{4}c\lp \pi^*dV_g\rp T.
\]
When we restrict to $V^{\E_{\pm i}}_m$ and compute the spectrum we find
\[
    \sigma\lp\Delta^{\pi^*\wedge T^*\bS^2}+\frac{1}{4}c\lp d\theta\rp T\big|_{V^{\E_{\pm i}}_m}\rp=\{ m(m+2)-k^2\pm \frac{3}{2}k\}_{k=-m}^m.
\]
In this case one obtains the lower bound
\[
    \lp\Delta^{\pi^*\wedge T^*\bS^2}+\frac{1}{4}c\lp d\theta\rp\rp\Big|_{V^{\E_{\pm i}}_m}\geq \frac{m}{2}.
\]
Thus we have that $\alpha$ must be the pullback of a constant function on $\bS^2$. Hence 
\[
    0=D^{\pi^*\wedge T^*\bS^2}\alpha=\Delta^{\pi^*\wedge T^*\bS^2}\chi.
\]
which implies $\pi^*\nabla^{g_{\Sigma}}_T\chi=0$, i.e. that $\chi$ is the pullback of a $1$-form on $\bS^2$ and that this $1$-form is harmonic on $\bS^2$ and thus is zero by the Hodge Theorem. All together, what we have found is that $\chi_{\F}=0$ and hence so is $\xi_{\F}$. Thus
\[
    \Ker^-\bD^{\wedge\F^*}=0.
\]
Now recall that if $\pi_P:P\circlearrowleft\bS^1\rightarrow \bS^1$ is a principal $\bS^1$-bundle then there is a covering map $w_m:\bS^1\rightarrow \bS^1$, given my $w_m\lp z\rp=z^m$ for $m\in\bZ$, and a covering map $\rho:\bS^3\rightarrow P$ which fit into a commutative diagram
\begin{displaymath} 
\begin{tikzcd}[column sep=8em]
\bS^3\times \bS^1 \arrow[r, "R"] \arrow[d, "\rho\times w_m"]
\& \bS^3 \arrow[r, "\pi_{\bS^3}"] \arrow[d, "\rho" ] \& \bS^2 \arrow[d, "Id"] \\
P\times \bS^1 \arrow[r, "R"]
\& P \arrow[r, "\pi_P"] \& \bS^2
\end{tikzcd}
\end{displaymath}
where $R$ denotes the right multiplication map. The diagram above implies that if $T_P$ is the generator of the $\bS^1$-action on $P$, then $\rho_*T=mT_P$. Thus if choose the contact form $\theta_P$ so that $p^*\theta_P=m\theta_{\bS^3}$, the extra factor of $m$ ensure that $\theta_P$ will serve as a connection one-form for $P$. This also directly implies that if $\pi^*_P\bOmega_P=d\theta_P$, then $\bOmega_P=m\bOmega$, hence $\bOmega_P$ is conformally compatible with $g$. The sub-Riemannian metric $g_{\F_P}$ is chosen so that $\rho^*g_{\F_P}=g_{\F_{\bS^3}}$. In this case, we obtain an injective linear map
\[
    \rho^*:\Gamma\lp\pi^*_P\wedge T^*\bS^2\otimes\wedge[\theta_P]\rp\rightarrow \Gamma\lp\pi^*\wedge T^*\bS^2\otimes\wedge[\theta]\rp
\]
which commutes with action of the asymptotic Bismut superconnection
\[
    \rho^*\bD^{\pi_P^*\wedge T^*\bS^2}=\bD^{\pi^*\wedge T^*\bS^2}\rho^*.
\]
By Proposition \ref{S1C} we have that $\pi_P^*$ injectively places $H^*_{dR}\lp\bS^2\rp$ within $\Ker\bD^{\pi^*_P\wedge T^*\bS^2}$, and $\rho^*$ injectively places $\Ker\bD^{\pi^*_P\wedge T^*\bS^2}$ within $\Ker\bD^{\pi^*\wedge T^*\bS^2}$. Because we have already checked that $\Ker\bD^{\pi^*\wedge T^*\bS^2}\simeq H^*_{dR}\lp\bS^2\rp$, we have that 
\[
    \Ker\bD^{\pi^*_P\wedge T^*\bS^2}\simeq H^*_{dR}\lp\bS^2\rp.
\]
Now suppose that $\theta'_P$ is an alternative connection one-form for $P$ whose curvature is conformally compatible with $g$. This immediately implies that $\theta_P$ and $\theta_P'$ have the same curvature and because $\bS^2$ is simply connected, that $\theta_P$ and $\theta_P'$ must be related by a gauge transformation 
\[
    u:P\circlearrowleft\bS^1\xrightarrow{\sim}P\circlearrowleft\bS^1
\]
satisfying $u^*\theta_P=\theta'_P$ and $\pi_Pu=u\pi_P$. The gauge transformation gives rise to a linear isomorphism
\[
    u^*:\Gamma\lp \pi^*_P\wedge T^*\bS^2\otimes\wedge[\theta_P]\rp\xrightarrow{\sim}\Gamma\lp\pi^*_P\wedge T^*\bS^2\otimes\wedge[\theta_P']\rp
\]
which intertwines the asymptotic Bismut superconnections. The theorem follows.
\end{proof}
\section{Directions for Further Study}
There are a few questions that this constructions demands be answered. The first has already been hinted at above which is the general hypoellipticity on arbitrary filtered manifolds. The Kirillov orbit method for computing representations quickly becomes difficult to practically implement once one leaves the realm of two-step filtrations thus leaving the general hypoellipticity an open problem. The second is the question of whether $\bD^E$ has good index theoretic properties. Examples coming from principal $\bS^1$-bundles over $\bS^2$ demonstrate that it has rather odd behavior but also suggests a redefinition of index. If one allows Clifford relations with matrix coefficients, we provide examples for which the asymptotic Bismut superconnection will have non-vanishing index. In our last subsection, we provide a direct link between the asymptotic Bismut superconnection and 1-dimensional non-linear sigma models.
\subsection{Relations to Index Theory}
Given that the asymptotic Bismut superconnection is hypoelliptic on contact manifolds, and in fact more generally on arbitrary two-step filtered manifolds, it is natural to inquire on the index theoretic properties of these operators. We confine ourselves to the setting of contact subRiemannian manifolds in this section to simplify calculations and demonstrate that simple modifications of our construction of $\bD^E$ can give first order hypoelliptic differential operators with non-vanishing Fredholm index on odd-dimensional manifolds, which to our knowledge is the first example of such an operator.
\begin{theorem}
\label{NI}
    Let $\lp M,\theta,g_{\F}\rp$ be a contact subRiemannian manifold which is closed. Then for any $\bZ_2$-graded covariant Hermitian Clifford module
    \[  
    \lp C\ell\lp\F^*,g^*_{\F}\rp,\nabla^{\F,0}\rp\circlearrowright\lp E,h_E,\nabla^E\rp.
    \]
    the restriction of the asymptotic Bismut superconnection to the even sections 
    \[
        \bD^{E,+}:\Gamma^0\lp E\otimes\wedge[\theta]\rp\rightarrow \Gamma^1\lp E\otimes\wedge[\theta]\rp
    \]
    is Fredholm and its Fredholm index is zero.
\end{theorem}
\begin{proof}
    We equip $M$ with the Riemannian metric $g_M=g_{\F}\oplus g_{\T}$ and $E\otimes\wedge[\theta]$ with the Hermitian fiber metric $h_E\otimes g_{\wedge[\theta]}$. If we use the global formula for the asymptotic Bismut superconnection on contact manifolds, we see that the $L^2$-formal adjoint of $\bD^E$ with respect to the metric structures given is 
    \[
        \lp\bD^E\rp^*=D^{\F,E\otimes\wedge[\theta]}-\iota^{\T}_T\lp\nabla^{E\otimes\wedge[\theta]}_T+\text{div}T\rp-\frac{1}{4}c\lp d\theta\rp\varepsilon^{\T}_{\theta_{\T}}+\frac{1}{4}\iota^{\T}\circ\text{tr}Lg_{\F}.
    \]
    If we reverse the grading on $E\otimes\wedge[\theta]$ by defining $E\otimes\wedge^0[\theta]$ to have degree $1$ and $E\otimes\wedge^1[\theta]$ to have degree $0$, then $\lp\bD^E\rp^*$ is a differential operator of graded Heisenberg order $1$. At $x\in M$ it's graded cosymbol is given by
    \[
        \widecheck{\sigma}^G_1\lp\lp\bD^E\rp^*\rp=\lp\bD^{E^L_x}\rp^*.
    \]
    At a given $x\in M$, we can use the splittings
    \[
        \Gamma^0\lp E^L_x\otimes\wedge[\theta]\rp=\Gamma\lp \lp E^0_x\rp^L\rp\oplus\Gamma\lp \lp E^1_x\rp^L\rp\cdot\theta_x,\quad\Gamma^1\lp E^L_x\otimes\wedge[\theta]\rp=\Gamma\lp\lp E^1_x\rp^L\rp\oplus\Gamma\lp \lp E^0_x\rp^L\rp\cdot\theta_x
    \]
    to express the cosymbol of the asymptotic Bismut superconnection in matrix form
    \[
        \bD^{E^L_x,+}=\begin{bmatrix}
            D^{\t_{\F}M^L_x,E_x^+} & \frac{1}{4}c\lp d\theta_x\rp\epsilon_x \\
            T_x\epsilon_x & D^{\t_{\F}M^L_x,E_x^-}
        \end{bmatrix}
    \]
    where $\epsilon_x$ is the grading operator of $E_x$ and the cosymbol of the adjoint
    \[
        \lp \bD^{E^L_x}\rp^*=\begin{bmatrix}
            D^{\t_{\F}M^L_x,E_x^-} & -T_x\epsilon_x \\
            -\frac{1}{4}c\lp d\theta_x\rp\epsilon_x & D^{\t_{\F}M^L_x,E^+_x}
        \end{bmatrix}
    \]
    We see then that $\lp\bD^E\rp^*$ will satisfy the graded Rockland condition and thus is also hypoelliptic. By Corollary 5.5 in \cite{DH}, the operator $\bD^{E,+}$ is \emph{Fredholm}. Note that the one-parameter family of cosymbols
    \[
        \bD_{x,t}:=\begin{bmatrix}
            D^{\t_{\F}M^L_x,E^-_x} & -e^{-it}T_x\epsilon_x \\
            -e^{it}\frac{1}{4}c\lp d\theta_x\rp\epsilon_x & D^{\t_{\F}M_x^L,E^+_x}
        \end{bmatrix}
    \]
    gives a one-parameter family of cosymbols which are Rockland for all $t\in\bR$ and whose value at $t=0$ is the cosymbol $\lp\bD^{E^L_x}\rp^*$ and whose value at $t=\pi$ is
    \[
        \bD_{x,t=\pi}=\begin{bmatrix}
            D^{\t_{\F}M^L_x,E^-_x} & \epsilon_xT_x \\
            \frac{1}{4}c\lp d\theta_x\rp\epsilon_x & D^{\t_{\F}M_x^L,E^+_x}
        \end{bmatrix}
    \]
    for which a simple swap of $E^0$ and $E^1$ recovers the symbol of $\bD^{E,+}$. Thus we have that the cosymbol of $\lp\bD^{E,+}\rp^*$, up to a swap of $E^0$ and $E^1$, is homotopic to the symbol of $\bD^{E,+}$. By Corollary 5.5 in \cite{DH} we have that 
    \[
        \text{Ind}\lp\bD^{E,+}\rp^*=\text{Ind} \bD^{E,+}
    \]
    but this directly implies that $\text{Ind} \bD^{E,+}=0$.
\end{proof}
The above theorem should be contrasted with the fact that the $\bZ_2$-invariant
\begin{align*}
    \chi\lp\bD^E\rp&:=\text{dim}\Ker^+\bD^E-\text{dim}\Ker^-\bD^E
\end{align*}
is \emph{non-zero} in general, as Theorem \ref{S1S2} demonstrates in the case of principal $\bS^1$-bundles over $\bS^2$. This deviation between $\chi\lp\bD^E\rp$ and the Fredholm index can be attributed to the non-self-adjointness of $\bD^E$, but also suggests that $\chi\lp\bD^E\rp$, along with it's dependence on subRiemannian structure and filtration $\lp\F,g_{\F}\rp$, is the relevant index one should study.\par
If one is interested in cooking up examples of first order differential operators with non-zero index, the van-Erp trick \cite{Van3} can be employed within this context to produce such examples. Suppose that we have an even section $\gamma\in\Gamma\lp E\rp$ and consider the operator
\[  
    \bD^E_{\gamma}:=D^{\F,E\otimes\wedge[\theta]}+\varepsilon^{\T}_{\theta_{\T}}\nabla^{E\otimes\wedge[\theta]}_T+\gamma c\lp d\theta\rp\iota^{\T}_T+\frac{1}{4}\varepsilon^{\T}\circ \tr Lg_{\F}.
\]
Note that the operator above can be built using the asymptotic Bismut superconnection construction and allowing Clifford relations with matrix coefficients, i.e. for $u>0$ one uses the $\gamma$-twisted Clifford relation
\[
    c^{\T}_{\gamma,u}\lp\alpha\theta\rp^2=-4u\alpha^2\gamma,\quad \alpha\in\bR.
\]
The above relation can be realized by assigning
\[
    c^{\T}_{\gamma,u}\lp\theta\rp_{\gamma}:=\varepsilon^{\T}_{\theta^{\T}}-4u\gamma\iota^{\T}_T\circlearrowright E\otimes\wedge\T^*
\]
and the same computation done in Proposition \ref{GFD} implies that if $D_{\gamma,u}$ denotes the corresponding Dirac operator using the $\gamma$-twisted Clifford relation, then
\[
    \bD^E_{\gamma}=\lim^{F.P.}_{u\rightarrow 0^+}D_{\gamma,u}.
\]
\begin{theorem}
\label{NIT}
    Let $\lp M,\theta,g_{\F}\rp$ be a contact subRiemannian manifold and consider a $\bZ_2$-graded covariant Hermitian Clifford module
    \[  
    \lp C\ell\lp\F^*,g^*_{\F}\rp,\nabla^{\F,0}\rp\circlearrowright_c\lp E,h_E,\nabla^E\rp.
    \]
    Let $\gamma\in\Gamma\lp\End E\rp$ be an even element such that $[\gamma,c\lp d\theta\rp]=0$ and for each $x\in M$ the spectrum of the operator section $\gamma c\lp d\theta\rp$ satisfies
    \[
        \textup{Spec}\lp\pm i\lp\gamma_x-1\rp c\lp d\theta_x\rp\rp \bigcap\textup{Spec}\lp\pi_{1}\lp\sigma_2\lp\nabla^E\big|_{\F}\rp^*\nabla^E\big|_{\F}\rp_x\rp=\emptyset
    \]
    where $\pi_{ 1}$ is the $\lambda= 1$ Schrodinger representation of $\t_{\F}M_x$. Then the operator
    \[
        \bD^E_{\gamma}\circlearrowright\Gamma\lp E\otimes\wedge[\theta]\rp
    \]
    is graded Rockland operator of order $1$ and is Fredholm if $M$ is closed. Thus $\bD^{E,+}_{\gamma}$ and $\bD^{E,-}_{\gamma}$ are Fredholm and one has
    \[
        \textup{Ind}\bD^E_{\gamma}=\textup{Ind}\bD^{E,+}_{\gamma}+\textup{Ind}\bD^{E,-}_{\gamma}.
    \]
\end{theorem}
\begin{proof}
    At $x\in M$, the graded Rockland cosymbol is given by
    \[
        \widecheck{\sigma}^G_1\lp\bD^E_{\gamma}\rp=D^{\t_{\F}M_X^L,E^L_x\otimes\wedge[\theta]}+\varepsilon^{\T_x}_{\theta_x}T_x+\gamma_xc\lp d\theta_x\rp\iota^{\T_x}_{T_x}
    \]
    If we square the above cosymbol and use Proposition \ref{WF} we obtain
    \begin{align*}
        \widecheck{\sigma}^G_1\lp\bD^E_{\gamma}\rp^2&=\sigma_2\lp\lp\nabla^E\big|_{\F}\rp^*\nabla^E\big|_{\F}\rp_x+\lp\gamma_x-1\rp c\lp d\theta_x\rp T_x+[D^{\t_{\F}M^L_x,E^L_x\otimes\wedge[\theta]},\gamma_x c\lp d\theta_x\rp\iota^{\T_x}_{T_x}].
    \end{align*}
    The assumed conditions on $\gamma$ imply that applying any irreducible representation of $\t_{\F}M_x$ to $\widecheck{\sigma}^G_1\lp\bD^E_{\gamma}\rp^2$ will result in an injective operator. Thus $\bD^{E}_{\gamma}$ is a graded Rockland operator of order $1$. Similarly, if we swap the grading of $E\otimes\wedge[\theta]$ by defining $E\otimes\wedge^0[\theta]$ to be order $1$ and $E\otimes\wedge^1[\theta]$ to be order $0$, then the graded cosymbol of the $L^2$-formal adjoint $\lp\bD^E_{\gamma}\rp^*$ is given at $x\in M$ by
    \[
        \widecheck{\sigma}^G_1\lp\lp\bD^E_{\gamma}\rp^*\rp_x=D^{\t_{\F}M^L_x,E^L_x\otimes\wedge[\theta]_x}-\iota^{\T}_{T_x}T_x-\gamma^*_xc\lp d\theta_x\rp\varepsilon^{\T_x}_{\theta^{\T}_x}
    \]
    where $\gamma^*_x$ is the adjoint of $\gamma_x$. If we square the cosymbol above then the same calculation for the cosymbol of $\bD^E_{\gamma}$ gives
    \begin{align*}
        \widecheck{\sigma}^G_1\lp\lp\bD^E_{\gamma}\rp^*\rp_x^2&=\sigma_2\lp \lp\nabla^E\big|_{\F}\rp^*\nabla^E\big|_{\F}\rp +\lp\gamma^*_x-1\rp c\lp d\theta_x\rp T_x-[D^{\t_{\F}M^L_x,E^L_x\otimes\wedge[\theta]_x},\gamma^*_xc\lp d\theta_x\rp\varepsilon^{\T_x}_{\theta^{\T}_x}].
    \end{align*}
    Note that $\text{Spec}\lp\pi_1\lp \sigma_2\lp\nabla^E\big|_{\F}\rp^*\nabla^E\big|_{\F}\rp\rp$ is contained within the real axis $\bR\subseteq\bC$. Thus if the pointwise spectrum of $\pm i\lp\gamma_x-1\rp c\lp d\theta_x\rp$ is disjoint from this subset then the spectrum of the adjoint, which is simply the conjugation of the eigenvalues, will also be disjoint from this subset. But the pointwise adjoint is given by
    \[
        \lp\pm i\lp\gamma_x-1\rp c\lp d\theta_x\rp\rp^*=\pm i\lp\gamma^*_x-1\rp c\lp d\theta_x\rp.
    \]
    Thus one has
    \[
        \text{Spec}\lp\pm i\lp\gamma^*_x-1\rp c\lp d\theta_x\rp\rp\bigcap \text{Spec}\lp\pi_1\lp\sigma_2\lp\nabla^E\big|_{\F}\rp^*\nabla^E\big|_{\F}\rp\rp=\emptyset
    \]
    which implies that $\lp\bD^{E}_{\gamma}\rp^*$ is a graded Rockland operator of order $1$. By Corollary 5.5 in \cite{DH}, $\bD^{E}_{\gamma}$ is Fredholm. If we split $E\otimes\wedge[\theta]$ into the respective even and odd components then $\bD^E_{\gamma}$ takes the block diagonal form
    \[
        \bD^{E}_{\gamma}=\begin{bmatrix}
            0 & \bD^{E,-}_{\gamma} \\
            \bD^{E,+}_{\gamma} & 0
        \end{bmatrix}
    \]
    which implies that $\bD^{E,+}_{\gamma}$ and $\bD^{\E,-}_{\gamma}$ are Fredholm and their respective Fredholm indices sum to the Fredholm index of $\bD^E_{\gamma}$.
\end{proof}
To give a simple example with non-zero index, suppose that we take $M$ to be three dimensional and $E=\wedge\F^*$. Choose an almost complex structure $J_{\F}$ of $\F$ that is compatible with $d\theta$ and take $g_{\F}\lp\cdot,\cdot\rp=d\theta\lp J_{\F}\cdot,\cdot\rp$, which is always possible due to the contractibility of the space of compatible almost complex structures for $d\theta$. In this case, for all $x\in M$ we have
\[
    \text{Spec}\lp\pi_1\lp\sigma_2\lp\nabla^E\big|_{\F}\rp^*\rp\nabla^E\big|_{\F}\rp=2\bZ_{\geq 0}+1
\]
Note then that $c\lp d\theta\rp$ leaves $\wedge^{ev}\F^*$ and $\wedge^{odd}\F^*$ invariant and when restricted to $\wedge^{ev}\F^*$, under the trivialization $\wedge^{ev}\F^*\simeq \underline{\bC^2}_M$, is the constant matrix
\[
    c\lp d\theta\rp\big|_{\wedge^{ev}\F^*}=\begin{bmatrix}
        0 & -1 \\
        1 & 0
    \end{bmatrix}.
\]
We let $\E_{\pm i}$ be the eigenbundles corresponding to the eigenvalues $\pm i$ of the matrix above. In this case we can take $\gamma:M\rightarrow \bC\setminus 2\bZ$ and define 
\[ \widetilde{\gamma}:=\begin{cases} 
      \gamma & \text{on } \Gamma\lp \E_i\rp \\
      1 & \text{on } \Gamma\lp\E_{-i}\rp \\
      Id & \text{on } \Gamma\lp \F^*\rp 
   \end{cases}.
\]
By Theorem \ref{NIT} the operator $\bD^{\wedge\F^*}_{\widetilde{\gamma}}$ is hypoelliptic and we wish to compute its index. First note that 
\[
    2\text{Ind}\bD^{\wedge\F^*}_{\widetilde{\gamma}}=\text{Ind}\lp\bD^{\wedge\F^*}_{\gamma}\rp^2.
\]
If we use the splitting $\wedge\F^*\otimes\wedge[\theta]\simeq \E_i\otimes\wedge[\theta]\oplus\E_{-i}\otimes\wedge[\theta]\oplus\F^*\otimes\wedge[\theta]$, then the cosymbol 
$\widecheck{\sigma}^G_2\lp\bD^{\wedge\F^*}_{\widetilde{\gamma}}\rp^2$ is given by
\[
    \begin{bmatrix}
        \lp\lp\nabla^{\F,0}\big|_{\F}\rp^*\nabla^{\F,0}\big|_{\F}+i\lp\gamma-1\rp \nabla^{\F,0}_T\rp\otimes I_{\wedge[\theta]} & UT_{12} & UT_{13} \\
        0 & \lp\nabla^{\F,0}\big|_{\F}\rp^*\nabla^{\F,0}\big|_{\F}\otimes I_{\wedge[\theta]} & UT_{23} \\
        0 & 0 & \lp\nabla^{\F,0}\big|_{\F}\rp^*\nabla^{\F,0}\big|_{\F}\otimes I_{\wedge[\theta]}
    \end{bmatrix}
\]
We can apply a homotopy to the symbol of $\lp\bD^{\wedge\F^*}_{\gamma}\rp^2$ to remove the upper triangular components by simply scaling these components by $t$ for $0\leq t\leq 1$, and note that the homotopy preserves the graded Rockland condition for all $0\leq t\leq 1$. Moreover, the last two diagonal blocks have index zero implying that
\begin{align*}
    2\text{Ind}\bD^{\wedge\F^*}_{\widetilde{\gamma}}&=\text{Ind}\lp \lp\lp \nabla^{\F,0}\big|_{\F}\rp^*\nabla^{\F,0}\big|_{\F}+i\lp\gamma-1 \rp\nabla^{\F,0}_T\rp\otimes I_{\wedge[\theta]}\rp\\
    &=2\text{Ind}\lp\lp \nabla^{\F,0}\big|_{\F}\rp^*\nabla^{\F,0}\big|_{\F}+i\lp\gamma-1 \rp\nabla^{\F,0}_T\rp
\end{align*}
By \cite{Van3}, the index formula for the above operator can be computed. Let $L\subset M$ be a link in $M$ that is half the Poincaré dual of the euler class $e\lp H\rp$ of $H$ in integral homology
\[
    P.D.\lp e\lp H\rp\rp=2[L].
\]
If $L=\sqcup_j L_j$, where each $L_j\subseteq M$ is a connected component of $L$, then we can restrict $\gamma$ to $L_j$ and obtain a map from $\bS^1$ to $\bC$. For $k\in \bZ$ we let $W\lp\gamma\big|_{L_j},k\rp$ denote the winding number of this loop around $k$. The index formula then reads
\[
    \text{Ind}\lp\lp\nabla^{\F,0}\big|_{\F}\rp^*\nabla^{\F,0}\big|_{\F}+i\lp\gamma-1\rp\nabla^{\F,0}_T\rp=\sum_{k\in 2\bZ+1}k\sum_jW\lp\lp\gamma-1\rp\big|_{L_j},k\rp
\]
which implies the index formula
\[  
    \text{Ind}\bD^{\wedge\F^*}_{\widetilde{\gamma}}=\sum_{k\in 2\bZ+1}k\sum_jW\lp\lp\gamma-1\rp\big|_{L_j},k\rp.
\]
Furthermore observe that $\bD^{E,+}_{\widetilde{\gamma}}$ only depends on the even component $\widetilde{\gamma}^+$ whereas $\bD^{E,-}$ only depends on the odd component $\widetilde{\gamma}^-$. A similar argument to Theorem \ref{NI} will show that for this partciular choice of operator section $\widetilde{\gamma}$, the cosymobl of $\bD^{E,-}_{\widetilde{\gamma}}$ will be homotopic to the cosymbol of its $L^2$-formal adjoint thus giving zero Fredholm index. Hence we also have the index computation
\[
    \text{Ind}\bD^{E,+}_{\widetilde{\gamma}}=\sum_{k\in 2\bZ+1}k\sum_jW\lp\lp\gamma-1\rp\big|_{L_j},k\rp.
\]
To produce examples with non-vanishing index, one can proceed as follows: Note that one has $[M,\bS^1]\simeq H^1\lp M,\bZ\rp$, thus if $e\lp \F\rp\in H^2\lp M,\bZ\rp$ has non-trivial image in $H^2\lp M,\bQ\rp$ then there must exist a curve $\gamma:M\rightarrow \bS^1$ for which 
\[
    e\lp \F\rp\cup[\gamma]\neq 0.
\]
By Poincaré duality, this would imply that 
\[
    \la [\gamma],2P.D\lp e\lp \F\rp\rp\ra\neq 0
\]
which using the relationship $P.D.\lp e\lp \F\rp\rp=2[L]$ would give
\[
    \la[\gamma],[L]\ra\neq 0.
\]
If we center the $\bS^1$ about a fixed even integer $k_0$ in $\bC$ then we would have
\[
    \sum_jW\lp \lp\gamma-1\rp\big|_{L_j},k_0-1\rp=\la[\gamma],[L]\ra\neq 0.
\]
If we further make the radius of $\bS^1$ sufficiently small then $\gamma$ will avoid all other even integers and have zero winding number about them
\[
    \sum_jW\lp\gamma\big|_{L_j},k\rp=0,\quad k\neq k_0.
\]
Using our index formula for $\bD^{\wedge\F^*}_{\widetilde{\gamma}}$ we obtain
\[
    \text{Ind}\bD^{\wedge\F^*}_{\widetilde{\gamma}}=\lp k_0-1\rp\la[\gamma],[L]\ra\neq 0.
\]
Of course everything hinges on then producing an example of a contact manifold $\lp M,\theta\rp$ for which $e\lp \F\rp\neq 0$ in $H^2\lp M,\bQ\rp$. These can be constructed using particular choices of surface diffeomorphisms $f\in \text{Diff}\lp\Sigma\rp$ for $\Sigma$ a closed oriented surface, and considering Dehn surgery on the corresponding mapping torus $M_f$. The interested reader can consult \cite{SY} and the references therein for producing such examples.
\subsection{The Dream: A Path Integral Approach}
In this last section, we will formally describe the ideas which led to the consideration of our operator. If one considers the loop space of a Riemannian manifold $\lp M,g\rp$, denoted $\L M:=C^{\infty}\lp\bR/\bZ,M\rp$, then $\L M$ inherits a natural action of $\bR/\bZ$ given by the rotation of a loop. In this case, the fixed point set of this action is seen to be the set of constant loops which are naturally identified as $M$. Witten has shown in \cite{Wit} that the integral of the reciprocal equivariant Euler characteristic of the normal bundle $\mathscr{N}_{\L M}M\rightarrow M$ can be regularized in an appropriate sense and computes the $\hat{\A}$-genus of $M$
\[
    \int_M\chi^{-1}_{\bR/\bZ}\lp \mathscr{N}_{\L M}M\rp=\int_M\hat{\A}\lp M\rp.
\]
Witten then applies a supersymmetry argument to formally identify the reciprocal of the regularized equivariant Euler characteristic to the Euclidean path integral of the $\mathcal{N}=1$ non-linear sigma model with periodic boundary conditions on the source $\bR^{1,1}$ and a Riemannian manifold target $\lp M,g\rp$
\[
    \int_M\chi^{-1}_{\bR/\bZ}\lp \mathscr{N}_{\L M}M\rp=\int_{\L M}e^{-S},
\]
where the Lagrangian action $S$ is locally given as
\[
    S=\int_{\bR/\bZ}dt\hspace{0.1cm} \frac{1}{2}g_{\mu\nu}\dot{x}^{\mu}\dot{x}^{\nu}-\frac{1}{2}g_{\mu\nu}\lp\frac{D}{dt}\psi^{\mu}\rp\psi^{\nu} .
\]
Here $\psi^{\mu}$ is a Grassmann odd vector field defined on the target space which is pulled back to the loop $\mathbb{R} / \mathbb{Z}$;  this additional field needed to ensure supersymmetry.  The supersymmetry transformation $Q$ that leaves the action $S$ invariant
\[
    QS=0
\]
is the \textbf{\emph{Cartan equivariant differential}} on $\L M$ given as
\[
    Q=d^{\Cc}_{\bR/\bZ}:=d_{\L M}-\iota_{\bX}\circlearrowright\Omega\lp\L M,\bR\rp,\quad \bX_{\sigma}=\dot{\sigma}.
\]
From this perspective, the spinor Dirac operator $\slashed{D}$ emerges naturally as the quantized Noether charge of the $\mathcal{N}=1$ supersymmetric sigma model action $S$, i.e. $\sqrt{2}\hat{Q}=\slashed{D}$. By using the correspondence between operator methods and the path integral formalism, Witten then relates the path integral to the supertrace of the heat flow of the corresponding quantum Hamiltonian $\hat{H}=\frac{1}{2}\slashed{D}^2$
\[
    \int_{\L M}e^{-S}=\operatorname{STr}e^{-\hat{H}}.
\]
The supersymmetry algebra $2\hat{H}=\{ \hat{Q},\hat{Q^{\dagger}}\}$ implies that the supertrace of the heat flow will compute the index of the Dirac operator thus giving
\[
    \int_{\L M}e^{-S}=\text{Ind }\slashed{D}^+.
\]
From here, the index theorem for the spinor Dirac operator follows. For quite some time, various authors have previously tried to extract a rigorous argument from the above construction along with a proper definition of supersymmetric path integrals. We recommend that the interested reader consult \cite{HL} along with the references therein for the current state of the art.\par
If one is now equipped with a filtered manifold $\lp M,\F\rp$, then one can still consider the $\bR/\bZ$-space of smooth loops which are tangent to $\F$
\[
    \L^{\F} M:=\left\{\text{$\sigma\in\L M$ }\big|\text{ $\dot{\sigma}\in\F$}\right\}
\]
and note that the fixed point set of the $\bR/\bZ$ action is still the set of constant loops. Thus one is confronted with the question of whether one can make sense of the quantity
\[
    \int_M\chi^{-1}_{\bR/\bZ}\lp \mathscr{N}_{\L^{\F}M}M\rp
\]
and equate it to the index of a corresponding operator $\bD$
\[
    \int_M\chi^{-1}_{\bR/\bZ}\lp \mathscr{N}_{\L^{\F}M}M\rp=\text{Ind $\bD$}.
\]
One could then interpret the regularized reciprocal equivariant Euler characteristic as the supersymmetric localization of an action functional $S_{\F}$ corresponding to a constrained supersymmetric non-linear sigma model with target the sub-Riemannian manifold $\lp M,\F,g_{\F}\rp$, the constraint corresponding to the condition that the bosonic velocity of a map $\sigma:\bR^{1,1}\rightarrow M$ be tangent to $\F$. The analytic difficulties here would be in giving a rigorous meaning of the constrained supersymmetric path integral
\[
    \int_{\L^{\F}M}e^{-S_{\F}}.
\]
However, one can approach this problem directly from the operator side and instead try to cook up the quantized Noether current corresponding to the constrained action functional $S_{\F}$. One can gain a bit of a clue for how this could be practically implemented by studying the work of Bismut on the local families index theorem \cite{BisF}, for there one considers a distribution $\F$ which is the vertical distribution of a locally trivial fibration $\pi:M\rightarrow B$ equipped with an Ehresmann connection $\nabla$ and vertical fiber metric $g_{\F}$
\begin{displaymath}
  \begin{tikzcd}[column sep=2em]
   F \arrow{r}{ } \&  \lp M,g_{\F},\nabla\rp  \arrow{d}{\pi}\\
  \& B
  \end{tikzcd}
\end{displaymath}
One can again consider the smooth manifold of loops which are tangent to the distribution $\L^{\F}M$ and note that it sits within the loop space $\L M$ as a smooth $\bR/\bZ$-submanifold. The loop space of $M$ maps to the loop space of $B$ and one obtains an equivariant diagram
\begin{displaymath}
  \begin{tikzcd}[column sep=2em]
   \bR/\bZ\circlearrowright\L F \arrow{r} \&  \bR/\bZ\circlearrowright\L^{\F}M \arrow[hookrightarrow]{r}  \arrow{d}{\L^{\F}\lp\pi\rp}\& \bR/\bZ\circlearrowright\L M \arrow{d}{\L\lp\pi\rp} \& \\
   \& B \arrow[hookrightarrow]{r} \& \bR/\bZ\circlearrowright\L B.
  \end{tikzcd}
\end{displaymath}

The action $S$ for the $\mathcal{N}=1$ sigma model with target $\lp M,g_{\F}\oplus\pi^* g_B\rp$ gives an extension of the constrained action $S_{\F}$, i.e. $S\big|_{\L_{\F}M}=S_{\F}$ and in fact one can take this as the \emph{definition} of $S_{\F}$ in the case of the families index theorem. This implies an \emph{expected} relationship between fibered path-integrals holds
\[
    \lp\int^!_{\L M\rightarrow\L B}e^{-S}\rp\Big|_B=\int^!_{\L^{\F}M\rightarrow B}e^{-S_{\F}}.
\]
Rather than attempt to give a rigorous meaning to the fibered path integral of action functionals on $\L M$ over $\L B$, Bismut gives indication in \cite{BisF} that the Chern character of the Bismut superconnection $\bA$ is the analogue of the path-integration along the fiber on the operator theory side
\[
    \int^!_{\L M\rightarrow \L B}e^{-S}\Big|_B=\text{STr }e^{-\bA^2}.
\]
Given the success of the Bismut superconnection in the local families index theorem, one is lead to explore its behavior in the non-integrable setting, in hopes that it will give a rigorous meaning to the quantized Noether current $\hat{Q}$ for a constrained supersymmetric action $S_{\F}$ and provide a proper definition of the reciprocal equivariant Euler characteristic of $\mathscr{N}_{\L^{\F}M}M$. The \emph{hypoellipticity} that arises within the asymptotic superconnection, which is \emph{not} present in the classic story given above, adds a new layer of interest in pursuing these ideas further.

\end{document}